\author[1]{Ivor Hoog v.d.}
\author[2]{Elena Khramtcova}
\author[1]{Maarten L\"{o}ffler.}
\affil[1]{Dept.\ of Inform.\ and Computing Sciences, Utrecht University, the Netherlands\\
  \texttt{[i.d.vanderhoog|m.loffler]@uu.nl}}
\affil[2]{Computer Science Department, Universit\'e libre de Bruxelles (ULB), Belgium\\
  \texttt{elena.khramtsova@gmail.com}}
\authorrunning{I. Hoog v.d., E. Khramtcova, M.  L\"{o}ffler} 
\subjclass{F.2.2 Nonnumerical Algorithms and Problems}
\keywords{Compression Quadtree Smooth Real RAM}
\newtheorem{observation}{Observation}
\newcommand{\R}{\ensuremath{\mathbb R}}
\newcommand{\etal}{\textit{et al.}\xspace}
\newcommand{\mycomment}[4]{\textsc{#1 #3:} \textcolor{#2}{\textsl{#4}}}
\newcommand{\maarten}[2][says]{\mycomment{Maarten}{olive}{#1}{#2}}
\renewcommand{\mycomment}[4]{}
\newcommand{\thmheadfont}{\textcolor{darkgray}{$\blacktriangleright$}\nobreakspace\sffamily\bfseries}
\newenvironment{repeatenv}[2]%
  {\vspace{1em}\noindent{\thmheadfont #1~\ref{#2}.}\ \slshape}
  {\normalfont\vspace{1em}}
\begin{document}

\title{Dynamic smooth compressed quadtrees Fullversion.}
\maketitle

\begin {abstract}
  We introduce dynamic smooth (a.k.a. balanced) compressed quadtrees with worst-case constant time updates in constant dimensions.
  We distinguish two versions of the problem.
  First, we show that quadtrees as a space-division data structure can be made smooth and dynamic subject to {\em split} and {\em merge} operations on the quadtree cells.
  Second, we show that quadtrees used to store a set of points in $\R^d$ can be made smooth and dynamic subject to {\em insertions} and {\em deletions} of points.
  The second version uses the first but must additionally deal with {\em compression} and {\em alignment} of quadtree components.
  In both cases our updates take $2^{\mathcal{O}(d\log d )}$ time, except for the point location part in the second version which has a lower bound of $\Theta (\log n)$---but if a pointer ({\em finger}) to the correct quadtree cell is given, the rest of the updates take worst-case constant time.
  Our result implies that several classic and recent results (ranging from ray tracing to planar point location) in computational geometry which use quadtrees can deal with arbitrary point sets on a real RAM pointer machine.
\end {abstract}

\section {Introduction}

The quadtree is a hierarchical spacial subdivision data structure based on the following scheme: starting with a single square, iteratively pick a square and subdivide it into four equal-size smaller squares, until a desired criterion is reached.
Quadtrees and their higher-dimensional equivalents have been long studied in computational geometry~\cite {Vaidya88, CallahanKo95, KrznaricLe98, BernEpTe99, EppsteinGoSu08, BuchinMu11, LM12, BY17}
and are popular among practitioners because of their ease of implementation and good performance in many practical applications~\cite {FinkelBe74, macdonald1990heuristics, BernEpGi94, aronov2006cost, de2012kinetic, bennett2016planar}.
We will not review the extremely rich literature here and instead refer to the excellent books by Samet~\cite {Samet90} and Har-Peled~\cite {H11}.
\maarten {Check if Samet's earlier 1984 survey \cite {Samet84} is completely contained in his book or should be cited separately.}

\paragraph* {Smooth and Dynamic Quadtrees}

A quadtree is {\em smooth}\footnote {Also called {\em balanced} by some authors, which is not to be confused with the notion of balance in trees related to the relative weights of subtrees.} if its each leaf is comparable in size to adjacent leaves.
It has been long recognized that smooth quadtrees are useful in many applications~\cite {BernEpGi94},
and smooth quadtrees can be computed in linear time (and have linear complexity) from their non-smooth counterparts~\cite[Theorem~14.4]{de2008computational}.

A quadtree is {\em dynamic} if it supports making changes to the structure in sublinear time.
Recently, quadtrees have been applied in kinetic and/or uncertain settings that call for dynamic behaviour of the decomposition~\cite {de2012kinetic, lss13, khramtcova2017dynamic}.
Bennett and Yap~\cite {BY17} show how to maintain a smooth {\em and} dynamic quadtree subject to amortized constant-time {\em split} and {\em merge} operations on the quadtree leaves.

At this point, it is useful to distinguish between the quadtree {\it an sich}, a combinatorial subdivision of space, and the quadtree as a data structure for storing a point set (or other set of geometric objects).
Given a set $P$ of points in the plane and a square that contains them, we can define the minimal quadtree that contains $P$ to be the quadtree we obtain by recursively subdividing the root square until no leaf contains more than one (or a constant number of) point(s). It is well-known that such a minimal quadtree can have superlinear complexity, but can still be stored in linear space by using {\em compression}~\cite {de2008computational}. Additionally, when working in the Real RAM computation model, it may not be possible to keep different compressed components properly aligned~\cite {H11, LM12}. These complications imply that we cannot simply apply results known for standard/regular (henceforth called {\em uncompressed}) quadtrees. When maintaining a dynamic quadtree storing a point set $P$, we wish to support high-level operations of inserting points into $P$ and removing points from $P$.\footnote {In Table~\ref {tab:operations} in Section~\ref {sec:preliminaries} we provide a complete list of operations and how they relate.}

\paragraph* {Contribution}

In this paper, we show that it is possible to maintain a quadtree storing a set of points $P$ that is smooth and possibly compressed, which supports worst-case constant-time insertions and deletions of points into $P$, assuming we are given a pointer ({\em finger}) to the cell of the current quadtree containing the operation.
Our result runs on a Real RAM and generalises to arbitrary constant dimensions.

In the first half of the paper (Sections~\ref{sec:static}-\ref{sec:dimensions}), we focus on the problem of making the quadtree itself dynamic and smooth, improving the recent result by Bennett and Yap~\cite {BY17} from amortized to worst-case constant time split and merge operations.
The challenge here is to avoid cascading chains of updates required to maintain smoothness.
Our key idea is to introduce several layers of smoothness: we maintain a core quadtree which is required to be $2$-smooth, but cells added to satisfy this condition themselves need only be $4$-smooth, etc (refer to Section~\ref {sec:preliminaries} for the formal definition of $2^j$-smoothness).
In Section~\ref{sec:static}, we show that when defining layers in this way, we actually need only two layers in $\R^2$, and the second layer will always already be smooth.
In Section~\ref{sec:dynamic}, we show that we can handle updates on the core quadtree in constant time.
In Section~\ref{sec:dimensions}, we generalise the result to arbitrary dimensions (now, the number of layers, and thus the smoothness of the final tree, depends on the dimension).

In the second half of the paper (Sections~\ref{sec:compression}-\ref{sec:alignment}), we focus on lifting our result to quadtrees that store a set of points on a pure real-valued pointer machine.
The challenge here is to redefine compressed quadtrees in a consistent way across different layers of smoothness, and to re-align possibly misaligned components on the fly when such components threaten to merge.
In Section~\ref{sec:preliminaries}, we show that we can view insertions of points as two-step procedures, where we first need to locate the correct leaf of the current tree containing the new point, and then actually insert it into the quadtree. We show in Section~\ref{sec:compression} that we can still handle the second step in worst-case constant time.
In Section~\ref{sec:alignment}, we deal with the issue of avoiding the use of the floor operation, which is not available on a pure Real RAM.

\paragraph* {Implications}

In many applications of quadtrees, it is useful to be able to walk in constant time from any leaf of the tree to any of its neighboring leafs. Clearly, in a non-smooth quadtree, a single leaf may have many smaller neighbors; in some applications, however, it is enough to only maintain pointers from every leaf to every {\em larger} neighboring leaf.
However, it has been shown that {\em dynamically} maintaining such pointers is not possible unless the quadtree is smooth~\cite {BY17}.\footnote{See Section~\ref{sec:applications} for a formal definition and a proof.}

A large number of papers in the literature explicitly or implicitly rely on the ability to efficiently navigate a quadtree, and our results readily imply improved bounds from amortized to worst-case~\cite {macdonald1990heuristics, khramtcova2017dynamic, lss13, de2012kinetic, park2012self}, and extends results from bounded-spread point sets to arbitrary point sets~\cite {
devillers1992fully}. Other papers could be extended to work for dynamic input with our dynamic quadtree implementation~\cite{aronov2006cost, LM12, mezger2003hierarchical}. Several dynamic applications are in graphics-related fields and are trivially paralellizable, which enhanced the need for worst-case bounds. In Section~\ref {sec:applications} we give a comprehensive overview of the implications of our result.

\section{Preliminaries. }
\label{sec:preliminaries}

In this section we review several necessary definitions. 
Well-known and existing concepts are underligned.
Consider the $d$-dimensional real space $\mathbb{R}^d$. For a hypercube $R \subset \mathbb{R}^d$, 
the \textbf{size} of $R$, denoted $|R|$, is the length of a $1$-dimensional facet (i.e., an edge) of $R$.

\begin{definition}[\underline{Quadtree}]
Let $R$ be an axis-aligned hypercube in $\mathbb{R}^d$. A \textbf{quadtree} $T$ on the root cell $R$ is a hierarchical decomposition
of $R$ into smaller axis-aligned hypercubes called \textbf{quadtree cells}. Each node $v$ of $T$ has an associated
quadtree cell $C_v$, and $v$ is either a leaf or it has $2^d$ equal-sized children whose cells subdivide $C_v$.\footnote {We follow~\cite {LM12} in using {\em quadtree} in any dimension rather than dimension-specific terms (i.e. {\em octree}, etc).} 
\end{definition}

From now on, unless explicitly stated otherwise, 
when talking about a quadtree cell $C$ we will be meaning both $C$ and the quadtree node corresponding to $C$.

\begin{definition}[\underline{Neighbor}, Sibling neighbor]
\label{def:neighbor}
Let $C$ and $C'$ be two cells of a quadtree $T$ in $\mathbb{R}^d$. 
We call $C$ and $C'$ \textbf{neighbors}, 
if they are interior-disjoint and share (part of) a $(d-1)$-dimensional facet. 
We call $C$ and $C'$ \textbf{sibling neighbors} if they are neighbors and they have the same parent cell.
\end{definition}

\begin{observation}
Let $C$ be a quadtree cell. Then: (i)  $C$ has at most $2d$ neighbors of size $|C|$; and  
(ii) For each of the $d$ dimensions, $C$ has exactly one sibling neighbor that neighbors $C$ in that dimension. 
\end{observation}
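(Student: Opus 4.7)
The plan is to prove the two parts independently, with the grid-alignment property of quadtrees as the common foundation.

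For (i), I would first establish the following lemma: any two distinct cells of a quadtree that have the same size $s$ are grid-aligned, i.e., they are distinct cells of the axis-aligned uniform grid of cell-side $s$ inside the root. This follows by induction on depth, since every quadtree cell of size $s$ is obtained by $\log_2(|R|/s)$ rounds of binary subdivision from the root, and each round preserves the alignment with the grid at the corresponding scale. Given this, two distinct cells of size $|C|$ are two distinct hypercubes of the same grid, so if they share a boundary part of positive $(d-1)$-dimensional measure, that boundary part must in fact be a full $(d-1)$-facet of both cells. Since $C$ has exactly $2d$ such facets and each facet of $C$ can be shared with at most one other grid cell, $C$ has at most $2d$ neighbors of size $|C|$.

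For (ii), I would parameterise the $2^d$ children of $C$'s parent $P$ by vectors in $\{0,1\}^d$ according to their position within the $2 \times \cdots \times 2$ subdivision of $P$. Let $\mathbf{b} \in \{0,1\}^d$ be the index of $C$. For each dimension $i \in \{1, \dots, d\}$, the sibling $C_i$ whose index differs from $\mathbf{b}$ only in coordinate $i$ shares with $C$ the full $(d-1)$-facet orthogonal to axis $i$ and lying in the interior of $P$, since the remaining $d-1$ coordinates of the two indices agree and hence the projections of the two cells to the axis-orthogonal hyperplane coincide. Any other sibling differs from $\mathbf{b}$ in at least two coordinates, and therefore meets $C$ in a face of dimension at most $d-2$, so it is not a neighbor in the sense of Definition~\ref{def:neighbor}. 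This yields exactly one sibling neighbor per dimension.

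The only subtlety I anticipate is the grid-alignment claim underlying (i): the properties "share part of a $(d-1)$-facet" and "same size" do not, in isolation, force full-facet sharing (two arbitrary same-size axis-aligned hypercubes may meet along a strict sub-facet), and it is the quadtree's recursive binary subdivision that rules this out. Once this is in place, both parts reduce to straightforward combinatorics on the subdivision.
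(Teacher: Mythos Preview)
Your argument is correct. The paper itself offers no proof for this observation; it is stated as immediate from the definitions. Your write-up supplies the details the paper omits: the grid-alignment lemma for part~(i) and the $\{0,1\}^d$ indexing of children for part~(ii) are exactly the right ingredients, and your identification of the one subtlety --- that same-size cells sharing part of a $(d-1)$-facet must in fact share a full facet, which relies on the recursive binary subdivision and not merely on the cells being axis-aligned hypercubes of equal size --- is well placed. There is nothing to correct.
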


\begin{definition}[\underline{$2^j$-smooth cell, $2^j$-smooth quadtree}]
For an integer constant $j$, we call a cell $C$ \textbf{$2^j$-smooth} if 
the size of each leaf neighboring $C$  is at most $2^j|C|$.
If every cell in a quadtree is $2^j$-smooth, the quadtree is called
\textbf{$2^j$-smooth}.
\end{definition}

\begin{observation}
If all the quadtree leaves are $2^j$-smooth, then all the intermediate cells 
are $2^j$-smooth as well.\footnote{Observe that if a single (leaf) cell is $2^j$-smooth, its ancestors do not necessarily have to be such.}
 That is, the quadtree is $2^j$-smooth.
\end{observation}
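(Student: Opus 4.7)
The plan is to fix an arbitrary intermediate cell $C$ and an arbitrary leaf $N$ that neighbors $C$, and verify that $|N| \le 2^j |C|$; by Definition of $2^j$-smoothness applied cellwise, this is exactly what needs to be shown.

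First I would dispose of the easy case $|N| \le |C|$: since $j \ge 0$ (the interesting case for smoothness), we get $|N| \le |C| \le 2^j |C|$ immediately. The only nontrivial case is $|N| > |C|$. Here the key step is to exhibit a \emph{leaf} descendant $D$ of $C$ that is itself a neighbor of $N$ in the sense of Definition~\ref{def:neighbor}. Since $C$ and $N$ are neighbors, they share (part of) a $(d-1)$-dimensional facet $F$ on the boundary of $C$. Starting from $C$, I would repeatedly descend to any child whose cell touches $F$; such a child must exist because the $2^d$ children of an intermediate cell subdivide it and jointly cover $F$. The quadtree is finite, so this descent reaches a leaf $D \subseteq C$. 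By construction $D$ still touches part of $F$, and $D$ is interior-disjoint from $N$ (as $D \subseteq C$ and $C, N$ are interior-disjoint), so $D$ and $N$ are neighbors. Moreover $|D| \le |C|$ since $D$ is a proper descendant of $C$.

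Now the hypothesis that the leaf $D$ is $2^j$-smooth, applied to its neighbor $N$, yields
\[
|N| \;\le\; 2^j |D| \;\le\; 2^j |C|,
\]
which is precisely the condition required for $C$ to be $2^j$-smooth. Since $C$ and $N$ were arbitrary, every intermediate cell is $2^j$-smooth; combined with the assumption on the leaves, this makes the whole quadtree $2^j$-smooth.

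The only real obstacle is the geometric bookkeeping in the case $|N|>|C|$, namely ensuring that the descent from $C$ toward the shared facet $F$ terminates at a leaf that is genuinely a neighbor of $N$; everything else is just unwinding the definitions of \emph{neighbor}, \emph{leaf}, and \emph{$2^j$-smooth}. The footnote warning that a single smooth leaf does not force its ancestors to be smooth is also consistent with this plan: the argument crucially uses the fact that \emph{every} leaf descendant of $C$ we might land on is assumed to be $2^j$-smooth.
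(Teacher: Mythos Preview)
Your argument is correct. The paper states this as an observation without proof, so there is nothing to compare against; your descent from the intermediate cell $C$ to a leaf $D$ along the shared facet, followed by applying the $2^j$-smoothness of $D$ to the leaf neighbor $N$, is exactly the natural justification one would supply.
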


\begin{definition}[Family related]
Let $C_1, C_2$ be two cells in a quadtree $T$ such that $|C_1| \le |C_2|$. If the parent of $C_2$ is an ancestor of $C_1$ we call $C_1$ and $C_2$ \textbf{family related}.\footnote{Observe that $C_1$ and $C_2$ do not have to be neighbors.} 
\end{definition}

We now consider quadtrees that store point sets. 
Given a set $P$ of points in  $\mathbb{R}^d$, and a hypercube $R$ containing all points in $P$, 
 an \textbf{uncompressed quadtree} that stores $P$ is a quadtree $T$ in   $\mathbb{R}^d$ on the root cell $R$, that can be obtained by starting from $R$ and 
 successively subdividing every cell that contains at least two points in $P$ into $2^d$ child cells.  

\begin{definition}[\underline{Compression}]
Given a large constant $\alpha$, an $\alpha$-compressed quadtree is a quadtree with additional \textbf{compressed nodes}. A compressed node $C_a$
has only one child $C$ with $|C| \leq |C_a|/\alpha$, and the region $C_a \setminus C \subset \mathbb{R}^d$ does not contain any points in $P$.
We call the link between $C_a$ and $C$ a \textbf{compressed link}, and $C_a$ the \textbf{parent} of the compressed link. 
\end{definition}

Compressed nodes induce a partition of a compressed quadtree $T$ into a collection of uncompressed quadtrees interconnected by compressed links. 
We call the members of such a collection the 
\textbf{uncompressed components} of $T$.

\begin{table}[t]
\begin{tabular}{ | l | l  |}

\hline
  \textcolor{gray}{\textbf{Operation}}  & \textcolor{gray}{\textbf{ Running time}} \\ \hline
\textbf{I. Quadtree operations (uncompressed quadtree)} & \\ \hline
    Split a cell 
& $\mathcal{O}((2d)^d)$ \\ \hline
 Merge cells 
& $\mathcal{O}((2d)^d)$ \\ \hline
  \textbf{ II. Quadtree operations ($\alpha$-compressed quadtree)} & \\ \hline
 Insert a 
component  
& $\mathcal{O}(d^2(6d)^d)$ \\ \hline
 Delete a component & $\mathcal{O}(d^2(6d)^d)$\\ \hline
 Upgrowing of a component & $\mathcal{O}(\log(\alpha)d^2(6d)^d)$ \\ \hline
Downgrowing of a component & $\mathcal{O}(\log(\alpha)d^2(6d)^d)$\\   
    \hline

  \textbf{ III. Operations on the point set $P$, stored in a quadtree} &  \\ \hline
  Insert a point into $P$ & $\mathcal{O}(d\log(n) + \log(\alpha)d^2(6d)^d)$ \\ \hline
  Insert a point into $P$, given a finger & $\mathcal{O}(\log(\alpha)d^2(6d)^d)$ \\ \hline
  Delete a point from $P$ & $\mathcal{O}(\log(\alpha)d^2(6d)^d)$ \\ \hline
  \end{tabular}
\caption{Operations considered in this paper and the running times of the provided implementation.}\vspace{-2em} 
\label{tab:operations}
\end{table}

\subsection{Quadtree operations and queries}
Table~\ref {tab:operations} gives an overview of quadtree operations.
It is insightful to distinguish three levels of operations.
Operations on a compressed quadtree (II) internally perform operations on an uncompressed quadtree (I).
Similarly, operations on a point set stored in a quadtree (III) perform operations on the quadtree.
We now give the formal definitions and more details. 

\begin{definition}[\underline{split, merge}] 
\label{def:split-merge}
 Given a leaf cell $C$ of a quadtree $T$, the \textbf{split} operation for $C$ inserts the $2^d$ equal-sized children of $C$ into $T$.  
Given  a set $2^d$ leaves of a quadtree $T$ whose parent is the same cell, the \textbf{merge} operation these $2^d$ cells.  
\end{definition}

\begin{definition}[upgrowing, downgrowing]
\label{def:up-down}
Let $A$ be an uncompressed component  of a compressed quadtree.  
\textbf{Upgrowing} of $A$ adds the parent $R'$ of the root cell $R$ of $A$. Cell $R'$ becomes the root of component $A$. 
\textbf{Downgrowing} of $A$ removes the root $R$ of $A$, and all the children of $R$ except one child $C$. Cell $C$ becomes the root of $A$. 
The downgrowing operation requires $R$ to be an internal cell, and all the points stored in $A$ to be contained in  one child $C$ of $R$. 
\end{definition}

An insertion of a point $p$ into the set $P$ stored in an  $\alpha$-compressed quadtree $T$ is performed in two phases: first, the leaf cell of $T$ should be 
found that contains 
$p$; second, the quadtree should be updated.
The first phase, called \textbf{point location}, can be performed in $\mathcal{O}(d\log(n))$ time (See edge oracle trees in \cite{lss13}), and can be considered a query in our data structure. 
 We refer to the second phase separately as \textbf{inserting a point given a finger}, see Table~\ref{tab:operations}.

\section{Static non-compressed smooth quadtrees in $\mathbb{R}^1$ and $\mathbb{R}^2$.}
\label{sec:static}

We first view the quadtree as a standalone data structure subject only to merge and split operations. In this section we are given a unique non-smooth, uncompressed quadtree $T_1$ over $\mathbb{R}^1$ or $\mathbb{R}^2$ with $n$ cells. 
It is known \cite[Theorem 14.4]{de2008computational} that an uncompressed quadtree can be made smooth by adding 
$\mathcal{O}(n)$ cells. However, the reader can imagine that if we want all the cells to be $2$-smooth that we cannot make the quadtree dynamic with worst-case constant updates because balancing keeps cascading (Appendix \ref{sec:bennet}). In this section we show that if $T_1$ is a quadtree over $\mathbb{R}^1$ or $\mathbb{R}^2$ then we can
 extend $T_1$ by consecutively adding $d \in \{1,2\}$\footnote{In Section~\ref {sec:dimensions} we show the same is possible for quadtrees of arbitrary dimension $d$.} sets of cells,
i. e., cells of $d$ different \emph{brands}, 
so that in the resulting \emph{extended} quadtree $T^*$ each cell is smooth according to its brand. 
The total number of added cells is $\mathcal{O}(d\cdot 2^d n)$. 

\subsection{Defining our smooth quadtree.}

We want to add a minimal number of cells to the original quadtree $T_1$ such that the cells of $T_1$ become $2$-smooth and the balancing cells are smooth with a constant dependent on $d \in \{1,2\}$. In general we want to create an \textbf{extended quadtree} $T^*$ with $T_1 \subset T^*$ where all cells with brand $j$ are $2^j$-smooth for $j \le d+1$.

\begin{figure}[H]
	\centering
	\includegraphics[width=300px]{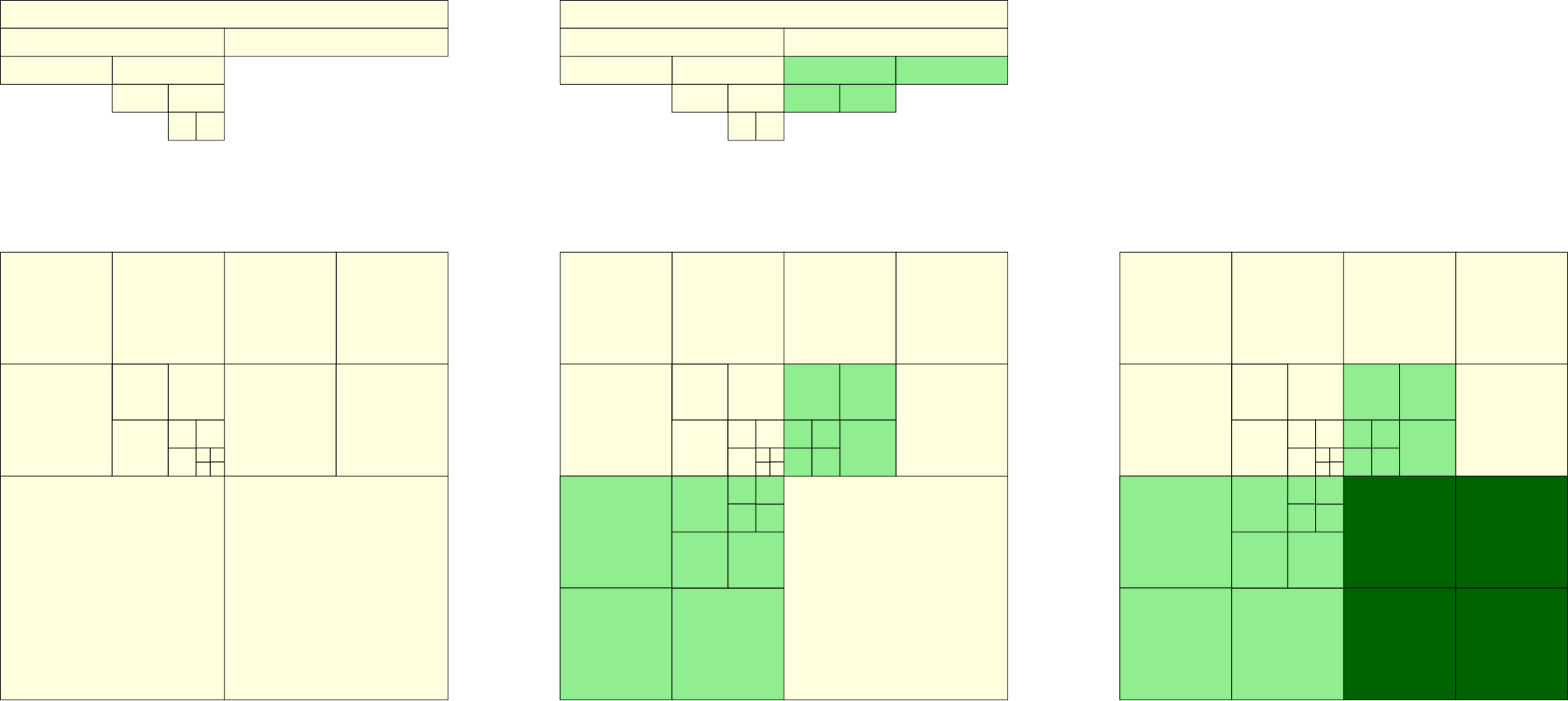}
	\caption{
Left: a quadtree in $\mathbb{R}^1$ (up) and $\mathbb{R}^2$ (down); Center: 
the (light-green) cells of brand 2 added; Right: the (dark-green) cells of brand 3 added. 
In each row, the rightmost tree is the smooth version of the leftmost one.
}
	\label{fig:balanceexample}
\end{figure}

The \emph{true} cells  ($T_1$) get brand $1$. Figure \ref{fig:balanceexample} shows two 
quadtrees and its balancing cells. This example 
also illustrates our main result:  to balance a tree $T_1$ over $\mathbb{R}^d$ we use $(d+1)$ different types of cells and 
the cells of the highest brand 
are automatically $2^{d+1}$-smooth.
This example gives rise to an intuitive, recursive definition for balancing cells in $\mathbb{R}^d$. In 
this definition we have a slight abuse of notation: For each brand $j$ we denote $T_j$ as the set of cells with brand $j$ and $T^j$ as the quadtree associated with the cells in $T_i$ for all $i \le j$:

\begin{definition}[
sets $T_j$]
\label{def:tree}
Let $T_1$ be a set of true cells in $\mathbb{R}^d$. 
We define the sets $T_j, 2 \leq j \leq d+1$ recursively:

Given a set of cells $T_j$ in $\mathbb{R}^d$, 
let  $T^j$ be  the quadtree given by $\cup_{i \le j }T_i$.

We define the set $T_{j+1}$ 
to be the minimal set of cells 
obtained by splitting cells of  $T^j$, such that 
each cell in $T_j$
is $2^j$-smooth in $T_{j+1}$.

For each set  $T_j$, to every cell in $T_j$ we assign brand $j$. 
\end{definition}

\begin{definition}
\label{def:t-star}
In $\mathbb{R}^d$, we define $T^*$ to be $T^{d+1}$.
\end{definition}

The extended quadtree $T^*$ has three useful properties which we 
prove in the remainder of this section: the tree is unique, the tree has a size linear in $d$ and cells in $T^*$ which are related in ancestry must have a related brand. 

\begin{lemma}
\label{lemma:unique}
Given a set of cells $T_j$ in $\mathbb{R}^d$ with brand $j$, the cells in $T_{j+1}$ that balance the cells in $T_j$ are unique.
\end{lemma}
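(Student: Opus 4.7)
The plan is to give an explicit canonical construction $S$ of cells to add to $T^j$, to show that every valid choice of $T_{j+1}$ must contain $S$, and to observe that $S$ itself is valid; together these will force $T_{j+1} = S$.

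I would construct $S$ by a top-down recursive splitting procedure. Starting from $T^j$, I repeatedly locate any leaf $L$ of the current tree that has a neighbour $C \in T_j$ with $|L| > 2^j|C|$, and split $L$ by adding its $2^d$ children to $S$. The procedure terminates, since no cell of size at most $2^j \min_{C \in T_j} |C|$ can trigger a further split. A key intermediate claim is that $S$ does not depend on the order in which splits are performed: whether a given cell $L$ satisfies the splitting predicate depends only on the geometric position and size of $L$ and on the fixed set $T_j$, not on any intermediate tree state, so a standard confluence argument yields well-definedness of $S$. By the termination criterion, the tree $T^j \cup S$ makes every cell in $T_j$ be $2^j$-smooth, so $S$ is a valid balancing extension.

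The core step is to show that every valid balancing extension $E$ of $T^j$ contains $S$. I would argue by induction on the order in which cells are added to $S$ during the construction above. Let $S_k$ denote the cells added by the first $k$ splits, and assume $S_{k-1} \subseteq E$. The $k$-th split is applied to a leaf $L$ of $T^j \cup S_{k-1}$ with some neighbour $C \in T_j$ satisfying $|L| > 2^j|C|$. By the induction hypothesis $L$ lies in $T^j \cup E$; if $L$ were a leaf of $T^j \cup E$, the violation $|L| > 2^j|C|$ would contradict the $2^j$-smoothness of $T_j$ in $E$. Hence $L$ is internal in $T^j \cup E$, so its $2^d$ children lie in $E$, giving $S_k \subseteq E$ and closing the induction.

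Combining the two parts, $S$ is a valid balancing extension contained in every valid balancing extension, so $S$ is the unique $\subseteq$-minimal valid extension, which must coincide with $T_{j+1}$. I expect the confluence argument for well-definedness of $S$ to be the most delicate point, but it should follow routinely from the observation that the splitting criterion on a leaf depends only on fixed geometric data, so each cell is added to $S$ if and only if it ever becomes a leaf satisfying the violation condition, a property independent of the splitting order.
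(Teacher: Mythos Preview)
Your argument is correct. The paper's own proof is much terser and takes a slightly different decomposition: rather than building one global set $S$ by an iterative splitting procedure and proving confluence, the paper argues per cell. For each $C \in T_j$ it observes that the minimal set of splits needed to make that single cell $2^j$-smooth is geometrically determined (one simply splits every too-large neighbouring leaf down to size $2^j|C|$), hence unique; then it declares $T_{j+1}$ to be the union over all $C$ of these per-cell sets.

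Both arguments rest on the same underlying observation---that whether a leaf must be split is a purely geometric condition depending only on its size, its position, and the fixed set $T_j$---but they package it differently. Your operational construction with a confluence step and an explicit minimality induction is more careful: it makes precise why the order of splits is irrelevant and why no smaller valid extension can exist, points the paper leaves implicit. The paper's per-cell union, on the other hand, is shorter and perhaps more intuitive, though it silently uses the fact that the union of the per-cell minimal sets is itself minimal for the whole of $T_j$, which is exactly what your induction establishes rigorously.
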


\begin{proof}
Per definition a cell $C$ is $2^j$-smooth if all its neighboring leaf cells are at most a factor $2^j$ larger than $C$. This means that if we want to balance a cell $C$ then we need to check for each of its neighboring cells if it is too large and if so, add a minimum number of cells accordingly. This makes the minimum set of cells that balances a cell $C$ unique. If for each cell in $T_j$, its balancing cells are unique, then the set $T_{j+1}$ (the union of all sets of balancing cells) is unique.
\end{proof}

\begin{figure}[H]
	\centering
	\includegraphics[width=100px]{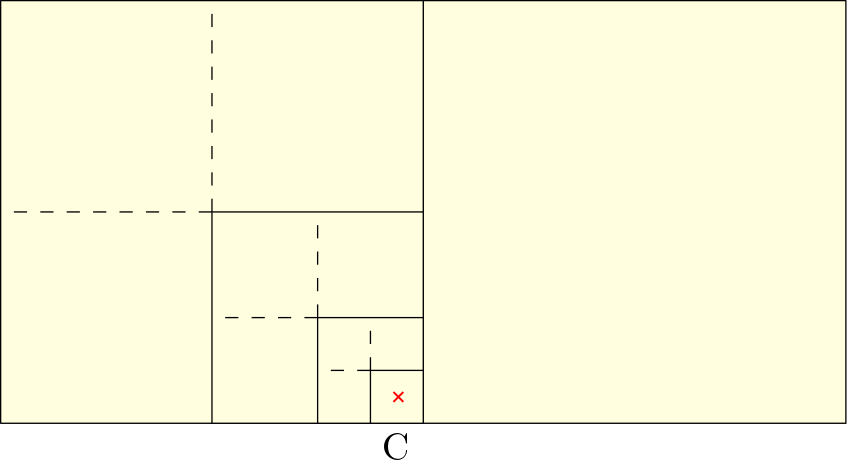}
	\caption{Let the figure show $T_{j-1} = T_{1}$, the true cells of a quadtree in $\mathbb{R}^2$ in white and denote the cell with the red cross as $C$. Cells shown with dotted lines exist but are not important for the example. Note that $C$ is adjacent to a cell of 8 times its size so $C$ is not 2-smooth and the parent of $C$ is also not 2-smooth. If we want to split the neighbor of $C$ into cells with brand 2 we create cells which are $4$ times the size of $C$ and we thus balance its parent. The second split creates cells of brand 2 which are twice the size of $C$ and so $C$ is 2-smooth. 
	\label{fig:splittinglemma}}
\end{figure}

\begin{lemma}
\label{lemma:treesize}
Every set of balancing cells $T_j$ has $\mathcal{O}(2^dn)$ cells. 
\end{lemma}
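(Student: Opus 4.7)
The plan is to induct on $j$, with the base case $|T_1| = n$ immediate. For the inductive step, assuming $|T_i| = O(2^d n)$ for $i \le j$, I would bound $|T_{j+1}|$ via a charging argument that maps each cell of $T_{j+1}$ back to a unique cell of $T_1$. Specifically, a cell $C \in T_{j+1}$ was added, per Definition~\ref{def:tree}, to help some particular cell $C' \in T_j$ become $2^j$-smooth; I would trace $C'$ further back through the balancing chain $T_j \to T_{j-1} \to \cdots \to T_1$, using Lemma~\ref{lemma:unique} to ensure the chain is canonically defined, and assign $C$ to the resulting $C_1 \in T_1$.

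The core of the argument is to show that each $C_1 \in T_1$ receives $O(2^d)$ charges in total. I would analyze the ``balancing cone'' emanating from $C_1$: the set of cells created across all brands by iteratively splitting neighbors of $C_1$. In $d$ dimensions, $C_1$ has $O(d)$ neighbor directions, and along each direction the staircase of splits contributes $O(2^{d-1})$ cells per brand level, since a splitting of a cell adjacent to $C_1$ yields $2^d$ children of which only $2^{d-1}$ can be adjacent to $C_1$. The relaxation of the smoothness parameter with $j$ is crucial here: at brand $j+1$ we require only $2^j$-smoothness, so the staircase chain terminates once cells reach size $2^j|C_1|$, bounding the chain length by a constant independent of $j$.

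The main obstacle will be rigorously controlling the interaction of multiple staircases near $C_1$: cells around $C_1$ may participate in several staircases at once, and the recursive balancing could seem to permit exponential growth with $j$. I would resolve this by exploiting the minimality established in Lemma~\ref{lemma:unique} to argue that each brand-$(j+1)$ cell is canonically associated with a unique staircase, hence a unique (direction, level) pair, which bounds the total charge per $C_1$ independently of $j$ and yields $|T_{j+1}| = O(2^d n)$ as desired.
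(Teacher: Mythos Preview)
The paper's proof is a much simpler one-step induction: it bounds $|T_j|$ directly in terms of $|T_{j-1}|$ by arguing that at most $d$ splits are needed per cell of $T_{j-1}$. The only nontrivial observation is that if a cell $C \in T_{j-1}$ has a neighbor that is far too large, then the \emph{intermediate} splits of that neighbor --- the ones producing cells still larger than $2^{j-1}|C|$ --- each balance an \emph{ancestor} of $C$, which is itself a cell of $T_{j-1}$ (see Figure~\ref{fig:splittinglemma}) and so can be charged to that ancestor instead. With this, every cell of $T_{j-1}$ is charged at most $d$ splits (one per non-sibling-neighbor direction), and each split creates $2^d$ children.

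Your plan takes a genuinely different route --- charging each cell of $T_{j+1}$ all the way back through the brand chain to a cell of $T_1$ and then bounding the ``balancing cone'' around each $C_1$ --- but as written it has real gaps. First, you state the inductive hypothesis $|T_i| = O(2^d n)$ for $i \le j$ but never use it, since your charging bypasses $T_j$ entirely; the argument is not actually inductive. Second, Lemma~\ref{lemma:unique} gives uniqueness of the \emph{set} $T_{j+1}$, not a canonical back-pointer from each of its cells, so your chain $C \to C' \to \cdots \to C_1$ is not obviously well-defined (a single cell of $T_{j+1}$ may balance several cells of $T_j$). Third, and most importantly, the claimed bound of $O(2^d)$ charges per $C_1$ via a ``unique (direction, level) pair'' is not justified: a back-chain of length $j$ can a priori take $(2d)^j$ different shapes, and nothing in your sketch explains why the geometry collapses this to something independent of $j$. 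You also appear to conflate two different ``chains'': the sequence of repeated splits of one over-large neighbor (which the paper handles via the ancestor observation) and the brand-to-brand chain $T_{j+1} \to T_j \to \cdots \to T_1$. The paper sidesteps all of this by staying at a single level and using the ancestor charging; that observation is what your proposal is missing.
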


\begin{proof} 
We prove this by induction. By definition $T_1$ 
has $\mathcal{O}(n)$ cells, and all cells in $T_j$ exist to balance cells in $T_{j-1}$. Each of the $\mathcal{O}(n)$ cells $C$ in $T_{j-1}$ has at most $d$ leaf neighbors which are larger than $C$. If $C$ is not $2^{j-1}$-smooth we need to split the too large leaf neighbors to make $C$ $2^{j-1}$-smooth. Observe that for each split we either improve the balance of $C$ or of an ancestor of $C$ which is also in $T_{j-1}$ and which also had to be $2^{j-1}$-smooth (See Figure \ref{fig:splittinglemma}). The result is that we need at most $d$ splits to balance $C$.  
\end{proof}

\begin{corollary}
\label{cor:treesize}
If $d$ is constant, the 
tree $T^*$ has size $\mathcal{O}(n)$.
\end{corollary}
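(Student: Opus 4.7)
The plan is to derive the corollary directly from Lemma~\ref{lemma:treesize} together with Definition~\ref{def:t-star}. By definition, $T^* = T^{d+1} = \bigcup_{i=1}^{d+1} T_i$, so $|T^*| = \sum_{j=1}^{d+1} |T_j|$. Lemma~\ref{lemma:treesize} bounds each term in this sum by $\mathcal{O}(2^d n)$, and there are exactly $d+1$ terms, so altogether we obtain $|T^*| = \mathcal{O}((d+1)\cdot 2^d n) = \mathcal{O}(d\cdot 2^d n)$.

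The second step is simply to observe that when $d$ is treated as a constant, the factor $d\cdot 2^d$ is absorbed into the $\mathcal{O}$-notation, leaving $|T^*| = \mathcal{O}(n)$, which is exactly what the corollary claims.

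I do not expect any real obstacle here: the corollary is essentially a restatement of Lemma~\ref{lemma:treesize} once one notes that only finitely many ($d+1$) brands contribute to $T^*$. The only mild subtlety worth flagging explicitly in the write-up is that the bound in Lemma~\ref{lemma:treesize} is stated for a single $T_j$, so one should be explicit that summing over the $d+1$ brands preserves the bound up to a factor of $d+1$. No further combinatorial argument or induction is required beyond what Lemma~\ref{lemma:treesize} already provides.
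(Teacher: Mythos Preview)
Your proposal is correct and matches the paper's intended reasoning: the corollary is stated immediately after Lemma~\ref{lemma:treesize} with no separate proof, and the implicit argument is exactly the summation over the $d+1$ brands that you spell out (indeed, the paper earlier states the total added cells as $\mathcal{O}(d\cdot 2^d n)$, which is precisely your intermediate bound).
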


\begin{lemma}
\label{lemma:branding}
Let $C_1, C_2$ be two family related (possibly non-leaf) cells in $T^*$ such that $|C_1| \le |C_2|$. 
Then the brand of $C_2$ is at most the brand of $C_1$. 
\end{lemma}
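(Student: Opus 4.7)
The plan is to establish two auxiliary facts about brands and then combine them.

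\textbf{Step 1 (siblings share a brand).} I would first argue that every cell in $T^*$ has its $2^d$ children all of one and the same brand. This is essentially by construction: a cell is split at most once during the whole process (once it has been split, only its descendants may be refined), and when it is split, all of its $2^d$ children are created simultaneously and placed together into exactly one of the sets $T_1, T_2, \ldots, T_{d+1}$.

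\textbf{Step 2 (monotonicity along ancestor chains).} Next I would prove that for any non-root cell $C\in T^*$, the brand of the parent of $C$ is at most the brand of $C$. If $\mathrm{brand}(C)=1$, then $C\in T_1$ and, since $T_1$ is closed under taking parents inside itself, its parent is also of brand $1$. Otherwise $\mathrm{brand}(C)=j+1$ with $j\ge 1$, and by Definition~\ref{def:tree} the cell $C$ arose from splitting some cell of $T^{j}$, namely its parent; hence the parent lies in $T^{j}$ and has brand at most $j$. Iterating along the tree, any ancestor $A$ of any cell $C'$ satisfies $\mathrm{brand}(A)\le \mathrm{brand}(C')$.

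\textbf{Step 3 (combining).} Now let $C_1,C_2$ be family related with $|C_1|\le |C_2|$, and set $A:=\mathrm{parent}(C_2)$, which by definition is an ancestor of $C_1$. The size condition rules out $A=C_1$ (that would force $|C_2|<|C_1|$), so $A$ is a strict ancestor of $C_1$, and therefore $C_1$ lies in the subtree of some specific child $C_2'$ of $A$. If $C_2'=C_2$, then $C_2$ itself is an ancestor of $C_1$ (or equal to it) and Step~2 gives $\mathrm{brand}(C_2)\le \mathrm{brand}(C_1)$. Otherwise $C_2'$ is a proper sibling of $C_2$; by Step~1 they share a brand, and by Step~2 applied to the ancestor chain from $C_2'$ down to $C_1$ we get $\mathrm{brand}(C_2)=\mathrm{brand}(C_2')\le \mathrm{brand}(C_1)$. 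Either way the claim follows.

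The main obstacle is the siblings-share-a-brand fact in Step~1: one has to be sure that the recursive description of $T_{j+1}$ — ``minimal set obtained by splitting cells of $T^{j}$'' — cannot produce a cell whose $2^d$ children are split across different brands. This is really just a bookkeeping check on Definition~\ref{def:tree}, but it is the load-bearing observation: without it, the sibling-hop in Step~3 would fail and monotonicity alone would not cover the family-related case.
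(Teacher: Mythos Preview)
Your proof is correct and follows essentially the same approach as the paper: the paper's argument also relies on ancestor monotonicity of brands together with the fact that when a cell is split all $2^d$ children are created simultaneously (and hence share a brand), then applies this to the parent $C_a$ of $C_2$ and the child of $C_a$ on the $C_1$-side. You have simply made these two ingredients explicit as separate steps, whereas the paper compresses them into a single short paragraph.
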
 

\begin{proof}
The proof is a proof per construction where we try to reconstruct the sequence of operations that led to the creation of cell $C_1$.  All of the ancestors of $C_1$ must have a brand lower or equal to the brand of $C_1$, this includes the parent 
$C_a$ of $C_2$. 
Since $C_1$ is a descendant of $C_a$, 
$C_a$ must be split. In that split all of the children of $C_a$ (including $C_2$) 
are created with a brand lower or equal to the brand of $C_1$. 
\end{proof}

\begin{lemma}[The Branding Principle]
\label{lemma:brandingprinciple}
Let $C_j$ be a cell in $T^*$ with brand $j$. Then all neighboring cells $N$ for which $|N| \ge 2^j|C_j|$ must have a brand of at most $j+1$.
\end{lemma}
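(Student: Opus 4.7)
The plan is to argue by contradiction, exploiting the fact that by Definition~\ref{def:tree}, every brand-$j$ cell $C_j$ is already $2^j$-smooth inside the intermediate extended tree $T^{j+1}$---not merely inside the final tree $T^*$. Under this lens, proving that the brand of $N$ is at most $j+1$ is equivalent to proving that $N$ already appears as a cell of $T^{j+1}$, since $T^{j+1}$ contains exactly the cells of brand at most $j+1$.

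First I would suppose for contradiction that a neighbor $N$ of $C_j$ with $|N|\ge 2^j|C_j|$ has brand at least $j+2$, so that $N\notin T^{j+1}$. Let $P$ be the smallest ancestor of $N$ that does belong to $T^{j+1}$. Since the $2^d$ children of any cell are created atomically by a single split, and the child of $P$ on the path to $N$ is absent from $T^{j+1}$, the cell $P$ has no children in $T^{j+1}$ and is therefore a leaf there. Because $N$ is a strict descendant of $P$, this gives $|P|\ge 2|N|\ge 2^{j+1}|C_j|$.

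The heart of the argument is a short geometric claim: this leaf $P$ must itself be a neighbor of $C_j$. The cells $C_j$ and $P$ are interior-disjoint, since they cannot coincide (then $N\subsetneq C_j$ would contradict $N$ being a neighbor of $C_j$) and neither can strictly contain the other (both belong to $T^{j+1}$ and $P$ is a leaf there). The $(d-1)$-facet $F$ that $N$ shares with $C_j$ lies on $\partial N\subset P$; since $C_j$ sits on the opposite side of $F$ from $N$ but outside $P$, the facet $F$ cannot lie in the interior of $P$ and must lie on $\partial P$. Hence $P$ and $C_j$ share part of a $(d-1)$-facet and are interior-disjoint, so by Definition~\ref{def:neighbor} they are neighbors.

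This closes the contradiction at once: since $C_j$ is $2^j$-smooth in $T^{j+1}$, every leaf neighbor of $C_j$ in $T^{j+1}$ has size at most $2^j|C_j|$, forcing $|P|\le 2^j|C_j|$, which contradicts $|P|\ge 2^{j+1}|C_j|$. Hence $N\in T^{j+1}$ and the brand of $N$ is at most $j+1$. The only delicate step is the geometric observation that the ancestor-leaf $P$ of $N$ is actually a neighbor of $C_j$, rather than some arbitrary cell of $T^{j+1}$ merely containing the spatial region occupied by $N$; once this is in place, the size bookkeeping collapses the proof in a single line.
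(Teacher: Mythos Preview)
Your proof is correct and rests on the same key observation as the paper's: since $C_j\in T_j$ is $2^j$-smooth in $T^{j+1}$, any neighbor of size $\ge 2^j|C_j|$ must already be present in $T^{j+1}$ and hence have brand at most $j+1$. The paper phrases this constructively (tracking which neighbors get created or split when $T_{j+1}$ is built), while you phrase it contrapositively via the leaf ancestor $P$; your version is a bit more explicit about why the relevant large cell is actually a neighbor of $C_j$, but the underlying argument is the same.
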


\begin{proof}
This property follows from the definition of each set of cells $T_j$. If $C_j$ 
is $2^j$-smooth, its neighboring cells $N$ can be at most a factor $2^j$ larger than $C_j$. When we define $T_{j+1}$, all the neighbors of $C_j$ either already have 
 size 
at most $2^j|C_j|$ and thus a brand of at most $j$, or the neighbors must get split until they have 
size 
exactly $2^j|C_j|$. When the latter happens those cells get 
brand $j+1$. 
\end{proof}
With these lemmas in place we are ready to prove the main result for static uncompressed quadtrees in $\mathbb{R}^1$ and $\mathbb{R}^2$.

\subsection{Static uncompressed smooth quadtrees over $\mathbb{R}^1$.}

Let $T_1$ be a non-compressed quadtree over $\mathbb{R}^1$ which takes $\mathcal{O}(n)$ space. In this subsection we show that we can add at most $\mathcal{O}(n)$ cells to the quadtree $T_1$ such that all the cells in the resulting quadtree are $2^j$-smooth for some $j \le 2$ and the true cells are $2$-smooth. Lemma~\ref{lemma:treesize} tells us that we can add at most $\mathcal{O}(n)$ cells with brand $2$ to $T_1$ resulting in the tree $T^* = T_1 \cup T_2$ where all the true cells are  $2$-smooth in $T^*$. Our claim is that in a static non-compressed quadtree over $\mathbb{R}^1$ all the cells in $T_2$ must be $4$-smooth in $T^*$ since we cannot have two neighboring leaf cells in $T_2$ with one cell more than a factor 2 larger than the other.

\begin{theorem}
\label{theorem:R1}
Let $T_1$ be an uncompressed quadtree over $\mathbb{R}^1$ which takes $\mathcal{O}(n)$ space. In the smooth tree $T^*$ there cannot be two neighboring leaf cells $C_2$, $C_3$, both with brand $2$ such that $|C_2| \le \frac{1}{2^2}|C_3|$. \footnote{
Careful readers can observe two things in this section: (i) Cells which are 2-smooth are allowed to have neighbors which are 4 times as large but in $\mathbb{R}^1$ they cannot. (ii) The proof of this theorem actually shows that $C_3$ can not even be a factor two larger than $C_2$. 
We choose not to tighten the bounds because these two observations do not generalize to higher dimensions.}
\end{theorem}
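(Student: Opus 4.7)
My plan is to argue by contradiction. Suppose $C_2, C_3$ are adjacent brand-2 leaves of $T^*$ with $|C_2| \le |C_3|/4$; place $C_3$ just to the left of $C_2$. Trace each cell to its nearest brand-1 ancestor that is a leaf of $T_1$: call these $R_3$ and $R_2$ respectively (these exist because brand-2 cells only arise by splitting true leaves). The proof then splits on whether $R_3 = R_2$.

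If $R_3 = R_2 = R$, the brand-2 cells inside $R$ all come from the splits used to balance $R$ against its true neighbours, and $R$ has at most two such neighbours in $\mathbb{R}^1$. Hence the brand-2 substructure inside $R$ consists of at most a two-sided staircase; each split halves the cell size, so consecutive brand-2 leaves inside $R$ differ by a factor of at most 2, immediately contradicting $|C_2| \le |C_3|/4$.

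If $R_3 \ne R_2$, then $R_3$ and $R_2$ are adjacent $T_1$-leaves sharing the boundary between $C_3$ and $C_2$, and $C_3$ (resp.\ $C_2$) is the rightmost (resp.\ leftmost) brand-2 leaf in its true subtree. The 2-smoothness of $R_3$ in $T^2$ gives $|C_2| \le 2|R_3|$ and that of $R_2$ gives $|C_3| \le 2|R_2|$; minimality of $T_2$ ensures that each split chain stops as shallowly as the opposite-side balancing demand allows, pinning each of $|C_3|, |C_2|$ to the largest power of two compatible with its bound. These bounds, combined with the Branding Principle (Lemma~\ref{lemma:brandingprinciple}) applied across the shared boundary between $R_3$ and $R_2$, preclude a factor-4 gap between the two sizes and yield the contradiction.

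The main obstacle is this second case. The subtlety is that $C_2$ may be forced by balancing against $R_3$ (in which case one can show $|C_2| > |R_3|$), or it may merely be a side effect of splitting $R_2$ for its \emph{other} neighbour, giving $|C_2| = |R_2|/2$ unrelated to $R_3$'s size; the tight coupling between $|C_3|$ and $|C_2|$ must then be extracted from the Branding Principle and the 1-dimensional structure rather than from the opposing smoothness bound alone. The one-dimensional nature of the problem is essential here, because each true cell has at most two true neighbours, so the split chains on the two sides of the shared boundary form independent staircases; the careful bookkeeping to combine their size constraints is precisely what rules out the factor-4 jump and, as the footnote indicates, even the factor-2 jump that $\mathbb{R}^1$ happens to forbid.
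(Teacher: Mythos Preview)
Your Case 2 is not actually carried out: you correctly identify the difficulty (that $C_2$ may be forced by the \emph{other} neighbour of $R_2$ rather than by $R_3$), describe what would need to be shown, and then assert that ``careful bookkeeping'' together with the Branding Principle closes the gap---without performing that bookkeeping. That is a genuine gap, not a routine omission. Case 1 is plausible but also not rigorous as written; the ``two-sided staircase, consecutive leaves differ by at most 2'' claim is exactly the sort of structural fact that needs proof.

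The paper's argument avoids all of this by looking sideways rather than upwards. It never introduces the ancestors $R_2, R_3$ and never case-splits. Since $C_2$ has brand 2, by construction of $T_2$ there is a true cell $C_1$ with $|C_1| < |C_2|$ that $C_2$ was created to balance, and $C_1$ is a neighbour of $C_2$. In $\mathbb{R}^1$ the cell $C_2$ has exactly two neighbours: its sibling and one non-sibling. Because $|C_3| > |C_2|$, the cell $C_3$ occupies the non-sibling side and is a brand-2 leaf of $T^*$, so $C_1$ cannot lie there; hence $C_1$ sits inside the sibling of $C_2$. That makes $C_1$ and $C_2$ family related, and Lemma~\ref{lemma:branding} forces the brand of $C_2$ to be at most that of $C_1$, i.e.\ at most $1$, contradicting $\mathrm{brand}(C_2)=2$. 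Note this uses only $|C_3| > |C_2|$, which is exactly why the footnote remarks that even a factor-2 gap is impossible in $\mathbb{R}^1$. Your ancestor-based decomposition might be completable, but it obscures this one-line mechanism and forces you into the size-comparison analysis you never finish.
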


\begin{figure}[H]
	\centering
	\includegraphics[width=300px]{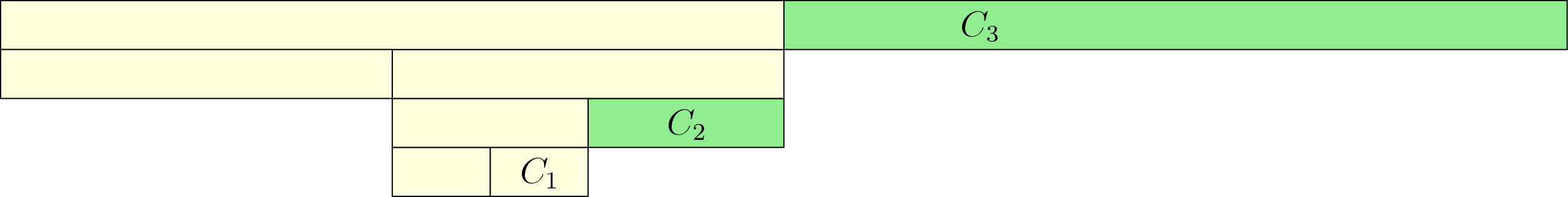}
	\caption{Two neighboring cells with brand $2$ in a one-dimensional quadtree.
    In the figure white cells have brand $1$ and light green cells have brand $2$.}
	\label{fig:onedimproof}
\end{figure}

\begin{proof} 
The proof is by contradiction, and is illustrated in  Figure \ref{fig:onedimproof}.
Assume for the sake of contradiction that we have two neighboring cells $C_2$ and $C_3$ both with brand 2 with $|C_3| = 4|C_2|$.  $C_2$ has two neighbors:
 one family related neighbor and one non-sibling neighbor. $C_3$ cannot be contained in a sibling neighbor because $C_3$ is larger than $C_2$. Note that $C_2$ exists to balance a true cell $C_1$ of smaller size. $C_1$ cannot be a descendant of $C_3$ because $C_3$ is a cell with brand 2. So $C_1$ must be a descendant of the sibling neighbor of $C_2$. That would make $C_2$ and $C_1$ family related and Lemma \ref{lemma:branding} then demands that $C_2$ has 
brand 
at most 1; 
a contradiction.
\end{proof}

\subsection{Static uncompressed smooth quadtrees over $\mathbb{R}^2$.}
\label{sec:R2}

We also show that we can make a smooth non-compressed static quadtree over $\mathbb{R}^2$ that takes $\mathcal{O}(n)$ space, such that all the cells in the quadtree are $2^j$-smooth for a $j \le 3$. We denote the original cells by $T_1$ and we want them to be $2$-smooth. We claim that in the extended quadtree $T^*$ (as defined in Definition \ref{def:t-star}) all cells are $8$-smooth.

\begin{theorem}
\label{theorem:R2}
Let $T_1$ be an uncompressed quadtree over $\mathbb{R}^2$ which takes $\mathcal{O}(n)$ space. In the extended tree $T^*$ there cannot be two neighboring leaf cells $C_3$, $C_4$ with both brand $3$ such that $|C_3| \le \frac{1}{2^3}|C_4|$.
\end{theorem}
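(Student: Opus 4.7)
The plan is to argue by contradiction in the spirit of Theorem~\ref{theorem:R1}, producing a brand-$1$ witness inside $C_4^p$ and invoking Lemma~\ref{lemma:branding}. Assume two brand-$3$ leaf neighbors $C_3, C_4$ of $T^*$ with $|C_4| \ge 8|C_3|$, and WLOG place $C_4$ east of $C_3$ with its west edge flush against $C_3$'s east edge.

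A first alignment observation is that $C_3$ must be an east child of its parent $C_3^p$. Indeed, if $C_3$ were a west child, then $C_3^p$'s east children would live entirely within the spatial footprint of $C_4$, contradicting the fact that $C_4$ is a leaf of $T^*$. Consequently the east side of $C_3^p$ coincides with $C_3$'s east side, which is covered by $C_4$'s west edge.

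The main step is to chase the brand-$3$ label of $C_3$ down two levels. Since $C_3 \in T_3$ is a leaf, the parent $C_3^p$ was split during the $T_3$-construction to balance some brand-$2$ cell $C_2 \in T_2$ adjacent to $C_3^p$ with $|C_2| \le |C_3|/4$. The cell $C_2$ cannot sit on the east side of $C_3^p$: a brand-$\le 2$ cell on that side would have to be contained in $C_4$'s footprint and hence a descendant of $C_4$, which is impossible because $C_4$ is a leaf. Iterating, $C_2 \in T_2$ must in turn have been produced to balance a brand-$1$ cell $C_1 \in T_1$ with $|C_1| \le |C_2|/2 \le |C_3|/8$ adjacent to the parent of $C_2$, again on a non-east side of $C_3^p$. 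I would then argue geometrically that $C_1$ lies inside $C_4^p$: since $|C_4^p| = 2|C_4| \ge 16|C_3|$ and $C_4^p$ contains $C_4$, it spans at least $|C_4|$ to the west of $C_4$ and comparable distance north and south, engulfing the $\mathcal{O}(|C_3|)$-neighborhood of $C_3^p$ on the non-east sides where $C_1$ must lie. A short case analysis on the quadrant of $C_4^p$ containing $C_4$, and on which non-east side of $C_3^p$ the witness $C_2$ resides, confirms the containment; in corner-alignment cases where $C_1$ would sit exactly across the south or north boundary of $C_4^p$, one notes that this coordinate is also the boundary of the enclosing quadtree root (by alignment), so the hypothetical $C_1$ simply does not exist. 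Once $C_1 \subset C_4^p$, the pair $(C_1, C_4)$ is family-related; Lemma~\ref{lemma:branding} then forces the brand of $C_4$ to be at most the brand of $C_1 = 1$, contradicting $C_4 \in T_3$.

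The main obstacle, compared to the $\mathbb{R}^1$ proof, is the planar geometric case analysis in the last step. In $\mathbb{R}^1$ each cell has only two neighbors, so the location of the brand-$1$ witness is immediate; in $\mathbb{R}^2$ both $C_2$ and $C_1$ admit multiple admissible positions, and the side-effect case, where $C_3$ is only a quadrant-sibling of the cell directly adjacent to $C_2$, introduces further configurations. The corner-alignment cases where $C_1$ would lie just outside $C_4^p$ also demand the global observation about the quadtree root, which has no analogue in the one-dimensional proof.
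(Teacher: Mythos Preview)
Your approach diverges from the paper's and contains a genuine gap. The paper never tries to place the brand-$1$ witness inside $C_4^p$. Instead, after arguing that $C_3$ must share a vertex with $C_4$ and $C_2$ a vertex with $C_3$ (using Lemma~\ref{lemma:branding}), it obtains the contradiction from the Branding Principle (Lemma~\ref{lemma:brandingprinciple}) applied to the \emph{parent} $A_0$ of $C_1$: in one case $A_0$ neighbours a brand-$3$ sibling of $C_3$ that is at least twice its size, and in the other $A_0$ neighbours $C_4$ itself with $|C_4|\ge 8|A_0|$. Both are neighbour-based contradictions; a family relation with $C_4$ is never invoked.

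Your crucial step is the claim that $C_1\subset C_4^p$, and this is where the argument fails. You assert that $C_4^p$ ``spans at least $|C_4|$ to the west of $C_4$'', but nothing forces $C_4$ to be an east child of its parent: if $C_4$ is a west (NW or SW) child, the west edge of $C_4^p$ coincides with that of $C_4$, and $C_4^p$ extends zero distance westward. In that configuration $C_3$---and any $C_2,C_1$ reached from it on a non-east side---lies entirely outside $C_4^p$, so the family relation you need between $C_1$ and $C_4$ simply does not hold. The promised ``short case analysis on the quadrant of $C_4^p$ containing $C_4$'' would expose exactly this failure mode rather than dispose of it. Your fallback, that in corner cases the relevant boundary of $C_4^p$ coincides with the boundary of the global root, is also incorrect: $C_4^p$ is an arbitrary internal cell, and no alignment principle forces its edges onto the root's. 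To make a containment argument work you would need an independent reason why $C_4$ must sit in the east half of $C_4^p$, and none is given.
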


\begin{proof}

The proof resembles the proof in Theorem \ref{theorem:R1}, and it is illustrated by Figure \ref{fig:twodimproof}. 
However, it requires two cases instead of one. Note that for $C_3$ to exist there must be at least two consecutive neighbors of $C_3$, ($C_2$ and $C_1$) with brand $2$ and $1$ respectively, such that $|C_1| = \frac{1}{2}|C_2| = \frac{1}{2} \cdot \frac{1}{4}|C_3|$. \\
\noindent \begin{minipage}{0.55\textwidth}
\-\hspace{15pt} Observe that $C_1$, $C_2$ and $C_3$ cannot be family related because of Lemma \ref{lemma:branding} and observe that $C_4$ can not be a sibling neighbor of $C_3$. The proof claims that it is impossible to place $C_1$, $C_2$, $C_3$ and $C_4$ in the plane without either violating the branding principle, Lemma \ref{lemma:branding} or causing a cell with brand $1$ or $2$ to be not smooth.

\-\hspace{15pt} Our first claim is that $C_3$ must share a vertex with $C_4$ (and similarly, $C_2$ must share a vertex with $C_3$).  If this is not the case all the neighbors of $C_3$ (apart from $C_4$) are either contained in sibling neighbors of $C_3$ or neighbors of $C_4$. However that would imply that either $C_2$ is \textbf{family related} to $C_3$ or that an ancestor of $C_2$ of size $|C_3|$ is a neighbor of $C_4$. The first case cannot happen because of Lemma \ref{lemma:branding}, in the second case we have a cell with brand 2 neighboring $C_4$ which is $\frac{1}{8}$'th the size of $C_4$ so $C_4$ must have been split but $C_4$ must be a leaf. Without loss of generality we say that $C_3$ shares the top left vertex with $C_4$ (Figure \ref{fig:twodimproof}).
\end{minipage}\hfill
\begin{minipage}{0.4\textwidth}
   \includegraphics[width=160px]{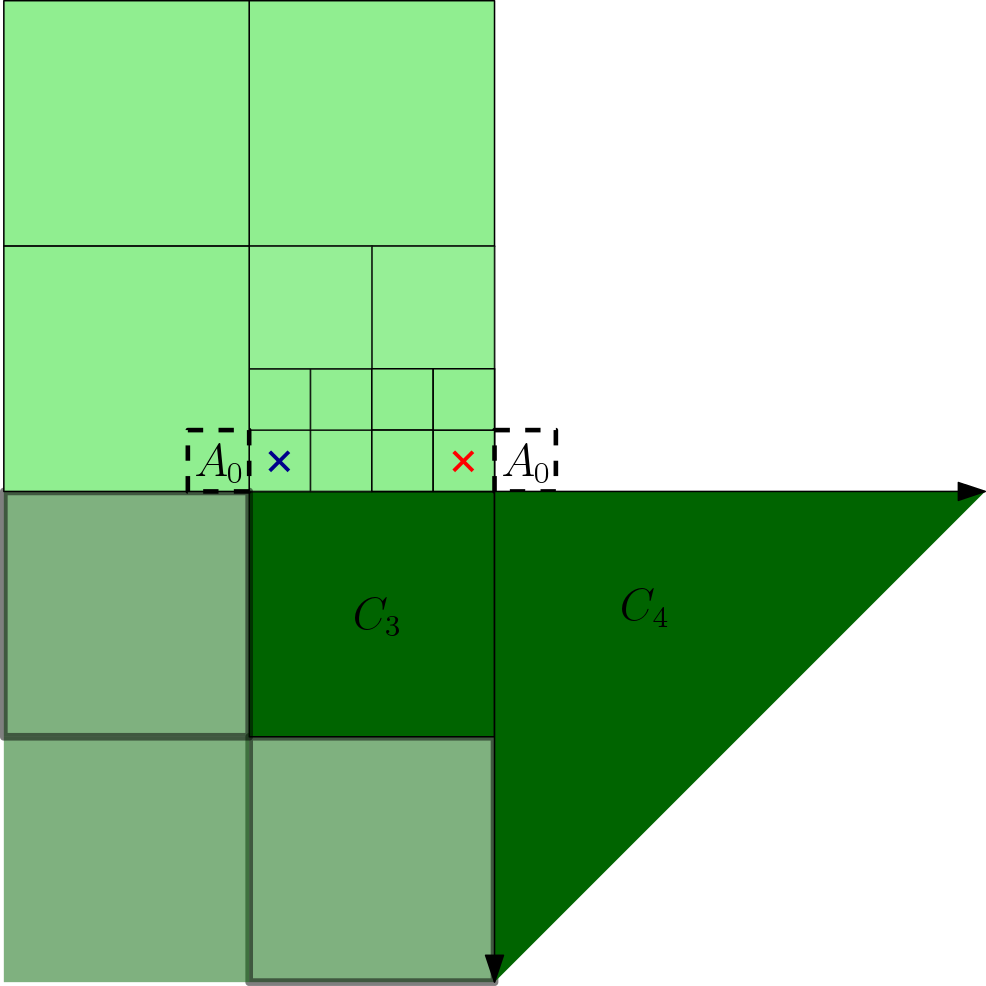}
	\captionof{figure}{Two neighboring cells with brand $3$ in a two-dimensional quadtree. The cells with brand 2 are light green and cells with brand 3 are dark green.}
	\label{fig:twodimproof}
\end{minipage}

Since $C_2$ cannot be placed in a sibling neighbor of $C_3$, $C_2$ must be placed in the positive $\vec{y}$ direction from $C_3$. $C_2$ must also share a vertex with $C_3$ so we distinguish between two cases: $C_2$ shares the top left vertex with $C_3$ or the top right.

\textbf{Case 1: The top left vertex.} In this case $C_2$ is the blue square in Figure \ref{fig:twodimproof}. $C_1$ cannot be contained in a sibling neighbor of $C_2$ so $C_1$ must lie to the left. However if $C_1$ is adjacent to $C_2$, its parent $A_0$ (the dashed lines in the figure) must also be a neighbor of $C_2$. Because we placed $C_4$ and $C_3$ without loss of generality, Figure \ref{fig:twodimproof} shows us that $A_0$ must neighbor a sibling neighbor of $C_3$ which we will denote as $F(C_3)$. We know that $|F(C_3)| = |C_3| \ge 2^1|A_0|$ and that $A_0$ has brand 1 and $F(C_3)$ has brand 3. This is a contradiction with the \emph{Branding Principle} (Lemma \ref{lemma:brandingprinciple}).

\textbf{Case 2: The top right vertex.} In this case $C_2$ is the red square in the figure. $C_1$ cannot be contained in a sibling neighbor of $C_2$ so $C_1$ must lie to the right. However, if $C_1$ is adjacent to $C_2$, its parent $A_0$ (the dashed lines in the figure) must also be a neighbor of $C_2$. Moreover (for similar reasons as the first case) $A_0$ must also be a neighbor of $C_4$. We know that $A_0$ is the ancestor of a true cell, so $A_0$ has brand 1. Moreover, $|C_4| \ge 8|A_0|$ so $C_4$ must have been split which contradicts that $C_4$ is a leaf.

Both cases lead to a contradiction so Theorem \ref{theorem:R2} is proven. The structure of this proof is equal to the structure of the proof of the generalized theorem in Section \ref{sec:dimensions}. \end{proof}

\section{Dynamic quadtrees.}
\label{sec:dynamic}

In the previous section we have shown that, given a static uncompressed quadtree $T_1$ of $\mathcal{O}(n)$ size over $\mathbb{R}^1$ or $\mathbb{R}^2$, we can create a static smooth tree $T^*$ of $\mathcal{O}(n)$ size.  In this section we prove that if $T_1$ is a dynamic tree, we can also dynamically maintain its extended variant $T^*$. 

Let $T_1$ be a quadtree over $\mathbb{R}^1$ or $\mathbb{R}^2$ subject to the split and merge operation (Table \ref{tab:operations} I.). If we use the split operator to create new true cells in $T_1$ then in $T^*$ we (possibly) need to add cells (to the set $T_2$) that smooth the new true cells. Similarly if we add cells to $T_2$ we might need to add cells to $T_3$. The first question that we ask is: can we create a new split and merge operator that takes a constant number of steps
per split and merge to maintain the extended quadtree $T^*$?\footnote{With $T^*$ as defined in Definition~\ref{def:t-star} and with $T_1 \subset T^*$.}

\begin{lemma}
\label{lemma:splitcount}
Given an uncompressed non-smooth quadtree $T_1$ in $\R^d$ of $\mathcal{O}(n)$ size and its extended tree $T^*$. Let $T_1'$ be an uncompressed non-smooth quadtree such that $T_1$ can become $T_1'$ with one merge or split operation, then $T^*$ and $T'^{*}$ differ in at most $(2d)^d$ quadtree cells. 
\end{lemma}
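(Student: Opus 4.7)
I would handle the split case only; the merge case is its inverse, and since $T^*$ is uniquely determined by $T_1$ (Lemma~\ref{lemma:unique}), the set of cells in $T^* \triangle T'^*$ for a merge equals that for the reverse split. Let $C \in T_1$ be the leaf that is split, put $s=|C|$, and let $C_1,\ldots,C_{2^d}$ be its children, the only new brand-$1$ cells; the remaining work is to bound the brand-$j$ cascade for $j \geq 2$.

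The core of the proof is a two-part \emph{locality claim}: (i) every cell in $T^* \triangle T'^*$ is contained in the axis-aligned box $B$ of side $ds$ centered on $C$; and (ii) every such cell has side length at least $s/2$. Granting these, the cells of size $\geq s/2$ inside $B$ fit in a grid of at most $(2d)^d$ hypercubes of side $s/2$, which yields the bound. For (ii), balancing only ever refines strictly larger neighbors until their size matches the required $2^j$-smoothness bound of the triggering cell, and the original trigger is the children of $C$ at size $s/2$, so no cell of strictly smaller size is ever created or destroyed by the cascade.

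I would prove (i) by induction on the brand $j$. For $j=1$ the children $C_i$ lie inside $C$. For the inductive step, the minimality clause of Definition~\ref{def:tree} and the Branding Principle (Lemma~\ref{lemma:brandingprinciple}) together ensure that a brand-$(j+1)$ cell is added or removed only when it sits inside a face-neighbor of a brand-$j$ cell that was itself modified. Hence the modified region grows by at most one $s/2$-cell-width per brand level in each axis direction, and by Lemma~\ref{lemma:treesize} the cascade terminates within $d$ brand levels, keeping all changes inside $B$.

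The main obstacle I anticipate is the inductive step of (i): one must account not only for \emph{newly added} balancing cells but also for previously present balancing cells that become redundant after the split and must be removed. Showing that these removals remain confined to face-neighborhoods of modified brand-$j$ cells, rather than occurring elsewhere in the tree, is where Lemma~\ref{lemma:branding} and the uniqueness of $T^*$ do the essential work.
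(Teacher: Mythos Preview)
Your locality claim (i) is incorrect, and with it the whole volumetric count collapses. By Lemma~\ref{lemma:equivalence} a new brand-$j$ cell has size exactly $2^{j-1}$ times the brand-$(j-1)$ cell it balances, so the new cells have sizes $s/2,\ s,\ 4s,\ 32s,\ldots$, reaching $(s/2)\cdot 2^{d(d+1)/2}$ at brand $d{+}1$. Already in $\mathbb{R}^2$ one can realise the following in $T^*$: $C$ is a brand-$1$ leaf of size $s$, with a brand-$2$ leaf neighbour $N$ of size $2s$, which in turn has a leaf neighbour $M$ of size $8s$ not adjacent to $C$. After splitting $C$, the children (size $s/2$) force a split of $N$ into brand-$2$ cells of size $s$; those in turn force a split of $M$ into brand-$3$ cells of size $4s$, all of which lie entirely outside any box of side $ds=2s$ around $C$. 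Your inductive step asserts that the modified region ``grows by at most one $s/2$-cell-width per brand level'', but what you actually establish is only that a new brand-$(j{+}1)$ cell lies inside some \emph{face-neighbour} of a modified brand-$j$ cell; that face-neighbour may be $2^{j}$ times larger, so the growth per level is multiplicative, not additive.

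The paper does not attempt any geometric containment. It counts splits directly: Lemma~\ref{lemma:treesize} gives that each cell needs at most $d$ splits to be balanced at the next brand, and iterating this over the $d$ brand levels (brands $2$ through $d{+}1$) bounds the total number of cascading splits by $O(d^d)$; since each split creates $2^d$ cells, the bound $(2d)^d$ follows. Your treatment of the merge case via uniqueness (Lemma~\ref{lemma:unique}) is exactly what the paper does.
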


\begin{proof}
In the proof of Lemma \ref{lemma:treesize} we showed that each cell in $T_j$ had at most $d$ balancing cells in $T_{j+1}$. So if we add a new cell in $T_1$ with the split operation, we need to do at most $d^d$ split operations to create the $d^d$ cells to smooth the tree up to the level $(d+1)$ \footnote{Section \ref{sec:dimensions} will show that the $(d+1)$'th level is always $2^{d+1}$-smooth}. $d^d$ split operations create at most $(2d)^d$ cells. 

Lemma \ref{lemma:unique} states that for each set of true cells $T_1$ $T^*$ is unique. So if we want to \textbf{merge} four cells in $T_1$ we must get a new unique $T_1'$ and $T'^{*}$. We know that we can go from $T'^{*}$ to $T^*$ with $d^d$ \textbf{split operations}, so we can also go from $T^*$ to $T^{*}$ with $d^d$ \textbf{merge} operations.
\end{proof}

\subsection{The algorithm that maintains $T^*$.}

Lemma \ref{lemma:splitcount} tells us that it should be possible to dynamically maintain our extended quadtree $T^*$ with $\mathcal{O}((2d)^d)$ operations per split or merge in the true tree $T_1$. The lemma does not specify which cells exactly need to be split. We note that our extended quadtree $T^*$ is unique and thus independent from the order in which we split cells in $T_1$. In Appendix~\ref{app:naive} we show that this property prohibits the naive implementation (just split all cells which conflict with another cell's smoothness): this does not maintain a quadtree that follows the definition of $T^*$ given by Definition \ref{def:tree} (and thus does not have to be smooth). Instead we introduce the following lemma which will help us design a correct algorithm for maintaining $T^*$: 

\begin{lemma}
\label{lemma:equivalence}
Given a dynamic quadtree $T_1$ and its dynamic extended quadtree $T^*$ with $T_1 \subset T^*$ where $T^*$ is defined according to Definition \ref{def:tree}. If a cell $C \in T^*$ has brand $j$ then there is at least one neighbor $N$ of $C$ such that $N$ has brand $j-1$ and $|C| = 2^{j-1}|N|$.
\end{lemma}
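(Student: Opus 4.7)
My plan is to extract the witness $N$ from the minimality clause of Definition~\ref{def:tree} by tracing the chain of splits that produced $C$. Intuitively, the size ratio $2^{j-1}$ arises because such a chain halves the cell size at each split and stops as soon as the adjacent cell size meets the $2^{j-1}$-smoothness threshold exactly.

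First I would walk up from $C$ to the highest ancestor $A$ whose brand is strictly less than $j$; such an $A$ exists because the root cell has brand $1 \le j-1$. The entire path from $A$ down to $C$ consists of cells introduced during the step-$j$ construction of $T_j$, so by minimality the split of $A$ must be justified by some brand-$(j-1)$ cell $N$ (a neighbor of $A$ or of some descendant of $A$) whose $2^{j-1}$-smoothness fails before the split; equivalently, $|A|/|N| > 2^{j-1}$. Next I would trace the chain $A = A_0, A_1, \ldots, A_k$ of successive splits along the direction of $N$: at each step we split the unique child adjacent to $N$. Because each split halves the size in every coordinate and the chain is stopped at the first child of size at most $2^{j-1}|N|$, we obtain $|A_k| = 2^{j-1}|N|$ exactly. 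This terminal cell $A_k$ lies in $T_j$, neighbors $N$, and has the required size, proving the lemma for $C = A_k$.

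The main obstacle is the general case where $C$ is not $A_k$ but is a sibling of one of the $A_i$, created only as a by-product of the quadtree split that must produce all $2^d$ children at once. To cover this case, I would reapply the minimality argument to $C$ itself: if no brand-$(j-1)$ neighbor $N'$ of $C$ satisfied $|C| = 2^{j-1}|N'|$, then $C$ together with its partner-sibling group could be removed by undoing the split of their common parent without violating the $2^{j-1}$-smoothness of any brand-$(j-1)$ cell, contradicting the minimality of $T_j$. Combining the downward chain argument with this sibling-elimination argument produces a witness $N$ for every $C \in T_j$, completing the proof.
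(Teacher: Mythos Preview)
Your downward-chain construction for the case $C=A_k$ is more elaborate than needed; the paper dispatches the whole lemma in one sentence by reading it directly off Definition~\ref{def:tree}: since $T_j$ is the minimal set of cells added so that each cell of $T_{j-1}$ becomes $2^{j-1}$-smooth, any $C\in T_j$ is present precisely because some $N\in T_{j-1}$ would otherwise fail $2^{j-1}$-smoothness, and that $N$ is declared to be the witness.

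Your recognition of the sibling case is the right instinct---it is exactly the subtlety the paper's one-liner glosses over---but your proposed resolution does not work. You argue that if a by-product sibling $C$ had no suitable brand-$(j-1)$ neighbour, one could undo the split of its parent without harming the $2^{j-1}$-smoothness of any brand-$(j-1)$ cell, contradicting minimality. But undoing that split removes \emph{all} $2^d$ children simultaneously, in particular the chain cell $A_i$ that you yourself placed on the path to $A_k$. Once $A_i$ is gone the whole sub-chain $A_i,\dots,A_k$ disappears and the original $N$ ceases to be $2^{j-1}$-smooth, so the merge is \emph{not} admissible and no contradiction with minimality follows. Closing the gap would require an independent argument that every sibling produced by such a split borders its own brand-$(j-1)$ cell of the correct size, and a moment's thought in $\mathbb{R}^2$ (split a large leaf once to balance a single small brand-$(j-1)$ cell touching only one side; look at the diagonally opposite child) shows this is not automatic: minimality of $T_j$ only forces the split to be necessary for \emph{some} child, not for every child.
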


\begin{proof}
If $C$ has brand $j$, then according to Definition \ref{def:tree}, $C$ exists to smooth a cell $N \in  T_{j-1}$. Per definition $N$ has brand $j-1$ so $|C|$ is indeed $2^{j-1}|N|$. 
\end{proof}

This observation allows us to devise an algorithm that maintains $T^*$ after a split operation in $T_1$, we call this algorithm the \textbf{aftersplit procedure}.
The first change to our static construction is that each cell $C \in T^*$ gets a collection of brands (which can contain duplicates). The current brand of a cell is the minimum of its collection of brands. Given this new definition of a brand we define a two-phased procedure: 
\begin{itemize}
\item 
Whenever we split a cell $C$ with a brand $j$, we check all the $d$ neighboring \emph{leaf} cells $N$ which are larger than $C$. If $N$ is more than a factor $2^{j-1}$ larger than $C$, the new children of $C$ are non-smooth and $N$ should be split into cells with brand $j+1$. If a neighbor $N$ is split, we also invoke the aftersplit procedure on $N$.
\item
Secondly, if for any neighboring leaf $N$ of $C$ there exists a cell $C'$ with $C'$ equal to $N$ or an ancestor of $N$ such that $C'$ is exactly a factor $2^{j-1}$ larger than $C$, $C'$ could exist to smooth the new children of $C$ so we add the brand $(j+1)$ to its set of brands. We call this \textbf{rebranding}.
\end{itemize}

\begin{algorithm}[h]
\caption{The procedure for after splitting a cell.}
\begin{algorithmic}[1]
\Procedure{AfterSplit(Cell $C$, Integer $j$)}{}
\For{Cell $N$ $\in$ LargerLeafNeighbors}
\If{$\frac{|N|}{|C|} > 2^{j-1}$}
\State $V \gets Split(N, j+1)$
\State $AfterSplit(N, j+1)$
\EndIf
\If{$\exists C' \in N\cup Ancestors(N)$ such that $\frac{|C'|}{|C|} = 2^{j-1}$}
\State $C'.Brands.Add(j+1)$
\If{Changed(C'.Brands.Minimum)}
\State $AfterSplit(C', C'.Brands.Minimum)$
\EndIf
\EndIf
\EndFor
\EndProcedure
\end{algorithmic}
\end{algorithm}

\begin{lemma}
If we split a cell with brand $j$, the aftersplit procedure performs at most $(2d)^{d-j}$ split and merge operations and the resulting extended quadtree $T^*$ implements Definition \ref{def:tree}. 
\end{lemma}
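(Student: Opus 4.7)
The plan is to prove the two halves of the claim separately, both by reverse induction on the brand $j$ (i.e.\ from $j=d+1$ down to $j=1$). For correctness, the base case is immediate: by Theorems~\ref{theorem:R1} and~\ref{theorem:R2} (with the promised generalisation in Section~\ref{sec:dimensions}), cells of brand $d+1$ are automatically $2^{d+1}$-smooth, so \textsc{AfterSplit} at brand $d+1$ has nothing to do. For the inductive step I would invoke Lemma~\ref{lemma:equivalence}: a cell of brand $j+1$ exists precisely to balance some cell of brand $j$ with half its size. This characterisation matches exactly the two branches of \textsc{AfterSplit}: the split branch fires when no cell of the correct balancing size yet exists next to $C$, and it creates one; the rebrand branch fires when such a cell already exists but was previously serving a different role, and it records the new brand in its collection. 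By the inductive hypothesis the recursive calls then correctly propagate the argument to brand $j+1$ and higher.

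For the operation bound, the key facts I will use are: (i) by the observation following Definition~\ref{def:neighbor}, the cell $C$ has at most $d$ strictly larger leaf neighbors (one per non-sibling facet direction); and (ii) the split branch and the rebrand branch are mutually exclusive per neighbor $N$. Indeed, the split branch requires $|N|>2^{j-1}|C|$, whereas the rebrand branch needs some $C'\in N \cup \mathrm{Ancestors}(N)$ with $|C'|=2^{j-1}|C|$, but every such candidate has size at least $|N|$. Per neighbor we therefore pay either one split (triggering one recursive \textsc{AfterSplit} at brand $j+1$) or zero splits plus one rebrand (also triggering one recursive call at brand $j+1$). Setting $g(j)$ to the worst-case number of split/merge operations yields the recurrence $g(j) \le d\bigl(1+g(j+1)\bigr)$ with $g(d+1)=0$, from which $g(j)\le (2d)^{d-j}$ follows by a straightforward induction.

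I expect the main obstacle to be the rebranding cascade: when adding brand $j+1$ to $C'.\text{Brands}$ changes $C'$'s effective brand, the invariant ``every cell has the correct brand'' must be re-established for $C'$ itself, via a further \textsc{AfterSplit} call at the new (strictly smaller) brand. The Branding Principle (Lemma~\ref{lemma:brandingprinciple}) will do the work of ensuring termination: each cascaded call can be charged to a strictly larger brand on its neighbors, so the recursion bottoms out at brand $d+1$. A secondary subtlety is reconciling the collection-of-brands representation used in the algorithm with the single-brand definition of Definition~\ref{def:tree}; this is handled by interpreting a cell's effective brand as the minimum of its collection and invoking Lemma~\ref{lemma:unique} level by level to confirm that the set of cells produced is exactly the unique set prescribed by the definition.
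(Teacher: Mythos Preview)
Your approach is essentially the paper's own argument, just spelled out in more detail. The paper also bounds the branching factor by~$d$ (the number of larger leaf neighbours of~$C$), observes that every recursive call---whether from the split branch or the rebrand branch---is made with brand~$j+1$, and hence bounds the recursion depth; for correctness it likewise appeals to Lemma~\ref{lemma:equivalence}. Your explicit recurrence $g(j)\le d\bigl(1+g(j+1)\bigr)$ and the mutual-exclusivity observation are welcome elaborations of exactly this reasoning rather than a different route.

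One small caution on the arithmetic: solving $g(j)\le d(1+g(j+1))$ with $g(d{+}1)=0$ gives $g(j)\le\sum_{i=1}^{d+1-j}d^{i}$, which for small values of $d-j$ can slightly exceed $(2d)^{d-j}$ (e.g.\ $g(d)\le d$ versus $(2d)^{0}=1$). The paper's own accounting is equally loose here---it obtains ``$d^{d-j}$ split operations which creates at most $(2d)^{d-j}$ cells''---so this is not a defect of your argument relative to theirs, but you may want to state the bound you actually prove rather than the one in the lemma header.
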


\begin{proof}
Observe that if we split a cell $C$ with brand $j$, $C$ has at most $d$ neighbors which are at least a factor $2^{j-1}$ larger than $C$ and a leaf in $T^*$. We can find the larger leaf neighbors with at most $d$ level pointer traversals and the neighbors of exactly $2^{j-1}$ size by first finding an ancestor of $C$ and using that ancestor's level pointers. If we \textbf{rebrand} or split one of the found neighbors, that neighbor gets a brand one greater than $j$ so the aftersplit procedure will recurse with a new $j' = j+1$. This means that we recurse at most $d-j$ times which makes the aftersplit procedure on a cell with brand $j$ perform $d^{d-j}$ split operations which creates at most $(2d)^{d-j}$ cells.

The resulting tree $T^*$ must implement Definition \ref{def:tree} because of Lemma \ref{lemma:equivalence} which shows that any neighboring cell $N$ with $|N| = 2^{j-1}|C|$ could therefore exist to smooth a child cell of $C$ in the static scenario. 
\end{proof}

The AfterMerge procedure is simply the inverse of the Aftersplit procedure. If we merge cells into a cell $C$, we check all neighbors of size $2^{j-1}|C|$ and $2^{j-2}|C|$. In the first case, we remove the brand $j$ from the cell and check if it still has to exist. In the second case, we remove the brand $j$ from the cell and check if it needs to be rebranded to a higher brand. 

\begin{theorem}
\label{theorem:dynamic}
For each dynamic compressed quadtree $T_1$ over $\mathbb{R}^d$ we can maintain a extended variant $T^*$ with at most $\mathcal{O}((2d)^d)$ operations per split or merge on $T_1$.
\end{theorem}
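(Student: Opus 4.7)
The plan is to assemble the theorem directly from the lemmas established above, since most of the work has already been done. A split or merge operation on $T_1$ is, by construction, a single primitive modification applied to a cell of brand $1$. The first step will be to invoke AfterSplit (for a split) or AfterMerge (for a merge) on the affected cell with parameter $j = 1$, so that all higher-brand cells of $T^*$ are updated to restore the recursive definition in Definition~\ref{def:tree}.

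For the running time, I will appeal directly to the aftersplit lemma stated above: it guarantees that AfterSplit invoked on a cell of brand $j$ triggers at most $(2d)^{d-j}$ split/merge operations, via a recursion that strictly increments $j$ until it exceeds $d$. Instantiating at $j = 1$ yields the bound $(2d)^{d-1}$, which is in $\mathcal{O}((2d)^d)$ as claimed. The merge case requires a short symmetric argument: by Lemma~\ref{lemma:unique} the tree $T^*$ is uniquely determined by $T_1$, so the target tree after a merge on $T_1$ is fixed; by Lemma~\ref{lemma:splitcount} it differs from the previous $T^*$ in at most $(2d)^d$ cells, and AfterMerge reaches it by performing the inverse recursion of AfterSplit within the same bound.

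Correctness is already essentially packaged by Lemma~\ref{lemma:equivalence}, which tells us that a brand-$j$ cell exists exactly to smooth a brand-$(j-1)$ neighbor of size a factor $2^{j-1}$ smaller. Combined with the uniqueness of $T^*$, this ensures that the tree produced by the two procedures coincides with the static $T^*$ built from scratch on the updated $T_1$. Hence the invariant $T_1 \subset T^*$ together with the recursive definition of the $T_j$ is preserved through every update.

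The main obstacle I expect is not algorithmic but bookkeeping on the merge side: when a single cell serves multiple smoothness roles for several brand-$(j{-}1)$ neighbors, its brand multiset must be maintained carefully so that AfterMerge does not prematurely delete a cell still needed elsewhere, and so that rebranding decrements the correct copy. Making this explicit — showing that removing a brand from the multiset exactly reverses the addition performed earlier by AfterSplit, and that the running-time bound $(2d)^{d-j}$ still applies — is the only step that needs genuine verification beyond quoting prior lemmas. Once this symmetry is in hand, the theorem follows in a few lines.
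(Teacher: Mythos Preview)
Your proposal is correct and follows essentially the same approach as the paper: the paper does not give a separate proof block for Theorem~\ref{theorem:dynamic} but simply states it as the immediate consequence of the preceding AfterSplit lemma (bounding the work by $(2d)^{d-j}$ and guaranteeing conformance with Definition~\ref{def:tree}), the description of AfterMerge as the inverse procedure, and Lemmas~\ref{lemma:unique} and~\ref{lemma:splitcount}. Your write-up just makes this assembly explicit, and your flagged bookkeeping concern about the brand multiset under AfterMerge is exactly the point the paper handles in one sentence (``remove the brand $j$ from the cell and check if it still has to exist'').
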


\section{Quadtrees in higher dimensions.}
\label{sec:dimensions}

In this section we prove that we can dynamically maintain smooth uncompressed quadtrees in $\mathbb{R}^d$. Assume that we have an uncompressed quadtree $T_1$ over $\mathbb{R}^d$. We dynamically maintain $T^*$ with the operations specified in Section \ref{sec:dynamic} in $(2d)^d$ time split or merge operation per operation on $T_1$. We claim that the resulting tree $T^*$ all uncompressed components are smooth. This claim is a direct result from the following theorem:

\begin{theorem}
\label{theorem:RD}
Let $T_1$ be  an uncompressed quadtree over $\mathbb{R}^d$ which takes $\mathcal{O}(n)$ space. In the extended tree $T^*$ there cannot be two neighboring leaf cells which we will name $C_{d+1}$, $C_{d+2}$ with both brand $(d+1)$ such that $|C_{d+1}| \le \frac{1}{2^{d+1}}|C_{d+2}|$.
\end{theorem}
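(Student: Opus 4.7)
\textbf{Proof proposal for Theorem~\ref{theorem:RD}.}

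The plan is to generalize the scheme of the proofs of Theorems~\ref{theorem:R1} and~\ref{theorem:R2}: assume for contradiction that such a pair $C_{d+1}, C_{d+2}$ exists, reconstruct a whole chain of ancestors of brands that forced $C_{d+1}$ into existence, and then show that the chain cannot be geometrically realized inside $\mathbb{R}^d$ without either forcing two of its members to be family related or forcing $C_{d+2}$ to be split. First I would build the chain. Since $C_{d+1}$ has brand $d+1$, Lemma~\ref{lemma:equivalence} supplies a neighbor $C_d$ with brand $d$ and $|C_{d+1}|=2^{d}|C_d|$; iterating, we obtain cells $C_1,C_2,\dots,C_{d+1}$ with $\mathrm{brand}(C_i)=i$, with $C_i$ a neighbor of $C_{i+1}$, and with $|C_{i+1}|=2^{i}|C_i|$. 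Together with $C_{d+2}$ this is a chain of $d+2$ cells of strictly increasing size and strictly increasing brand, and by Lemma~\ref{lemma:branding} no two of them can be family related (any ancestor relation would invert the brand/size ordering).

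Next I would prove the key geometric lemma, which is the direct analogue of the ``share a vertex'' claims in Section~\ref{sec:R2}: for every $i$, the cell $C_i$ must sit at a corner of $C_{i+1}$ in the sense that $C_i$ shares a vertex (not merely a $(d-1)$-face) with $C_{i+1}$, and moreover that vertex is the one towards which the chain $C_1,\dots,C_{i-1}$ accumulates. The argument is the same as in the 2D case: if $C_i$ were strictly in the interior of a face of $C_{i+1}$, then every neighbor of $C_i$ apart from $C_{i+1}$ would either be a sibling neighbor of $C_i$ or would sit inside a neighbor of $C_{i+1}$; tracing where $C_{i-1}$ must then live forces either a family relation with $C_i$ (forbidden by Lemma~\ref{lemma:branding}) or an ancestor of $C_{i-1}$ of size $|C_i|$ that is a neighbor of $C_{i+1}$ with $|C_{i+1}|\ge 2^{i}\cdot |\mathrm{ancestor}|$, violating the Branding Principle (Lemma~\ref{lemma:brandingprinciple}) because the ancestor has brand $1$ and $C_{i+1}$ has brand $i+1\ge 3$.

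With the corner lemma in hand, I would propagate the geometric constraints through the chain and obtain the final contradiction by a pigeonhole argument on coordinate directions. For each pair $(C_i,C_{i+1})$ let $k_i\in\{1,\dots,d\}$ be the coordinate axis normal to the facet they share. I would show that $C_{i-1}$ lies in the corner of $C_i$ opposite to the facet indexed by $k_i$, which in turn forces $k_{i-1}\neq k_i$; iterating, the directions $k_2,\dots,k_{d+2}$ must all be pairwise distinct. But this is a sequence of $d+1$ pairwise distinct coordinate indices drawn from a set of size $d$, which is impossible. Equivalently (this is the form the induction should take, matching the 2D write-up), unwinding the corner relations places the parent $A_0$ of $C_1$ (a cell of brand $1$ and size $\tfrac12 |C_1|$) as a neighbor of $C_{d+2}$ with $|C_{d+2}|\ge 2^{d+1}|A_0|$, so by the Branding Principle the leaf $C_{d+2}$ would have to be split, contradicting the assumption that it is a leaf.

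The main obstacle I expect is the corner lemma in its full generality: the 2D proof splits into two cases by hand, and in $d$ dimensions there are exponentially many corner configurations, so a case analysis would be unwieldy. I would therefore set up the corner lemma as an induction on the position of $C_i$ in the chain (or equivalently on $d-i$), so that at step $i$ only one new direction $k_i$ has to be controlled, and the inductive hypothesis supplies the position of $C_{i-1}$ relative to $C_i$. A secondary subtlety is verifying that the Branding Principle applies at every step even though the brand differences along the chain grow; this is handled by noticing that at the stage one invokes the principle the relevant ratio is always exactly $2^j$ for the brand $j$ of the smaller cell, which is the bound the principle needs.
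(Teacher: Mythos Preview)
Your overall strategy---build the chain $C_1,\dots,C_{d+1},C_{d+2}$ and argue by pigeonhole that the $d+1$ transition directions cannot fit into $d$ coordinate axes---is exactly the paper's strategy, and the chain construction via Lemma~\ref{lemma:equivalence} together with Lemma~\ref{lemma:branding} is correct. However, there is a genuine gap at the crucial step.

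From your corner lemma you deduce only that \emph{consecutive} directions differ, i.e.\ $k_{i-1}\neq k_i$. You then write ``iterating, the directions $k_2,\dots,k_{d+2}$ must all be pairwise distinct,'' but this does not follow: a sequence with $k_{i-1}\neq k_i$ for all $i$ can still repeat a direction at non-adjacent indices (e.g.\ $x,y,x,y,\dots$). Your induction sketch in the last paragraph does not close this gap either, since the inductive hypothesis you propose (``the position of $C_{i-1}$ relative to $C_i$'') is again a local, two-step statement and does not by itself forbid $k_i$ from coinciding with some $k_j$ with $j<i-1$. The alternative ending you offer---that $A_0$ ends up adjacent to $C_{d+2}$---is precisely what fails if a direction is reused: the chain can double back and $A_0$ need not meet $C_{d+2}$ at all. (Also, $A_0$ is the parent of $C_1$, so $|A_0|=2|C_1|$, not $\tfrac12|C_1|$.)

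The paper confronts exactly this difficulty and does \emph{not} rely on a local corner lemma. Instead it introduces the $\mathrm{STEP}$ operator and proves a ``parallel transport'' lemma (Lemma~\ref{lemma:equalsteps}): if two equal-size neighbors in direction $\vec y$ are stepped along the same path containing no $\vec y$-moves, the results remain neighbors in $\vec y$ or coincide. Then, assuming the chain reuses some dimension $\vec y$ at indices $j{+}1$ and $k>j{+}1$, the paper takes the ancestor $A$ of $C_j$ with $|A|=|C_{j+1}|$ and runs the intermediate path $\vec D[j{+}2,k{-}1]$ from both $A$ and $C_{j+1}$. Lemma~\ref{lemma:equalsteps} forces the two endpoints to be $\vec y$-neighbors, and a brand-count shows they cannot coincide. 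Two cases (the two $\vec y$-moves point the same way, or opposite ways) are then dispatched separately: in the first, two consecutive $\vec y$-steps force a family relation contradicting Lemma~\ref{lemma:relation}; in the second, the resulting cell $R$ is a neighbor of $C_k$ but has too small a smoothness brand to tolerate $|C_k|$. This two-case analysis on a \emph{non-consecutive} repeated dimension is the missing ingredient in your proposal; your corner lemma, even if fully proved, does not substitute for it.
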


\subsection{Necessary operations and lemmas.}

To prove Theorem \ref{theorem:RD} we need a few concepts. In the previous proof for the version for $\mathbb{R}^2$ we noted that each cell with brand $3$ needed a (possibly non-leaf) neighbor with brand $2$ that forced its existence. Similarly each cell of brand $2$ needs a neighbor with brand $1$ that forces its existence. Intuitively this creates a \textbf{chain} of cells, a concept which we can formalize:

\begin{definition}[Chain]
In our extended quadtree $T^*$, we call an ordered set $C = \{C_i \in T^* \mid i \in [a,b]\}$ a \textbf{chain} of length $(b-a) = k$ if: for each $j$, $C_j$ has brand $j$ and if for each $j \in [a,b-1]$ holds: $C_j$ neighbors $C_{j+1}$ and $|C_{j+1}| = 2^j| C_{j}|$. 
\end{definition}

\begin{lemma}
\label{lemma:relation}
Given a chain of length $k$, for any $j \le k-1$, the cells in the chain $C_j$ and $C_{j+1}$ cannot be family related.
\end{lemma}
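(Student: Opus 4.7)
The plan is to derive the lemma almost immediately from the Branding Principle's companion result, Lemma~\ref{lemma:branding}, by extracting a contradiction between the brands that the chain structure forces on $C_j$ and $C_{j+1}$.

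First I would assume for contradiction that $C_j$ and $C_{j+1}$ are family related for some $j \le k-1$. The definition of a chain gives $|C_{j+1}| = 2^j |C_j|$, and since $j \ge 1$ (chains are indexed by brands, which start at $1$), we have $|C_j| < |C_{j+1}|$ strictly. In particular $C_j$ is the smaller of the two, so the family-related condition says precisely that the parent of $C_{j+1}$ is an ancestor of $C_j$.

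Next I would invoke Lemma~\ref{lemma:branding} on this ordered pair: since $C_j, C_{j+1}$ are family related with $|C_j| \le |C_{j+1}|$, the brand of $C_{j+1}$ must be at most the brand of $C_j$. But the definition of a chain pins these brands down exactly: $C_j$ has brand $j$ and $C_{j+1}$ has brand $j+1$. Substituting, we obtain $j+1 \le j$, the desired contradiction.

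I do not expect any real obstacle here; the content of the lemma is essentially a repackaging of Lemma~\ref{lemma:branding} tailored to the chain setting. The only subtlety worth double-checking is the strict inequality of sizes (so that family-relatedness really is the one-directional ``parent of the larger is an ancestor of the smaller'' condition rather than something symmetric), and this is guaranteed by $j \ge 1$ in the chain indexing.
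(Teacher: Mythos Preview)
Your proof is correct and matches the paper's own argument essentially verbatim: assume family-relatedness, observe $|C_j| < |C_{j+1}|$, apply Lemma~\ref{lemma:branding} to force the brand of $C_{j+1}$ to be at most $j$, and contradict the chain's brand assignment. The paper phrases the key step as ``the branding principle'' but is in fact invoking Lemma~\ref{lemma:branding}, exactly as you do; your extra care in verifying the strict size inequality via $j \ge 1$ is a welcome clarification but not a departure in approach.
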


\begin{proof}
Per definition, $C_{j}$ has a smaller brand than $C_{j+1}$ and $|C_j| < |C_{j+1}|$, but the branding principle says that if $C_j$ and $C_{j+1}$ are family related, $C_{j+1}$ can have a brand of at most $j$ which is a contradiction.
\end{proof}

We intuitively prove that we can maintain our extended uncompressed quadtree $T^*$ through showing that no chain of size $d+1$ can exist. To prove this we define a virtual operation which we will call the \textbf{step operator}. This operator defines how one can traverse our uncompressed quadtree. Given a quadtree cell and an integer $j$, we can use the step operator to find a cell in each of the $2d$ \emph{cardinal directions}.

\begin{definition}[STEP operator]
Given a constant $j$, a cell $A$ which is $2^i$-smooth with $i \le j$ and a cardinal direction in $\mathbb{R}^d$ $\pm \vec{v}$, we define the STEP operator as follows: $STEP(A, j, \pm \vec{v})$, finds the unique cell in our quadtree $T^*$ of size $2^j|C|$ that is an ancestor of the unique virtual level neighbor of $A$ which shares the facet of $A$ in the direction $\pm \vec{v}$. This cell can be an ancestor or a neighbor of $A$ and does not have to be a leaf. See figure \ref{fig:stepexample} for an example.

Given an ordered sequence of vectors $\vec{D}$; $STEP(A, j, \vec{D})$ iteratively applies the step operator with the vectors of $\vec{D}$ and each time raises $j$ by one. If $|\vec{D}| = k$ we denote the result as $STEP_k(A)$. 
\end{definition}

\begin{figure}[h]
	\centering
	\includegraphics[width=300px]{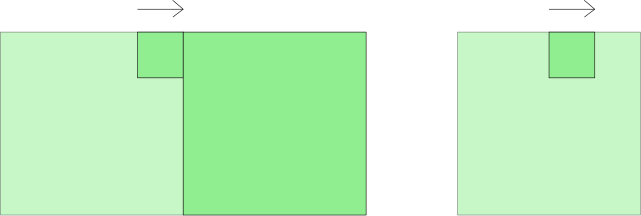}
	\caption{The step operator on a vector $+\vec{x}$, one result giving an ancestor and the other a neighboring cell.}
	\label{fig:stepexample}
\end{figure}

\begin{lemma}
\label{lemma:steps}
Given a cell $A$ which has brand $j$, we can apply the STEP operation on $A$ with $j$ as the step constant and any cardinal direction $\pm \vec{v}$ to always reach an existing cell in $T^*$. Given a chain from $C_i$ to $C_j$ with $i,j \in [d+1]$,  there must exist a collection of vectors such that we can traverse the chain using the STEP operator.
\end{lemma}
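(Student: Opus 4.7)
The plan is to dispatch the two claims in turn. For the first, I would show that a cell $A$ of brand $j$ is $2^j$-smooth in all of $T^*$ (not just in $T^{j}$) and combine this with the uncompressed structure used in this section to pinpoint a unique existing cell of size $2^j|A|$ in each cardinal direction. For the second, I would construct the required vector sequence explicitly from the chain, using Part~1 to certify each intermediate hop.

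For the first claim, Definition~\ref{def:tree} places $A$ in some $T_i$ with $i \le j$, and subsequent refinement by adding cells of higher brand can only shrink neighbor leaves, so $A$ stays $2^j$-smooth in $T^*$. Let $L$ be the leaf of $T^*$ whose interior overlaps the position of the virtual level neighbor of $A$ in direction $\pm\vec{v}$; smoothness gives $|L| \le 2^j|A|$. Either $|L| = 2^j|A|$ and $L$ itself is the sought cell, or $|L| < 2^j|A|$ and, because $T^*$ is uncompressed, $L$ has an ancestor of size exactly $2^j|A|$ (ancestors exist at every intermediate power-of-two level). Uniqueness is immediate: at any fixed size, the cells of $T^*$ at that level partition the root, so there is exactly one of them that contains a given point. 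This cell is the output of $STEP(A, j, \pm\vec{v})$.

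For the second claim, given a chain $C_i, C_{i+1}, \dots, C_j$, I would set $\vec{v}_k$ to be the cardinal direction pointing from $C_k$ toward $C_{k+1}$ for each $k \in \{i, \dots, j-1\}$. This is well defined because consecutive chain elements are neighbors and hence share part of a $(d-1)$-dimensional facet, which lies on a unique axis. Since $|C_{k+1}| = 2^k|C_k|$ and $C_{k+1}$ neighbors $C_k$ along $\vec{v}_k$, the cell $C_{k+1}$ coincides with the unique cell of size $2^k|C_k|$ containing the virtual level neighbor of $C_k$ in direction $\vec{v}_k$; by Part~1 this is precisely $STEP(C_k, k, \vec{v}_k)$. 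The step constant then auto-increments to $k+1$, which matches the brand of $C_{k+1}$, so Part~1 applies again at the next step. Iterating gives $STEP_{j-i}(C_i) = C_j$ via the vector sequence $(\vec{v}_i, \dots, \vec{v}_{j-1})$.

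I expect the main obstacle to be the Part~1 argument: nailing down that the STEP target has size \emph{exactly} $2^j|A|$ requires both the upper bound $|L| \le 2^j|A|$ coming from smoothness and the existence of ancestors at every coarser power-of-two level coming from uncompressedness. Once these two ingredients are combined, both claims of the lemma fall out without further case analysis, which also explains why boundary issues (where the virtual neighbor would leave the root cell) can be handled by the usual assumption that the root is large enough to accommodate all chains under consideration.
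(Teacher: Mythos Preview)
Your proposal is correct and follows essentially the same approach as the paper: both parts rest on the observation that a cell of brand $j$ is $2^j$-smooth in $T^*$, so the leaf in any cardinal direction has size at most $2^j|A|$ and hence admits an ancestor of exactly that size, and for Part~2 both you and the paper simply read off the cardinal directions from the neighbor relations in the chain. Your write-up is considerably more explicit than the paper's two-sentence argument (the case split on $|L|$, the uniqueness via partition, and the verification that the step constant increments correctly are all details the paper leaves implicit), but the underlying idea is identical.
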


\begin{proof}
Each of the neighboring leaf cells of $A$ are per definition at most a factor $2^{j}$ larger than $A$. So for any cardinal direction $\pm \vec{v}$ we can find a neighboring leaf of $C$ which must have an ancestor cell which is exactly $2^j$ times larger than $A$.

In a chain the successive cells are exactly a factor $2^j$ larger and neighbors so for any chain from $C_i$ to $C_j$ we must be able to find a sequence of vectors such that the chain can be traversed using the STEP operator.
\end{proof}

\begin{definition}
Given an ordered sequence of vectors $\vec{D}$ used to traverse a chain of length $k$. We denote $\vec{D}[i]$ as the vector used to go from $C_{i-1}$ to $C_i$ and $\vec{D}[i, j]$ to denote the path used to go from $C_{i-1}$ to $C_j$.
\end{definition}

The second property that we need is intuitively that the neighbor relation is preserved by the step operator. The following lemma proves that if we have two cells which are neighbors in a dimension $\vec{y}$ and we traverse a path from both cells that the resulting two cells are either neighbors in $\vec{y}$ or the same cell.

\begin{lemma}
\label{lemma:equalsteps}
Let $C_1$ and $C_2$ be two neighboring cells of equal size. neighboring in a dimension $\vec{y}$. Let $\vec{D}$ be an ordered collection of cardinal directions, none of which are in dimension $\vec{y}$.

Then $STEP(C_1, \max\{C_1.Brand, C_2.Brand\}, \vec{D})$ and  $STEP(C_2, \max\{C_1.Brand, C_2.Brand\}, \vec{D})$\footnote{We take the maximum of both brands so that the algorithm always returns an existing cell. See Lemma \ref{lemma:steps}.} either return the same cell, or two cells which must be neighboring in $\vec{y}$.
\end{lemma}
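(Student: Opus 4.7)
My plan is to prove this by induction on the length $k$ of the ordered sequence $\vec{D}$, and to exploit the geometric fact that the ``neighbor in dimension $\vec{y}$'' relation is preserved both by translations in directions orthogonal to $\vec{y}$ and by passing to ancestors in the dyadic grid underlying the quadtree.

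The base case $k=0$ is immediate: the STEP operator applied with an empty sequence returns $C_1$ and $C_2$ themselves, which are neighbors in $\vec{y}$ by hypothesis. For the inductive step, let $D_1$ and $D_2$ be the cells returned by applying the first $k-1$ vectors of $\vec{D}$ to $C_1$ and $C_2$, starting from step constant $j_0 = \max\{C_1.Brand, C_2.Brand\}$. By the inductive hypothesis, either $D_1 = D_2$ or $D_1$ and $D_2$ are neighbors in $\vec{y}$. If $D_1 = D_2$, then one more application of STEP with the same parameters trivially produces the same cell on both sides. The interesting case is when $D_1$ and $D_2$ are distinct $\vec{y}$-neighbors; note that each STEP iteration doubles the working size on both sides symmetrically, so $|D_1| = |D_2|$.

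In the remaining case I would apply one more step with vector $\vec{v} = \vec{D}[k] \neq \vec{y}$. First I examine the virtual level neighbors $N_1$ of $D_1$ and $N_2$ of $D_2$ in direction $\vec{v}$: because $\vec{v}$ is orthogonal to $\vec{y}$ and $N_i$ is obtained from $D_i$ by the same axis-aligned translation along $\vec{v}$, the cells $N_1$ and $N_2$ are equal-size cells that remain neighbors in $\vec{y}$. The STEP operator then returns the ancestor of each $N_i$ of the next larger dyadic size. I would finish by observing that on the regular dyadic grid, two equal-size cells that are neighbors in $\vec{y}$ have ancestors of double size that are either identical (when the smaller pair is contained in a common parent at that level) or are themselves neighbors in $\vec{y}$ (when they are not). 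Lemma~\ref{lemma:steps} guarantees that both of these ancestors actually exist as cells of $T^*$, so the claim transfers through the induction.

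The main obstacle is the grid-alignment bookkeeping at the ancestor step: one must make precise why rounding a cell's $\vec{y}$-coordinate to a coarser dyadic grid can only send pairs of $\vec{y}$-neighbors to pairs that are either identical or again $\vec{y}$-neighbors, never to pairs that are unrelated or separated by a gap. A short case analysis on whether the coordinate of $N_1$ along $\vec{y}$ is even or odd in the grid of size $|N_1|$ settles this, and from there the inductive step closes directly.
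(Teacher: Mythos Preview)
Your inductive argument is correct and captures the same geometric content as the paper's proof, but the packaging differs. The paper argues by contradiction: it assumes the $\vec{y}$-neighbor relation breaks, picks the first index $i$ at which this happens, and observes that at that index $STEP$ must have returned an ancestor of $STEP_{i-1}(C_1)$ but a genuine neighbor of $STEP_{i-1}(C_2)$ (or vice versa); since the two cells are equal-size and agree in every coordinate except the $\vec{y}$-coordinate, this ancestor/neighbor discrepancy would force the dyadic grid to be misaligned, which is impossible. Your version is a direct induction that decomposes each step as ``take the virtual level neighbor in direction $\vec{v}$, then pass to the appropriate ancestor,'' and checks that each of those two suboperations preserves the $\vec{y}$-neighbor-or-equal relation. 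Your decomposition makes the grid-alignment bookkeeping more explicit (the rounding analysis you sketch in the last paragraph is exactly what the paper's one-line ``grid must not be aligned'' is gesturing at), at the cost of being a bit longer.

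Two minor imprecisions worth tightening: the size does not merely double at each step---it multiplies by $2^{j_0+k-1}$ at step $k$---but your argument only needs that both sides use the \emph{same} multiplier, which they do; and in the ancestor step the target size may be several dyadic levels up, not just one, so the parity case analysis you mention should really be phrased as ``two cells whose $\vec{y}$-coordinates differ by less than $s$ round, at scale $s$, to cells whose $\vec{y}$-coordinates differ by $0$ or $s$.'' With those wording fixes your proof goes through cleanly.
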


\begin{figure}[h]
	\centering
	\includegraphics[width=150px]{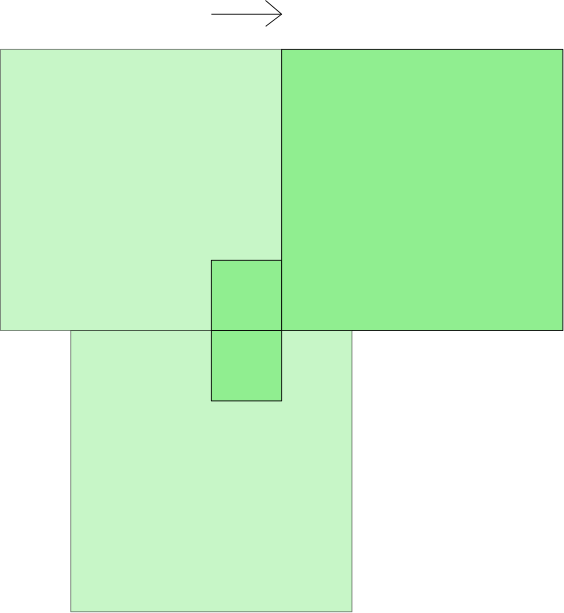}
	\caption{The step in the $\vec{x}$ direction, both resulting cells are clearly not aligned.}
	\label{fig:stepcounterexample}
\end{figure}

\begin{proof}
Assume the result cells $STEP_{|\vec{D}|}(C_1)$ and $STEP_{|\vec{D}|}(C_2)$ are not the same cell \underline{and} not neighboring in $\vec{y}$. Then there exists one lowest index $i$ for the step operator with the cardinal direction $\pm \vec{x} = \vec{D}[i]$ for which $STEP_{i}(C_1)$ is an ancestor of $STEP_{i-1}(C_1)$ whilst $STEP_{i}(C_2)$ is a neighbor of $STEP_{i-1}(C_2)$. However $C_1$ and $C_2$ are equally large and they start with the same step factor $j$. Which means that in all iterations the cells are equally large. If for one $i$ the STEP operator returns an ancestor cell in one iteration and an neighboring cell in the other, the quadtree grid must not be aligned. This argument is illustrated by Figure \ref{fig:stepexample}.
\end{proof}

\subsection{The proof.}

These properties, together with Lemma  \ref{lemma:unique}, \ref{lemma:branding} and \ref{lemma:brandingprinciple} from Section \ref{sec:static} allow us to prove Theorem \ref{theorem:RD}. In this subsection we prove Theorem \ref{theorem:RD} by proving the lemma below. The proof follows an identical structure to the proof of Theorem \ref{theorem:R2} in Section \ref{sec:static}.

\begin{lemma}
\label{lemma:nocascade}
The cell $C_{d+1}$ from Theorem \ref{theorem:RD} can only exist because of a chain of $d$ cells which we will denote $C_1 ... C_d$. We can only embed $C_{d+2}$, $C_{d+1}$ and the chain such that we can go from $C_1$ to $C_{d+2}$ using the step operation with a path $D$ with contains exactly one arrow per dimension.
\end{lemma}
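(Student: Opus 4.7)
The plan is to extend the structure of the proof of Theorem~\ref{theorem:R2} to arbitrary dimension, using the STEP machinery (Lemmas~\ref{lemma:steps} and~\ref{lemma:equalsteps}) to encode the spatial configuration of the chain plus $C_{d+1}$ and $C_{d+2}$ as a combinatorial path. I would first derive the chain by iterated application of Lemma~\ref{lemma:equivalence}: since $C_{d+1}$ has brand $d+1$, there is a neighbour $C_d$ of brand $d$ with $|C_{d+1}|=2^d|C_d|$; applying the lemma to $C_d$ yields $C_{d-1}$, and so on down to $C_1$ of brand~$1$. The resulting sequence $C_1,\ldots,C_d$ is a chain in the sense of our definition, and Lemma~\ref{lemma:relation} ensures that no two of its cells are family-related.

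Next, by Lemma~\ref{lemma:steps} I would realise the traversal $C_1\to C_2\to\cdots\to C_d\to C_{d+1}\to C_{d+2}$ as a sequence of STEP applications with cardinal direction vectors $\vec{v}_1,\ldots,\vec{v}_{d+1}$, assembling a path $\vec{D}$. The technical core of the proof is to show that no two of these vectors lie in the same dimension, whereupon the lemma's conclusion follows. I would argue by contradiction: if $\vec{v}_a$ and $\vec{v}_b$ ($a<b$) both lie in dimension~$\vec{y}$, applying Lemma~\ref{lemma:equalsteps} to the sub-path $\vec{v}_{a+1},\ldots,\vec{v}_{b-1}$ starting at $C_a$ and at its sibling neighbour in~$\vec{y}$ transports the grid alignment forward along the path; the resulting alignment around $\vec{v}_b$ exposes one of the two sub-cases of Theorem~\ref{theorem:R2} (depending on whether $\vec{v}_b$ agrees with or opposes $\vec{v}_a$), and the configuration forces either (i) two cells of unequal brand to become family-related, violating Lemma~\ref{lemma:branding}; (ii) an ancestor of a chain cell to become a neighbour of a much larger cell of a lower brand, violating the Branding Principle (Lemma~\ref{lemma:brandingprinciple}); or (iii) $C_{d+2}$ to be strictly subdivided, contradicting its being a leaf.

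The main obstacle is the book-keeping of this case analysis: whereas $\R^2$ admits only one pair of arrows and two geometric sub-cases, in $\R^d$ the repetition could occur anywhere along a path of length $d+1$. My plan is to reduce to the earliest repeated dimension and use Lemma~\ref{lemma:equalsteps} to propagate the conflict to the immediate neighbourhood of $C_{d+1}$ and $C_{d+2}$, so that the local picture becomes combinatorially identical to one of the two cases handled in the proof of Theorem~\ref{theorem:R2}. Once no dimension can repeat, $\vec{D}$ contains exactly one arrow per dimension, which is the statement of the lemma; and since $d+1>d$, this simultaneously yields Theorem~\ref{theorem:RD}.
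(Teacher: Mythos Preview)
Your plan is the paper's strategy: build the chain via Lemma~\ref{lemma:equivalence}, encode it as a STEP path, assume some dimension repeats, and split into same-direction and opposite-direction cases driven by Lemma~\ref{lemma:equalsteps}. Two details in your sketch, however, do not match what actually makes the argument work.

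First, the companion cell you feed into Lemma~\ref{lemma:equalsteps} is not the \emph{sibling} neighbour of $C_a$ in~$\vec{y}$ but the \emph{ancestor} $A$ of the smaller chain cell with $|A|=|C_{j+1}|$ (paper's indexing: $\vec{D}[j+1]=\pm\vec{y}$). Since $C_j$ and $C_{j+1}$ are not family related (Lemma~\ref{lemma:relation}), $A$ is a genuine $\vec{y}$-neighbour of $C_{j+1}$ of equal size, and by Lemma~\ref{lemma:branding} it carries brand at most~$j$. It is this brand bound on $A$, transported along the intermediate sub-path, that produces a cell $R$ too well-branded to coexist with $C_{k-1}$ or $C_k$; a sibling neighbour of $C_a$ gives you no such control.

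Second, the paper does not take the earliest repetition but the \emph{highest} index $j+1$ whose dimension recurs at some later $k$; this is precisely what guarantees $\vec{y}\notin\vec{D}[j+2,k-1]$, the hypothesis of Lemma~\ref{lemma:equalsteps}. The contradiction is then obtained locally at $C_{k-1},C_k$ --- a forced family relation contradicting Lemma~\ref{lemma:relation} in the same-direction case, and a size/smoothness violation ($R$ is $2^{k-3}$-smooth yet would neighbour the much larger $C_k$) in the opposite-direction case --- rather than by propagating anything out to $C_{d+1},C_{d+2}$.
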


\begin{corollary}
Given Lemma \ref{lemma:nocascade}, we know that we can never have $C_{d+1}$ neighboring to $C_{d+2}$ because we must place them in a path that uses $d+1$ dimensions, whilst in $\mathbb{R}^d$ we only have $d$ dimensions. This proves Theorem \ref{theorem:RD}.
\end{corollary}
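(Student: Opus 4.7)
The plan is to take Lemma \ref{lemma:nocascade} as a black box and derive Theorem \ref{theorem:RD} by a straightforward pigeonhole argument on dimensions. Assume for contradiction that $T^*$ does contain two neighboring leaves $C_{d+1}$ and $C_{d+2}$ with both brand $d+1$ and $|C_{d+1}| \le 2^{-(d+1)}|C_{d+2}|$. By Lemma \ref{lemma:nocascade}, the existence of such a pair forces a companion chain $C_1,C_2,\dots,C_d$ of lengths doubling at each step, together with the two large cells $C_{d+1}$ and $C_{d+2}$; moreover the entire configuration $C_1,\dots,C_{d+2}$ can be realised only through a sequence of STEP-operations along an ordered collection of cardinal vectors $\vec D$ that contains exactly one vector per coordinate axis.

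Now I would count. The path described by $\vec D$ must take us successively from $C_1$ through every intermediate cell of the chain up to $C_{d+2}$, so it uses $d+1$ STEP-operations: one to pass from $C_i$ to $C_{i+1}$ for each $i \in \{1,\dots,d+1\}$. Each of these STEP-operations is performed in some cardinal direction $\pm \vec{e}_k$ in $\R^d$, and by Lemma \ref{lemma:nocascade} no axis $\vec{e}_k$ may be used more than once in $\vec D$. Thus $\vec D$ yields an injection from a $(d+1)$-element set of indices into the $d$-element set of coordinate axes of $\R^d$. This is impossible by the pigeonhole principle, contradicting the assumed existence of $C_{d+1}$ and $C_{d+2}$.

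The main obstacle in this step is essentially bookkeeping: verifying that the length of the path is indeed $d+1$ (not $d$ or $d+2$) and that Lemma \ref{lemma:nocascade} really asserts the no-repetition-of-axes property for the full path, rather than only for the sub-chain $C_1,\dots,C_d$. Once this is confirmed the contradiction is immediate, so no further technical work beyond a careful re-reading of the lemma is needed. The conclusion is exactly the statement of Theorem \ref{theorem:RD}, which together with the results of Section \ref{sec:dynamic} shows that the extended tree $T^*$ maintained by the dynamic procedure in $\R^d$ always has $2^{d+1}$-smooth top-brand cells, and hence that all uncompressed components of $T^*$ are smooth.
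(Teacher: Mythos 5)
Your proposal is correct and matches the paper's own argument: the paper likewise reads Lemma~\ref{lemma:nocascade} as forcing the full configuration $C_1,\dots,C_{d+2}$ onto a STEP-path with exactly one cardinal direction per dimension, and concludes by the same pigeonhole count that a path of $d+1$ steps cannot use $d+1$ distinct axes in $\mathbb{R}^d$. Your bookkeeping of the path length ($d+2$ cells, hence $d+1$ transitions) agrees with the paper's count.
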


The proof of Lemma \ref{lemma:nocascade} is a proof by contradiction. Given $C_{d+2}$ we want to place $C_{d+1}$ to $C_1$ in the plane in decreasing order. This placement implies an ordered set of cardinal vectors $\vec{D}$. Let $j+1$ be the highest index for which $\vec{D}[j+1]$ is a cardinal vector in a dimension $\vec{y}$ which is also used at a higher index. Let $k$ the higher index where we used of $\vec{y}$ the first time. In our earlier defined notation, we must have:

\begin{align*}
C_k &= STEP(C_{k-1}, \textit{ } k-1,\textit{ } D[k] = \pm \vec{y}) \\
C_{j+1} &= STEP(C_{j},\textit{ } j,\textit{ } D[j+1]  \pm \vec{y})
\end{align*}

We then distinguish between two cases: $C_{j+1}$ lies with respect to $C_j$ in the same dimension \underline{and} direction as $C_k$ with respect to  $C_{k-1}$  \textbf{or} in the same dimension \underline{and opposite direction}. These two cases correspond with the two cases in the proof of Theorem \ref{theorem:R2} in Section \ref{sec:R2}.

\begin{figure}[h]
	\centering
	\includegraphics[width=300px]{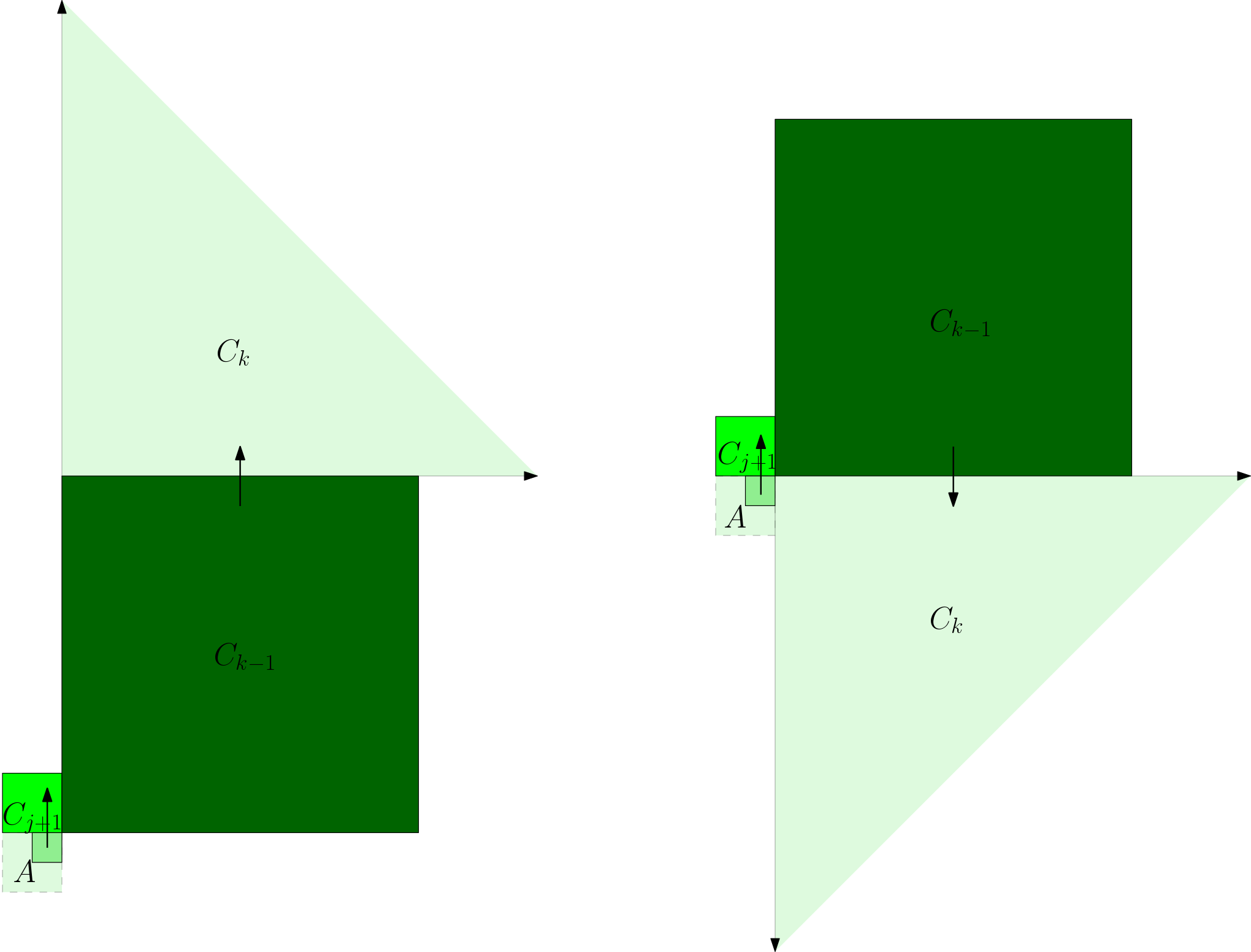}
	\caption{The two cases for the proof of Lemma \ref{lemma:nocascade}. On the left we see the cell $C_j$ in light green, $C_{j+1}$ in the $+\vec{y}$ direction in green, $C_{k-1}$ in the $+\vec{x}$ direction in dark green and $C_k$ in the $+\vec{y}$ direction. On the right we see the second case which is similar, but $C_k$ is in the $-\vec{y}$ direction. In both cases the cell $A$ of the proof is shown with dotted lines.}
	\label{fig:finalproof}
\end{figure}

\subsubsection{Case 1: The same direction.}

Let 
\begin{align*}
C_{j+1} &= STEP(C_{j}, j, + \vec{y}) \\
C_k &= STEP(C_{k-1}, k-1, +\vec{y}) \textit{ and W.l.o.g: }\\
C_{k-1} &= STEP(C_{k-2}, k-2, +\vec{x}) \\
\end{align*}

We can then draw $C_j, C_{j+1}$ and $C_k$ in the $(\vec{x}, \vec{y})$ plane (see Figure \ref{fig:finalproof}). Let $A$ be the ancestor of $C_j$ with $|A| = |C_{j+1}| = 2^j |C_j|$. Observe that the branding principle states that $A$ can be at most $2^j$-smooth.

\begin{lemma}
We know that there is a path from $C_j$ to $C_k$ that takes $k-j$ STEPS: That path consists out of $\vec{D}[j+1,k] = \vec{y} \rightarrow \vec{D}[j+2, k-2]  \rightarrow \vec{x} \rightarrow \vec{y}$ (see Figure \ref{fig:finalproof}). Our claim is that we can use $k-j-2$ STEPS from $A$ to find a cell $R$ which is \textbf{family related} to $C_{k-1}$.
\end{lemma}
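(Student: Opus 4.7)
My plan is to lift the STEP sequence that carries $C_{j+1}$ to $C_{k-1}$ over to the alternative starting point $A$, and then identify the resulting cell $R$ as a sibling of $C_{k-1}$ via a grid-alignment argument. First I would pin down the relation between $A$ and $C_{j+1}$: by construction $A$ is an ancestor of $C_j$ with $|A|=|C_{j+1}|=2^{j}|C_{j}|$, so Lemma~\ref{lemma:branding} forces $\mathrm{brand}(A)\le j<j+1=\mathrm{brand}(C_{j+1})$ and in particular $A\ne C_{j+1}$. If $C_j$ did not sit on the $+\vec{y}$ face of $A$, the virtual $+\vec{y}$ level-neighbor of $C_j$ would lie strictly inside $A$, forcing its size-$|A|$ ancestor $C_{j+1}$ to equal $A$. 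Hence $C_j$ sits on the $+\vec{y}$ face of $A$, and $C_{j+1}$, which rounds $C_j$'s horizontal coordinates onto the $|A|$-grid in the dimensions other than $\vec{y}$, must coincide with the equal-sized $+\vec{y}$ neighbor of $A$.

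Next I would run the STEP sequence with vectors $\vec{D}[j+2],\ldots,\vec{D}[k-1]$ and constants $j+1,\ldots,k-2$ starting from $A$; Lemma~\ref{lemma:steps} guarantees every intermediate cell exists, and $|R|=|C_{k-1}|$ follows from the size accounting. By the maximality of $j$ (the largest index below $k$ whose arrow lies in the dimension $\vec{y}$), none of these vectors lies in $\vec{y}$, so Lemma~\ref{lemma:equalsteps} applied inductively to the twin tracks rooted at the equal-sized $\vec{y}$-neighbor pair $(A,C_{j+1})$ yields, at every stage, that the two tracks either coincide or stay equal-sized $\vec{y}$-neighbors with the $A$-side never jumping above its partner (once a STEP returns a common ancestor the two tracks merge for the remainder, and while they are still distinct the grid-alignment argument inside the proof of Lemma~\ref{lemma:equalsteps} prevents one side from flipping). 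Consequently $R$ is either $C_{k-1}$ itself or the equal-sized $-\vec{y}$ neighbor of $C_{k-1}$.

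Finally I would establish the family relation; the case $R=C_{k-1}$ is trivial, so assume $R$ is the $-\vec{y}$ neighbor. Because $C_k$ is a leaf in $+\vec{y}$ from $C_{k-1}$ with $|C_k|=2^{k-1}|C_{k-1}|\ge 2|C_{k-1}|$, the $\vec{y}$-coordinate of $C_{k-1}$'s top face coincides with $C_k$'s bottom face and therefore lies on the quadtree grid of spacing $|C_k|$, hence on every coarser grid of spacing $2^i|C_{k-1}|$ for $1\le i\le k-1$. In particular, the parent of $C_{k-1}$ (of size $2|C_{k-1}|$) has its top face at exactly that $\vec{y}$-coordinate, which places $C_{k-1}$ in the upper $\vec{y}$ half of its parent. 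The $-\vec{y}$ sibling of $C_{k-1}$ inside that parent is then a genuine quadtree cell of size $|C_{k-1}|$ located exactly where $R$ sits, and uniqueness of quadtree cells at a given position and size forces $R$ to be that sibling; hence $R$ is family related to $C_{k-1}$. The main obstacle I anticipate is the side-tracking induction in the middle paragraph, since Lemma~\ref{lemma:equalsteps} only asserts $\vec{y}$-adjacency without specifying on which side the twins sit; patching this calls for a direct per-STEP case split on whether STEP returns an ancestor or a neighbor, using the virtual-level-neighbor definition, but everything else is a clean composition of Lemmas~\ref{lemma:branding}, \ref{lemma:steps}, and~\ref{lemma:equalsteps} with the standard grid-alignment observation.
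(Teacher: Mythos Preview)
Your argument is correct and takes a somewhat different route than the paper. Both proofs start the same way: establish that $A$ and $C_{j+1}$ are equal-sized $\vec{y}$-neighbours and invoke Lemma~\ref{lemma:equalsteps} along $\vec D[j+2,k-1]$ to conclude that $R$ either equals $C_{k-1}$ or is its equal-sized $\vec{y}$-neighbour. From there the paper \emph{excludes} $R=C_{k-1}$ by brand-counting (after $k-j-2$ steps from a cell of brand $\le j$ the result has brand $\le k-2$, whereas $C_{k-1}$ has brand $k-1$), and then---in the paragraph following the formal proof---uses the observation ``two consecutive STEPs in the same direction force one of them to return an ancestor'' on the sequence $R\to C_{k-1}\to C_k$ to deduce that either $(R,C_{k-1})$ or $(C_{k-1},C_k)$ is family related, the latter being ruled out by Lemma~\ref{lemma:relation}. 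You instead keep the case $R=C_{k-1}$ (trivially family related) and, for the $-\vec{y}$-neighbour case, argue directly from grid alignment: $C_{k-1}$'s top face coincides with $C_k$'s bottom face and hence lies on the $2|C_{k-1}|$-grid, forcing $C_{k-1}$ into the upper $\vec{y}$-half of its parent, so $R$ is literally its sibling. This is more elementary---it avoids both the brand bound and the two-step observation, and proves the lemma as stated without deferring work to the surrounding text. The side-preservation issue you flag is genuine but your proposed patch works: at each STEP in a non-$\vec{y}$ direction, either the shared face $y_0+|A|$ lies on the coarser $\vec{y}$-grid (and the two tracks remain $\vec{y}$-adjacent with the $A$-track below) or it does not (and the tracks merge), so the $A$-track never flips above the $C_{j+1}$-track.
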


\begin{proof}
The path with $k-j-2$ steps is given by  $\vec{D}[j+2, k-2]$. Observe that $A$ and $C_{j+1}$ are equally sized and neighboring in $\vec{y}$ and that $\vec{y} \notin \vec{D}[j+2,k-1]$. If we perform the STEP operator on $A$ and $C_{j+1}$ with the path $\vec{D}[j+2,k-1]$ Lemma \ref{lemma:equalsteps} states that the result cells are either the same cell or two cells neighboring in $\vec{y}$. The result cells cannot be the same cell: we know that for $C_j$ the result cell is $C_{k-1}$ and that cell must be $2^{k-1}$-smooth. Noting that the result cell $R$ from the path from $A$ can be at most $2^{j + (k-j-2)} = 2^{k-2}$-smooth gives a contradiction.
\end{proof}

So from $A$ we get a cell $R$ which neighbors $C_{k-1}$ in the $\vec{y}$ dimension. We know that after two steps $\vec{y} \rightarrow \vec{y}$, the result cell $R'$ must contain $C_k$. Now observe that if you take two consecutive STEPS in the same direction, one of the two steps must yield an ancestor. This means that either $R$ and $C_{k-1}$ are family related,\ or $C_{k-1}$ and $C_k$ are family related. But $R$, $C_{k-1}$ and $C_k$ have brand $(k-2)$, $(k-1)$ and $k$ respectively so Lemma  \ref{lemma:relation} tells us they cannot be family related.

\subsubsection{Case 2: The opposite direction.}

Let $A$ be the ancestor of $C_j$ such that $|A| = |C_{j+1}| = 2^j |C_j|$. Observe that the branding principle states that $A$ can be at most $2^j$-smooth. Moreover let the cardinal directions that use the $\vec{y}$ dimension twice point in the opposite direction, formally: $\vec{D}[j+1, k] = \vec{y} \rightarrow \vec{D}[j+2, k-2]  \rightarrow \vec{x} \rightarrow -\vec{y}$.

\begin{lemma}
We can use $k-j-3$ STEPS from $A$ to find a cell $R$ which is \textbf{neighboring} to $C_{k}$.
\end{lemma}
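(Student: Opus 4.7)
The plan is to apply the STEP operator $k-j-3$ times from $A$ along the sub-path $\vec{D}[j+2, k-2]$, compare the result $R$ with $C_{k-2}=\mathrm{STEP}(C_{j+1}, j+1, \vec{D}[j+2, k-2])$, and then use the aligned-grid structure forced by the non-family constraints along the chain to conclude that $R$ and $C_k$ share a $(d-1)$-facet.

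First I would check that none of $\vec{D}[j+2], \ldots, \vec{D}[k-2]$ lies in dimension $\vec{y}$: if some intermediate index $i$ had $\vec{D}[i]$ in $\vec{y}$, then $i$ itself would be an index whose dimension is reused at the higher index $k$, contradicting the maximality of $j+1$. Next, $A$ is the ancestor of $C_j$ of size $|C_{j+1}|$ (which exists because $C_j$ is $2^j$-smooth) and $A\neq C_{j+1}$ because $C_j$ and $C_{j+1}$ cannot be family related by Lemma~\ref{lemma:relation}; so $A$ and $C_{j+1}$ are equal-sized $\vec{y}$-neighbors with $A$ on the $-\vec{y}$ side. Applying Lemma~\ref{lemma:equalsteps} along the $\vec{y}$-free path $\vec{D}[j+2,k-2]$, the cells $R$ and $C_{k-2}$ have equal size $|C_{k-2}|$ and are either equal or $\vec{y}$-neighbors. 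I would rule out $R = C_{k-2}$ by the brand/smoothness bound of Case~1: iterating the branding principle $k-j-3$ times starting from a brand-at-most-$j$ cell guarantees only that $R$ is $2^{k-3}$-smooth, whereas $C_{k-2}$ is $2^{k-2}$-smooth. The direction-preservation implicit in Lemma~\ref{lemma:equalsteps} then places $R$ on the $-\vec{y}$ side of $C_{k-2}$.

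Finally I would use the non-family constraints along the edges $(C_{k-2}, C_{k-1})$ and $(C_{k-1}, C_k)$ to align the $(d-1)$-facets. Because $C_{k-1}$ and $C_k$ cannot be family related, $C_{k-1}$ must lie flush against the $+\vec{y}$-face of the aligned size-$|C_k|$ block containing its virtual $-\vec{y}$-neighbor, so the $+\vec{y}$-facet of $C_k$ sits at the same $\vec{y}$-coordinate as the $-\vec{y}$-facet of $C_{k-1}$; similarly the constraint between $C_{k-2}$ and $C_{k-1}$ places $C_{k-2}$'s $-\vec{y}$-facet at that same height, and hence $R$'s $+\vec{y}$-facet lies flush against $C_k$'s $+\vec{y}$-facet. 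Any sub-case in which $R$'s $\vec{x}$-projection would make $R$ strictly contained in $C_k$ is excluded by Lemma~\ref{lemma:branding}, since $R$ has brand at most $k-3$ while $C_k$ has brand $k$, contradicting the requirement that the larger of two family-related cells have the smaller brand. So $R$ and $C_k$ are interior-disjoint and share a $(d-1)$-facet, i.e., they are neighbors.

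The hard part will be the last alignment step: one has to verify that the $(d-1)$-facets of $R$ and $C_k$ truly coincide under the aligned-grid structure inherited from the non-family constraints, and then use Lemma~\ref{lemma:branding} to exclude the containment sub-case cleanly for arbitrary $d$; everything else is a direct reuse of Lemma~\ref{lemma:equalsteps} and the iterated branding principle from Case~1.
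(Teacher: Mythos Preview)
Your approach diverges from the paper's at the key geometric step, and the divergence creates a genuine gap. You stop after $k-j-3$ steps at $R$, which sits on the $-\vec{y}$ side of $C_{k-2}$, and then try to argue directly that $R$ shares a facet with $C_k$ via an ``alignment'' claim: that the non-family constraint between $C_{k-2}$ and $C_{k-1}$ forces $C_{k-2}$'s $-\vec{y}$-facet to sit at the same height as $C_{k-1}$'s $-\vec{y}$-facet. But $C_{k-2}$ and $C_{k-1}$ are $\vec{x}$-neighbours; the non-family relation between them constrains where $C_{k-2}$ sits in the $\vec{x}$ direction (it must lie outside the parent of $C_{k-1}$), not in the $\vec{y}$ direction. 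Nothing prevents $C_{k-2}$ from sitting near the top of $C_{k-1}$'s $\vec{y}$-extent, in which case $R$ (directly below $C_{k-2}$) has its entire $\vec{y}$-range strictly above $C_k$'s top face and touches neither $C_k$ nor its boundary. Your containment-exclusion via Lemma~\ref{lemma:branding} is correct but only rules out $R\subset C_k$; it does not manufacture a shared facet when the cells are simply disjoint.

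The paper avoids this by going \emph{one step further}: it applies the longer path $\vec{D}[j+2,k-1]$ of length $k-j-2$ from $A$ to obtain a cell $R'$ that is the equal-sized $-\vec{y}$-neighbour of $C_{k-1}$, and hence is contained in $C_k$ (since $C_k$ is the large cell directly below $C_{k-1}$). Then it uses Lemma~\ref{lemma:relation} to observe that along the chain no STEP returns an ancestor, so by the grid-alignment argument underlying Lemma~\ref{lemma:equalsteps} the parallel step $R\to R'$ in direction $+\vec{x}$ is also a neighbour-step, not an ancestor-step. Now one has $R'\subset C_k$, $R$ an $\vec{x}$-neighbour of $R'$, and (by exactly your brand argument) $R\not\subset C_k$; these three facts together force the shared facet between $R$ and $R'$ to lie on the boundary of $C_k$, so $R$ neighbours $C_k$. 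In short: your brand-based exclusion of containment is the right ingredient, but it only bites once you know $R'\subset C_k$, which requires taking the extra step.
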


\begin{proof}
The path is given by $\vec{D}[j+2, k-2]$.  Observe that $A$ and $C_{j+1}$ are equally sized and neighboring in $\vec{y}$. If we perform the STEP operator on both with the path of size $(k-j-2)$: $\vec{D}[j+2, k-1]$, Lemma \ref{lemma:equalsteps} tells us that the result cells are either the same cell, or two cells neighboring in $\vec{y}$. The result cells cannot be the same cell: We know that for $C_j$, the result cell is $C_{k-1}$ and that cell must be $2^{k-1}$-smooth. However, the result cell $R$ from the path from $A$ can be at most $2^{j + (k-j-2)} = 2^{k-2}$-smooth.

So the path $\vec{D}[j+2,k-1]$ from $A$ yields a cell $R'$ which is a bottom neighbor of $C_{k-1}$ But we know that $C_k$ is below $C_{k-1}$ and neighboring to $C_{k-1}$ so our result cell $R$ must overlap with $C_{k-1}$.

We know because of Lemma \ref{lemma:relation}, that none of the STEPS in $\vec{D}$ can return an ancestor (because we have a chain, we must have an increasing brand but the branding principle then prohibits us finding an ancestor). This means that if we traverse only $\vec{D}[j+2, k-2]$ from $A$ without the last vector $+\vec{x}$, we find a cell $R$ neighboring to $C_k$.
\end{proof}

However, now we find a contradiction that proves Lemma \ref{lemma:nocascade} since $C_k$ is too large to be neighboring $R$: 

\begin{align*}
&|R| = |STEP(A, j, \vec{D}[j+3, k-2])| \le  \left( \prod_{i=j}^{< j + (k-j-3) } 2^i \right) \cdot |A| = \left( \prod_{i=j}^{< k-3} 2^i \right) \cdot 2^j \cdot |C_j| \\
&|C_k| = \left( \prod_{i=j}^{< j+ (k-j)} 2^i \right) \cdot |C_j| = 2^{k-1} \cdot 2^{k-2} \cdot 2^{k-3} \cdot  \left( \prod_{i=j}^{<k-3} 2^i \right) \cdot |C_j| \\
&\textit{Observing that R is } 2^{k-3}-smooth \textit{ implies that } R \textit{ can be next to a cell of maximally } \\
&2^{k-3} \cdot |R| = 2^{k-3} \cdot 2^j \cdot \left( \prod_{i=j}^{k-3} 2^i \right) \cdot |C_j| < |C_k| \qed
\end{align*}

\section{Compression.}
\label{sec:compression}

In this section and the next, our goal is to build smooth quadtrees that store a set of points and are dynamic with respect to insertions and deletions of points from the set.
The first new challenge this presents is that
we want the quadtree to have a size linear in the number of points and so we need compression. 
Compression subdivides the $\mathcal{O}(n)$ true cells $T_1$ into a set of \textbf{uncompressed components} $\mathcal{A}_1$ and simple paths linking uncompressed components.

Static smooth compressed quadtrees are a well-studied topic and they are defined in [link]. However, just as in Section \ref{sec:static} we do not make use of the standard definition of smooth quadtrees but define our own extended quadtree $T^*$ so that we can make the tree dynamic. In this section we first define what we mean with $2^j$-smooth cells in a compressed quadtree and then we formally define our extended quadtree $T^*$. Afterwards we show that we can dynamically maintain our extended quadtree with time depending only on the dimension $d$ per operation in the true compressed tree $T_1$.

In a compressed quadtree $T_1$, we only store the uncompressed components $\mathcal{A}_1$ and pointers for the compressed links. Our extended quadtree $T^*$ also consists out of a set of uncompressed components denoted by $\mathcal{A}^*$ and compressed links. Since cells in compressed links are not stored in memory, we only want to balance cells in uncompressed components.
\begin{definition}
Given any quadtree $T$, a cell $C$ in an uncompressed component of $T$ is $2^j$-smooth if for all (possibly compressed) neighboring cells $N$ in $T$ holds $|N| \le 2^j|C|$.
\end{definition}
\begin{observation}
\label{observation:unbalanced}
An a cell $C$ in an uncompressed component $A$ his brand can be violated with respect to a compressed or uncompressed cell in either its uncompressed component $A$ or another compressed component. The latter case can only happen if $C$ intersects the border of the root of $A$ (see Figure \ref{fig:compressionbalance}).
\end{observation}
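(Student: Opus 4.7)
The plan is to prove the contrapositive: if $C$ does not intersect $\partial R_A$, where $R_A$ denotes the root cell of $A$, then every neighbor of $C$ in the global quadtree $T^*$ already belongs to $A$, so no cell in another component can contribute to a smoothness violation at $C$. The argument is purely geometric and relies on the hierarchical nesting property of quadtree cells together with the fact that each cell of $A$ is contained in $R_A$.

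First, I would fix an arbitrary neighbor $N$ of $C$ and let $F$ be the shared $(d-1)$-facet. Since $C \subseteq R_A$ and $C$ does not touch $\partial R_A$, the facet $F$ lies strictly in the interior of $R_A$, so $N$ intersects $\mathrm{int}(R_A)$. By the standard quadtree nesting dichotomy, $N$ is either contained in $R_A$ or properly contains $R_A$; the second option is ruled out because $N$ is interior-disjoint from $C \subseteq R_A$. Hence $N \subseteq R_A$, which already eliminates every cell that sits outside of $R_A$, i.e., every cell of an ancestor- or sibling-type component.

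Next, I would check that no cell of a \emph{descendant} component can be a neighbor of such an interior $C$. A descendant component is reached through a compressed link from some leaf $C_a$ of $A$, and its root is strictly nested inside $C_a$ with size at most $|C_a|/\alpha$. Therefore any cell of a descendant component lies strictly inside the region of some compressed leaf $C_a$ of $A$, so its only possible facet-neighbor from $A$ would be $C_a$ itself. But for smoothness inside $A$ we treat $C_a$ as the leaf of $A$ of size $|C_a|$, so the influence of the descendant is already captured by $C_a$, which lies in $A$.

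The only delicate point is the second step, where one must be careful about what counts as a ``neighbor'' across a compressed link; once the convention that the compressed leaf $C_a$ acts as the relevant boundary cell of $A$ is fixed, the observation reduces to the elementary nesting argument of the first step. The figure referenced in the statement is then just an illustration of the remaining ``latter case'', where a cell of $A$ touching $\partial R_A$ genuinely has a neighbor outside $R_A$ that lives in another component.
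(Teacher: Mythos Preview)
The paper does not prove this observation; it is stated as self-evident with Figure~\ref{fig:compressionbalance} as the only justification. Your write-up therefore already supplies more than the paper does, and your first step---the nesting dichotomy showing that if $C$ misses $\partial R_A$ then every facet-neighbor of $C$ must lie inside $R_A$---is clean and is exactly the geometric content the observation is meant to convey.

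Your second step contains a slip. A cell $N$ of a descendant component is \emph{contained} in the compressed node $C_a\in A$, so $C_a$ cannot be a facet-neighbor of $N$ (they are not interior-disjoint); and if the root of that descendant component happens to touch $\partial C_a$, then $N$ can genuinely share a facet with $C$ itself, not just with $C_a$. What actually disposes of this case is the size comparison you only hint at in your last paragraph: any such $N$ satisfies $|N|\le |C_a|/\alpha < |C_a|$, and $C$ also neighbors $C_a$, so whenever $N$ is large enough to violate $C$'s brand the cell $C_a\in A$ (which the paper treats as the leaf of $A$ on that side) already witnesses the same violation inside $A$. Replace the ``only possible facet-neighbor from $A$ would be $C_a$'' sentence with this inequality and your argument goes through.
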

\begin{figure}[H]
	\centering
	\includegraphics[width=300px]{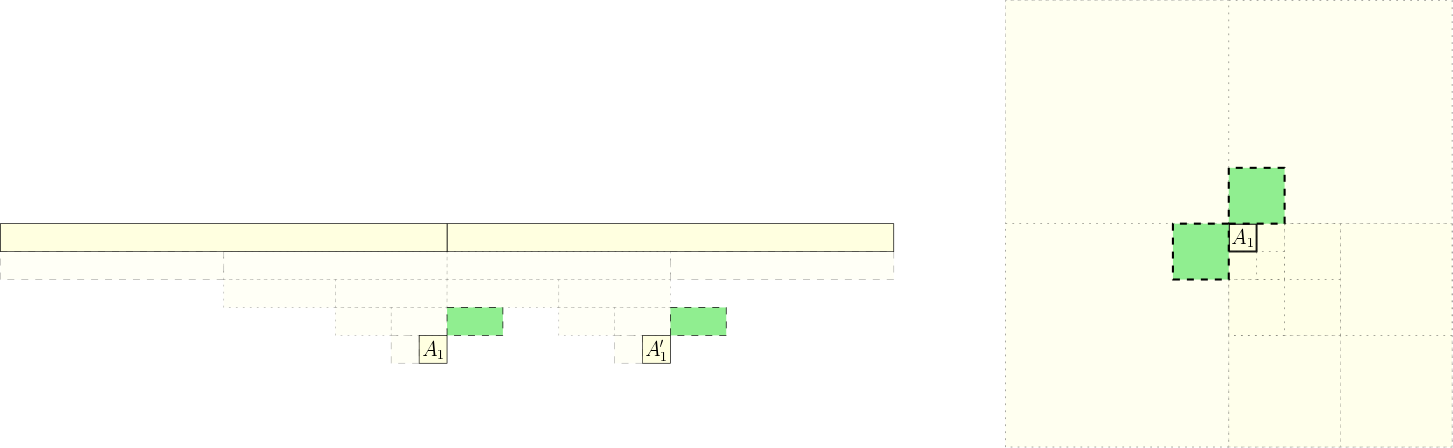}
	\caption{(left) Two roots of uncompressed components in $\mathbb{R}^1$, $A_1, A_1' \in \mathcal{A}$ in white with brand 1. Observe that their left neighbor is a sibling neighbor and that their right neighbor (though compressed) is more than a factor 2 larger than the root. To balance the roots we would need to insert the green cells with brand 2 (later we define these cells as supporting roots).  (right) A similar situation in $\mathbb{R}^2$.}
	\label{fig:compressionbalance}
\end{figure}

Recall that we defined our extended quadtree $T^*$ in Section \ref{sec:static} with Definition \ref{def:tree}. We had $(d+1)$ different sets of cells where each set $T_j$ was defined as the minimal number of cells needed to balance the cells in $T_{j-1}$. We showed that if we have two uncompressed quadtrees $T_1$ and $T_1'$ that differ in only one split/merge operations then their extended quadtrees $T^*$ and $T'^*$ only differ in at most $(2d)^d$ cells. However, if $T_1$ and $T_1'$ are uncompressed quadtrees and if we use Definition \ref{def:tree}, the reader can imagine that there are scenarios where their extended quadtrees $T^*$ and $T'^*$ differ in an arbitrarily large number of cells (see Section \ref{sec:compression}). This is why instead we redefine our extended quadtree as the union of all extended quadtrees $A^*$ of all uncompressed components $A_1$ of $T$:

\begin{definition}
\label{definition:T-star-final}
For each uncompressed component $A_1$ of $T_1$ we maintain its own extended quadtree $A^*$ according to definition \ref{def:tree} and we define the smooth extended quadtree of $T_1$ as $T^* := \cup_{A_1 \in \mathcal{A}_1} A^*$.\footnote{$A^*$ contains what we call the supporting roots of the root of $A_1$ see Section \ref{sec:supporting}.} 
\end{definition}

Note that we defined that cells only can violate their brand if they are cells in an uncompressed component and that thus because of Theorem \ref{theorem:RD} in Section \ref{sec:dimensions} the extended quadtree $T^*$ is a smooth quadtree. In the remainder of this section we will show how we can dynamically maintain $T^*$ with at most $\log(\alpha) \cdot d^2 \cdot (6d)^{d}$ operations per operation in the true tree $T_1$.

\subsection {Supporting roots.}
\label{sec:supporting}

Observation \ref{observation:unbalanced} showed that a cell $C$ in an uncompressed component $A$ can violate its brand because of a neighbor in either $A$ or another uncompressed component.  In Section~\ref{sec:applications}
 we showed that the latter case of unbalance must be resolved if we want level pointers to exist. If we want to balance a cell $C$ with respect to cells not in its compressed component we might need several compressed roots in our extended quadtree $T^*$ that contain cells that smooth the quadtree.

\begin{definition}
Let $R$ be a root cell with brand $j$ of an uncompressed component $A$ in our extended quadtree $T^*$. Let $R'$ be the root cell of an uncompressed component $A'$, such $|R'|>|R|$.

If $R'$ contains a cell $C'$ that exists to balance a cell contained in $A$ \emph{or} if $R'$ contains a cell $C'$ that exists to balance a cell that balances a cell in a supporting root of $R$ then we say that $R'$ is a \textbf{supporting root} of $R$ (see Figure \ref{fig:supportingroot} for an example).
\end{definition}

\begin{figure}[H]
	\centering
	\includegraphics[width=150px]{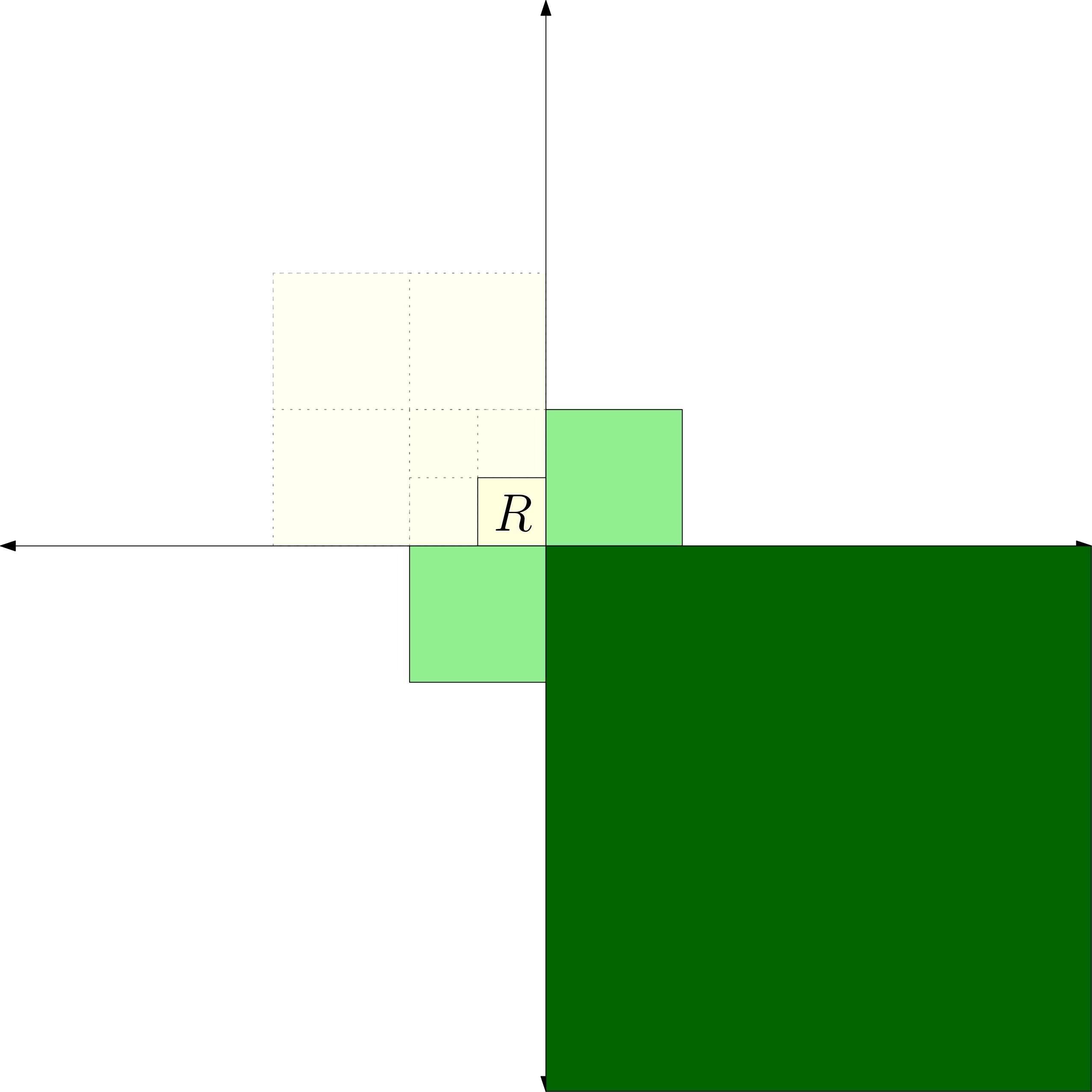}
	\caption{A root of a true cell $R$ and its three supporting roots. The light green cells have brand 2 and exist to balance $R$ itself. The dark green cell has brand 2 and exists to balance the other supporting roots of $R$. Observe that the sibling neighbors of $R$ could also be its supporting roots.}
	\label{fig:supportingroot}
\end{figure}

\begin{observation}
Let $R$ be a root cell with brand $j$ of an uncompressed component $A$ in our extended quadtree $T^*$. All the cells that balance descendants of $R$ are either contained in $R$ or contained in a supporting root of $R$.
\end{observation}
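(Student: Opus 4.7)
The plan is to prove the observation by unpacking the definition of supporting root and performing a case analysis on the component containing the balancing cell. Let $C' \in T^*$ be any cell that balances some descendant $C$ of $R$. By Definition~\ref{definition:T-star-final}, which assembles $T^*$ from the extended components of $T_1$, the cell $C'$ lies in a unique uncompressed component $A'$ of $T^*$ with root $R'$, so $C' \subseteq R'$.

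If $A' = A$, then $R' = R$ and $C' \subseteq R$, matching the first alternative of the observation. Otherwise $A' \neq A$, and it suffices to show that $R'$ is a supporting root of $R$; then $C' \subseteq R'$ falls under the second alternative. By the first clause of the supporting-root definition, this requires (i) $|R'| > |R|$ and (ii) that $R'$ contains a cell balancing a cell of $A$. Condition (ii) is witnessed by $C'$ itself, since $C' \subseteq R'$ and $C'$ balances $C \in A$ by assumption.

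For condition (i), I use that quadtree cells in $T^*$ are pairwise nested or disjoint. Because $R' \neq R$, the options are $R' \supsetneq R$, $R' \subsetneq R$, or $R' \cap R = \emptyset$. The first case immediately gives $|R'| > |R|$. I rule out $R' \subsetneq R$ as follows: any component of $T^*$ strictly inside $R$ but distinct from $A$ must be nested within a leaf $L$ of $A$, reached from $L$ via a compressed link, so $C' \subseteq R' \subsetneq L$. But then the only cell of $A$ that $C'$ could conceivably share a face with is $L$ itself, and $C' \subsetneq L$ shares interior with $L$ rather than a face, contradicting the hypothesis that $C'$ balances a cell of $A$. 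In the remaining disjoint case, $C'$ must share a face with $C$ lying on $\partial R$, and since the parent of $R$ in $T_1$ is a compressed node (so $R$ has no explicit same-size sibling in $T_1$), the cell $C'$ can only have been produced by refining an ancestor of $R$ strictly larger than $R$, placing $R' \supseteq C'$ in a component with $|R'| > |R|$.

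The main obstacle is precisely this last disjoint subcase, where I need to exclude $R'$ being a disjoint cell of size at most $|R|$ adjacent to $R$. I expect to handle it by appealing to the compressed structure of $T_1$ around $R$: the balancing procedure can only place cells outside $R$ by refining a strictly larger ancestor of $R$ reached through the compressed link above $R$, and this forces the containing component's root to satisfy $|R'| > |R|$.
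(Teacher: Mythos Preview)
Your approach is different from the paper's, and the disjoint subcase you flag as the ``main obstacle'' is indeed a genuine gap. The claim that ``$C'$ can only have been produced by refining an ancestor of $R$ strictly larger than $R$'' does not follow from the parent of $R$ being compressed: in $T^*$, supporting roots are \emph{inserted} as fresh compressed components adjacent to $R$, not obtained by subdividing an existing ancestor of $R$. So you cannot conclude $|R'|>|R|$ from the structure of $T_1$ alone; you would need to argue about how the balancing procedure actually places $C'$, which is exactly what you have not yet done. Your case analysis on the relative position of $R'$ and $R$ is also somewhat circular, since whether $R'$ exists at all (and with what size) is determined by the very balancing process whose output you are trying to locate.

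The paper bypasses this entirely with a one-line cascading argument. Suppose $C'$ balances a descendant $C$ of $R$ and $C'\not\subseteq R$. Then $C$ is non-smooth with respect to a neighbor lying outside $R$, which means $C$ touches $\partial R$ and the offending large neighbor is also a neighbor of every ancestor of $C$ inside $R$; hence every such ancestor violates its brand as well. Consequently the balancing cell $C'$ must itself have an ancestor of size at least $|R|$, and that ancestor is (by definition) a supporting root of $R$. This avoids any case split on $R'$ versus $R$: it works directly from the geometry of the smoothness violation and the recursive definition of supporting root, and it is what you should use to close the gap.
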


\begin{proof}
Let $C$ be a descendant of $R$ with a brand $i \ge 1$ such that $C$ violates its brand. Then the cell $C'$ with brand $i+1$ which we need to balance $C$ is either contained in $R$ or not. If $C$ is non-smooth with respect to a cell that is not contained in $R$, all the ancestors of $C$ also violate their brand and $C'$ needs an ancestor of at least size $|R|$ which is then a supporting root.
\end{proof}

\begin{lemma}
\label{lemma:consecutive}
 Let $A$, $B$ and $C$ be three equally sized cells with $A$ a neighbor of $B$ in any direction $+\vec{x}$ and $B$ of $C$ in $+\vec{x}$. Then if we consecutively split $A$ we might need to split $B$ for balance but never $C$. 
\end{lemma}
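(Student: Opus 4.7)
My plan is to trace the cascade of \textsc{AfterSplit} calls triggered by consecutively splitting $A$ and its descendants, and to show that no such call ever identifies $C$ as a larger leaf neighbor in need of refinement. I will fix $A$, $B$, $C$ in a row along $+\vec{x}$, so that $A$ and $C$ sit on opposite sides of $B$. Throughout the argument I will lean on the fact that \textsc{AfterSplit} only refines \emph{strictly} larger leaf neighbors via the strict test $|N|/|X|>2^{j-1}$.

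First I would observe that the initial split of $A$ itself does not touch $B$: since $|A|=|B|$, $B$ is not a strictly larger neighbor, so \textsc{AfterSplit}$(A,\cdot)$ does nothing along the $A$--$B$ facet. Hence $B$ can only be forced to split by a strictly smaller brand-$1$ descendant of $A$ lying against that facet, in which case $B$ is split with brand $2$. The resulting call \textsc{AfterSplit}$(B,2)$ then checks $|C|/|B|>2$; this fails because $|C|=|B|$, so $C$ is not triggered. For the remainder of the cascade I would maintain the invariant that every leaf cell $X$ along the $B$--$C$ facet satisfies $|X|\cdot 2^{\mathrm{brand}(X)-1}\ge |C|$; given this invariant, any future \textsc{AfterSplit}$(X,\mathrm{brand}(X))$ call with $X$ on that facet sees a ratio $|C|/|X|\le 2^{\mathrm{brand}(X)-1}$ that fails the strict aftersplit test, so $C$ is never triggered. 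The invariant is preserved across cascade steps because a triggered split increments the brand by one while halving the size, so the quantity $|X|\cdot 2^{\mathrm{brand}(X)-1}$ is carried along each branch of the cascade; the Branding Principle (Lemma~\ref{lemma:brandingprinciple}) combined with Lemma~\ref{lemma:branding} guarantees that every cell landing along the $B$--$C$ facet traces back through such a chain to $B$ itself, where the invariant holds with equality.

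The step I expect to be the main obstacle is verifying this invariant in dimension $d\ge 2$, where the cascade can propagate sideways through $B$'s descendants in directions orthogonal to $\vec{x}$ before touching the $B$--$C$ facet. Each sideways move costs one brand increment for exactly one size halving, so the invariant's size-vs-brand budget is preserved; however, one must carefully rule out any routing of the cascade that would break the budget, which I would do using the chain-length bookkeeping of Section~\ref{sec:dimensions}. Once the invariant is in place, the strict inequality in the aftersplit test closes the proof.
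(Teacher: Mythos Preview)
Your route is far heavier than the paper's, and the invariant step does not go through as written.

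The paper's proof is a two-sentence geometric observation with no reference to brands or to \textsc{AfterSplit}: once $B$ is split, the half of its children that neighbour $A$ are precisely the $-\vec{x}$-side children, and those cannot neighbour $C$. Only these $A$-adjacent children can ever be forced to refine further (since the sole source of small neighbours is $A$), and the same observation applies recursively inside them. Hence every leaf of $B$ lying on the $B$--$C$ facet stays at size $|B|/2$, and $C$ is never forced to split. No cascade tracing, no brand bookkeeping, and certainly no chain machinery from Section~\ref{sec:dimensions} is needed; the lemma is really just the statement that the two opposite facets of a hypercube are disjoint.

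Your invariant argument, by contrast, asserts that ``a triggered split increments the brand by one while halving the size, so the quantity $|X|\cdot 2^{\mathrm{brand}(X)-1}$ is carried along''. This conflates two different brands. In the cascade, when a cell $Y$ with parameter $j$ triggers the split of a larger leaf $X$, the children of $X$ receive brand $j{+}1$, \emph{not} $\mathrm{brand}(X){+}1$. So for a leaf $X$ on the $B$--$C$ facet with prior brand $b$, your invariant after the split becomes $\tfrac{|X|}{2}\cdot 2^{j}$, and this dominates the old value $|X|\cdot 2^{b-1}$ only if $j\ge b$. The Branding Principle gives only $b\le j{+}1$, which is one short of what you need, so the invariant can in principle drop by a factor of $2$ at each step. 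Your proposed patch via the chain bookkeeping of Section~\ref{sec:dimensions} would amount to redeveloping that entire theory inside $B$, when the actual content of the lemma is that the $+\vec{x}$-side children of $B$ are simply never split at all---a fact the paper's geometric argument delivers immediately and which makes any invariant on the $B$--$C$ facet vacuous beyond depth~$1$.
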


\begin{proof}
If we split $B$ to maintain balance with $A$, only half of its children neighbor $A$ and those children cannot neighbor $C$. So only cells which do not neighbor $C$ can be further split to maintain balance with $A$ which means that $C$ will never be split to maintain balance with cells in $B$.
\end{proof}

\begin{corollary}
\label{corollary:perimitersize}
Let $R$ be the root of an uncompressed component $A_1$ in the true tree $T_1$. Then there can be at most $3^d$ supporting roots of $R$. 
\end{corollary}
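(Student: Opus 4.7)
The plan is to injectively associate each supporting root of $R$ with one of $3^d$ axis-aligned hypercubes of size $|R|$ arranged in a $3 \times 3 \times \cdots \times 3$ grid centered on $R$; call this grid $N^+(R)$ and its $3^d$ cells its \emph{slots}. Since uncompressed components have pairwise disjoint interiors, such an injection immediately yields the bound $3^d$.

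To build the injection I would split the supporting roots of $R$ into two types depending on whether $R'$ contains $R$ or is disjoint from $R$. In the first case, $R'$ is reached from $R$ via a chain of compressed links and is therefore an ancestor component; I associate $R'$ with the central slot. Only the immediate parent component can carry a balancing cell whose certifying neighbor lies inside $R$, so this happens for at most one $R'$. In the second case, $R'$ is disjoint from $R$ and I argue that $R'$ covers at least one of the $3^d-1$ non-central slots. For a direct (type-(i)) supporting root, the certifying cell $C' \subseteq R'$ is by definition a neighbor of a cell inside $R$, so $C'$ shares a facet with $R$ from outside and overlaps one of the non-central slots. For an indirect (type-(ii)) supporting root, I would induct along the chain of supporting roots that defines $R'$, applying Lemma \ref{lemma:consecutive} at each step to conclude that the balancing cell never drifts beyond one slot from $R$ at the scale $|R|$. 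Because distinct supporting roots are interior-disjoint, the subsets of non-central slots they cover are disjoint; this yields at most $3^d-1$ such $R'$, and together with the at-most-one ancestor supporting root the total is at most $3^d$.

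The main obstacle I anticipate is the indirect case: a chain of balancing climbs through up to $d+1$ brand levels with cell sizes growing geometrically by factors of $2$, so one must check that the one-step restriction in Lemma \ref{lemma:consecutive} survives passage through these different scales. Careful accounting of which balancing cell in the chain sits in which slot of $N^+(R)$, together with the bound of at most $d+1$ brands established in Sections~\ref{sec:static}--\ref{sec:dimensions}, should close this gap and complete the injection argument.
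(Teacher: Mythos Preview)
Your approach is essentially the paper's: both arguments place every supporting root inside the $3\times\cdots\times 3$ block of cells of size $|R|$ centred at $R$, invoke Lemma~\ref{lemma:consecutive} to stop the balancing cascade from escaping that block, and then use interior-disjointness of uncompressed components to cap the count at $3^d$. The paper's proof is just the terse version of what you wrote.

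Two minor remarks. First, your case split into ``$R'$ contains $R$'' versus ``$R'$ disjoint from $R$'' is not needed: in the paper's setup supporting roots are by construction roots of \emph{new} uncompressed components created to hold balancing cells adjacent to $R$ (see Figure~\ref{fig:supportingroot}), so they sit beside $R$, not above it. You can drop the central-slot case. Second, the obstacle you flag about the brand chain growing through several scales is real but is handled exactly as the paper does it: Lemma~\ref{lemma:consecutive} is applied at the fixed scale $|R|$, with $A=R$ and $B,C$ the two cells of size $|R|$ in any given direction; the lemma already absorbs all the recursive splitting of $A$ (and hence all brands) into its hypothesis ``consecutively split $A$'', so no separate multi-scale induction is required.
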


\begin{proof}
$R$ has at most $3^d$ cells in its perimeter with size $|R|$. Note that all cells that balance cells in $A_1$ must have a supporting root that neighbors $R$. It now follows from Lemma \ref{lemma:consecutive} that all the cells that balance the cells that balance cells in $R$ must be contained in the perimeter of $R$ and we can thus have at most $3^d$ supporting roots of $R$.
\end{proof}

In Section \ref{sec:preliminaries} we explained that a compressed quadtree was subject to five operations: \textbf{merge and split} within an uncompressed component, \textbf{insertion and deletion} of a compressed root and \textbf{upgrowing}. In Section \ref{sec:dynamic} we showed that after the first two operations on $T_1$ we can maintain $T^*$ in $\mathcal{O}((2d)^d)$ time per operation. In this section we show that we can maintain $T^*$ in $\mathcal{O}(3^d(2d)^d)$ time per operation on $T_1$. With Corollary \ref{corollary:perimitersize} we show in the following sections that we can maintain our extended quadtree $T^*$ with only five operations: merge and split operations in uncompressed components, insertions and deletions of a compressed root and \textbf{rebranding}. 

\subsection{Dynamically maintaining $T^*$.}

Assume we want to add a point $p$ to our point set $P$. We explained in Section \ref{sec:preliminaries} that we are given a pointer to the leaf cell $C$ that contains $p$. We dynamically maintain $T^*$ and update $T_1$ and $T^*$ by differentiating between two cases:
$C$ is either a leaf of an uncompressed component or $C$ is the parent of a compressed root $R$ and $p$ lies in $C \setminus R$ (The \textbf{doughnut} of $C$):

\subsubsection{Case 1: $C$ is a leaf.}
If $C$ is a \textbf{leaf} of an uncompressed component then either it does not contain a point yet, in which case we just store the new point there, or it does contain a point and we need to split the leaf, which leads to possibly a new compressed link. In the first case, we can restore $T^*$ in $\mathcal{O}((2d)^d)$ time. In the second case we add a new root $R$ to $T_1$ in constant time. In $T^*$ we would also need to add the at most $3^d$ supporting roots in its perimeter. 

Since all uncompressed components take $\mathcal{O}((2d)^dn)$ space in total, $T^*$ will take at most $\mathcal{O}(d(2d)^dn)$ space (Lemma \ref{lemma:treesize}). The only problem with this definition is that Lemma \ref{lemma:branding} (which said that family related cells have a related brand) does not have to hold anymore: in our extended quadtree $T^*$, we can have cells which have child cells with a lower brand. That lemma is crucial for the proof of Theorem \ref{theorem:RD} which is why we add the following operations:
\begin{itemize}
\item The \textbf{afterInsertCompressed} procedure is invoked right after we insert a root of a compressed component with brand $j$. The procedure first checks if the compressed root has actual stored ancestors within $d^2$ depth from the root. These ancestors always must be balancing cells with a higher brand than the root else the root did not have to be inserted with a compressed link. We rebrand all actual stored ancestors within a depth $d^2$ from the root with the brand $j$. The procedure will not add any cells of brand $j$ if the ancestors are compressed and takes at most $\mathcal{O}(d^2 \cdot (2d)^d)$ time per inserted root.
\item The \textbf{afterDeleteCompressed} procedure is invoked right after we delete a root of a compressed component. The procedure checks at most $d^2$ actual stored ancestors of the deleted root and checks whether or not they can return to their original brand. This also takes $\mathcal{O}(d^2 (2d)^d)$ time per deleted root.
\end{itemize}
This together with Corollary \ref{corollary:perimitersize} guarantees that we can insert a root $R$ of a new compressed component in a leaf in $T_1$ in constant time and update our extended quadtree $T^*$ with $\mathcal{O}(3^d d^2(2d)^d)$ operations such that $T^*$ is $2^{d+1}$-smooth.

\subsubsection{Case 2: $C$ is a doughnut.}
\label{sec:decomp}

\begin{figure}[H]
	\centering
	\includegraphics[width=300px]{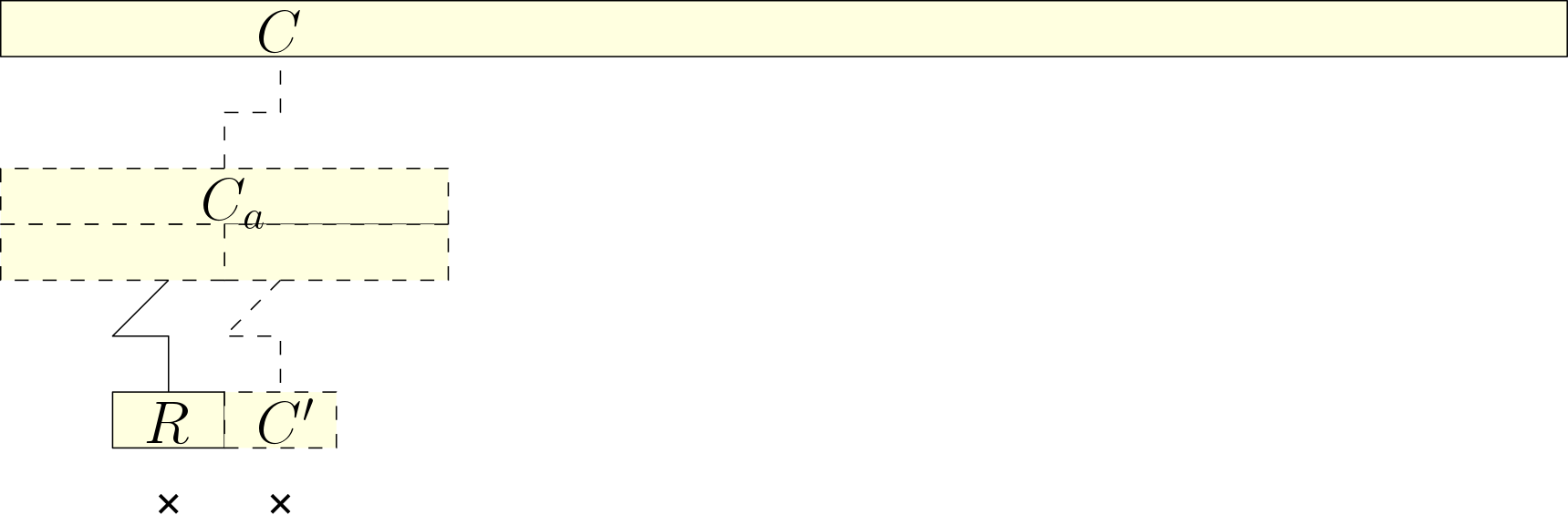}
	\caption{A common ancestor $C_a$ with a compressed link to the original cell $C$ which has two child cells with each a compressed link to the original root $R$ and the new leaf $C'$.}
	\label{fig:compressedlink}
\end{figure}

Let the point $p$ be contained in the doughnut which has a corresponding root of an uncompressed component $R$. Let $C'$ be the leaf cell that we want to insert in the quadtree and $C_a$ be the first common ancestor of $R$ and $C'$ (See Figure \ref{fig:compressedlink}. If $C$ is a doughnut, we make three case distinctions:

If $C_a$ is more than $\log(\alpha)$ levels of depth away from $R$, $C'$ and $Parent(C)$, we get three compressed links instead of only one, each with their own supporting roots and we do $3 \cdot d^2 \cdot (6d)^d$ work.

If $C_a$ is less than $\log(\alpha)$ levels of depth away from $R$, $C'$ and $Parent(C)$, we need to \textbf{decompress} the compressed link and possibly the compressed links of the supporting roots of $R$ in $\mathcal{O}(\alpha 6^d)$ time.

If $C_a$ is less than $\log(\alpha)$ levels of depth away from $R$ and $C'$ but more than $\log(\alpha)$ levels of depth away from $Parent(C)$ then $C_a$ needs to be connected to both $R$ and $C'$ and $C_a$ becomes the new root of the uncompressed component. In this last case we perform the \textbf{upgrowing} operation at most $\log(\alpha)$ times. A single upgrowing operation involves adding a single cell, adding at most $3^d$ supporting cells and \textbf{rebranding} at most $d^d$ cells per new root. So inserting a cell into a root takes at most $\log(\alpha)d^d(6d)^d$ time.

\begin{theorem}
\label{theorem:changes}
For each dynamic compressed quadtree $T_1$ over $\mathbb{R}^d$ we can maintain a smooth variant $T^*$ with at most:

\begin{itemize}
\item $\mathcal{O}((2d)^d)$ operations per split or merge on $T_1$.
\item $\mathcal{O}(\log(\alpha) d^2 (6d)^{d+1})$  operations per deletion or insertion of a compressed leaf.
\item $\mathcal{O}(d^2(6d)^d)$ time per upgrowing.
\end{itemize}
\end{theorem}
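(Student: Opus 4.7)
The plan is to prove each of the three bullets by combining the dynamic maintenance algorithm of Section~\ref{sec:dynamic} with the structural results about supporting roots established earlier in Section~\ref{sec:compression}. Throughout, I will treat the updated object as the disjoint union $T^*=\bigcup_{A_1\in\mathcal{A}_1} A^*$ from Definition~\ref{definition:T-star-final}, so that every local modification is charged to at most a constant number of uncompressed components.

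For the first bullet, I would observe that a split or a merge inside a single uncompressed component $A_1$ leaves the set of uncompressed components $\mathcal{A}_1$ unchanged and, in particular, does not alter which roots are supporting roots of which components. Therefore the only quadtree that changes is the extended tree $A^*$ of the affected component, and Theorem~\ref{theorem:dynamic} applies verbatim: the aftersplit/aftermerge procedures restore $A^*$ in $\mathcal{O}((2d)^d)$ operations. The small subtlety is that when the split/merge happens near the boundary of $A_1$ the operations might touch a supporting root, but the aftersplit procedure uses only STEP-style level navigation and so this still costs $\mathcal{O}((2d)^d)$ in total.

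For the second bullet, I would recall the case analysis of Section~\ref{sec:decomp}. Inserting a new compressed leaf creates at most one new uncompressed component with a new root $R$; by Corollary~\ref{corollary:perimitersize} the root $R$ has at most $3^d$ supporting roots, and each supporting root requires (i) insertion into $\mathcal{A}^*$ in constant time, (ii) an invocation of the aftersplit procedure at cost $\mathcal{O}((2d)^d)$ from Theorem~\ref{theorem:dynamic}, and (iii) the afterInsertCompressed procedure, which rebrands up to $d^2$ actually stored ancestors and costs $\mathcal{O}(d^2(2d)^d)$. Combining gives $\mathcal{O}(3^d\cdot d^2\cdot (2d)^d)=\mathcal{O}(d^2(6d)^d)$ per insertion. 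If in addition the first common ancestor $C_a$ lies within $\log(\alpha)$ levels of $R$ and the new leaf, the decompression of Section~\ref{sec:decomp} may have to be performed at $\log(\alpha)$ of these supporting roots, which multiplies the bound by $\log(\alpha)\cdot (6d)$, matching the claimed $\mathcal{O}(\log(\alpha)d^2(6d)^{d+1})$. Deletion is handled symmetrically via afterDeleteCompressed.

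For the third bullet, upgrowing adds one new root cell $R'$ above the current root $R$ of some component. I would show that this triggers two things: the set of supporting roots of $R'$ must be built (at most $3^d$ cells by Corollary~\ref{corollary:perimitersize}), and the stored ancestors within $d^2$ levels must be rebranded by the afterInsertCompressed procedure. Each of these at most $3^d$ new supporting roots costs $\mathcal{O}(d^2(2d)^d)$ as in the previous paragraph, giving $\mathcal{O}(d^2(6d)^d)$ time per upgrowing. The main obstacle I anticipate is bookkeeping: I have to verify that no cascading reinterpretation of brands at ancestors of ancestors escapes the $d^2$ window guaranteed by afterInsertCompressed, because only then does the per-operation bound remain polynomial in $d$ rather than depending on the depth of the compressed link. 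That verification follows from the Branding Principle (Lemma~\ref{lemma:brandingprinciple}) together with Lemma~\ref{lemma:branding}, which together bound the number of consecutive ancestors whose brand can legitimately change as a response to a single structural update.
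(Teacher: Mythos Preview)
Your proposal is essentially correct and follows the same approach as the paper: the theorem is a summary statement, and the paper's ``proof'' is precisely the preceding discussion in Section~\ref{sec:compression} (Cases~1 and~2 of Section~\ref{sec:decomp}), combining Theorem~\ref{theorem:dynamic}, Corollary~\ref{corollary:perimitersize}, and the afterInsertCompressed/afterDeleteCompressed bounds. One small imprecision in your second bullet: the $\log(\alpha)$ factor does not arise because decompression is performed ``at $\log(\alpha)$ of these supporting roots''---there are only $3^d$ supporting roots---but rather because decompressing a link (or, symmetrically, upgrowing) may require up to $\log(\alpha)$ split/upgrow steps along the depth of the link, each of which costs $\mathcal{O}(d^2(6d)^d)$; this is exactly how the paper obtains the stated bound in the third subcase of Section~\ref{sec:decomp}.
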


\section{Alignment.} 
\label{sec:alignment}
In the previous section we 
were treating our compressed quadtree as if 
for  each compressed component $A_1$, the cell of its root $R$ was {\em aligned} with the cell of the leaf $C_a$ that stores its compressed link.
That is, the cell of  $R$ can be obtained by repeatedly subdividing the cell of $C_a$. 
However, finding an aligned cell for the root of a new uncompressed component when it is inserted by 
the procedure insertCompressed is not supported in constant time in Real RAM model of computation. 
The reason is that the length of the compressed link, and thus the number of divisions necessary to compute the aligned root cell, may be arbitrary.

Instead of  
computing the aligned root cell at the insertion of each new compressed component, 
we allow compressed nodes to be associated with any hypercube of an appropriate size 
that is contained in the cell of the compressed ancestor $C_a$. This is a standard way to avoid the above problem~\cite{BLMM11,H11,LM12}.

\begin{figure}[h]
	\centering
	\includegraphics[width=270px]{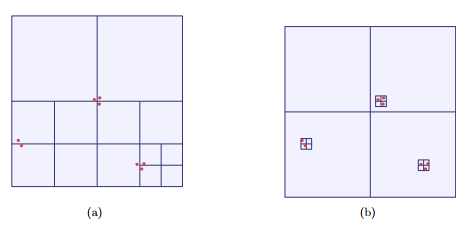}
\caption{An example of two compressed quadtrees on the same point set and slightly different root cells. The figure is borrowed from~\cite{LM12}. }
\label{fig:relocation}
\end{figure}

In our dynamic quadtree, after a number
of split operations on (the subtree of) $C_a$ and upgrowing operations on $A_1$, the size difference between the root of $A_1$ and its ancestor 
may become less than the compression constant $\alpha$, in which case we are supposed to \textbf{decompress} the compressed link, 
replacing it with the quadtree cells. 
This cannot be done if the root of $A_1$ and its ancestor 
are not aligned.  Thus we must maintain the \textbf{alignment property}, that is defined as follows. 

\begin{definition}
Let $A_1$ be  a compressed component of $T_1$ and $R$ be a hypercube containing the points stored in $A_1$. 
By \textbf{relocation} of $A_1$ to $R$, we mean building  the extended version of $A'^*$ (according to Definition~\ref{definition:T-star-final})
of the compressed quadtree  $A'_1$ such that $A'_1$  stores the same point set as $A_1$, and the root cell of  $A'_1$
is $R$.  
\end{definition}
  

\begin{definition}[Alignment property]
\label{def:align-invar}
Let $P$ be a point set in $\mathbb{R}^d$, let $T_1$ be an $\alpha$-compressed quadtree over $P$, and $T^*$ be the extended version of $T_1$ according to Definition~\ref{definition:T-star-final}. 
We define the \textbf{alignment property} for $T^*$ as follows: 
For any compressed link in $T^*$, if the length of the link is at most $4\log(\alpha)$, 
then the corresponding uncompressed component is aligned with the parent cell of the link.  
\end{definition}

In~the full version of \cite{lss13}, 
the alignment property 
is maintained in amortized additional $O(1)$ time (for $d$-dimensional quadtree with constant $d$), 
by  relocating the uncompressed component just before the decompression operation starts.  
Since the relocated quadtree may differ from the initial one a lot (see Figure~\ref{fig:relocation}),  
relocation might require time at least linear in the size of the component, this method clearly does not lead to a worst-case $O(1)$  bound.

We maintain the alignment property in worst-case $O(1)$ additional operations (for constant $d$) 
by  distributing the work to compute the aligned root cell
 and to relocate $A_1$, among merge and upgrowing operations that precede the corresponding decompression operation, 
and that reduce the size of the compressed link or increase the size of the uncompressed components.  
In particular, let $\ell$ be the length of the compressed link at the time it is created.  
We distribute the computation of the aligned root cell among first $\ell/4$ such operations, 
the work to relocate the uncompressed component among the second $\ell/4$ operations, 
and the necessary updates to the relocated component among the third $\ell/4$ operations. 
This way all the computations are finished by the time the length of the compressed link is at least $\ell/4$. 

The below Section~\ref{subsec:2d-align} describes our method in two dimensions and contains the main ideas; 
afterwards, in Section~\ref{subsec:dd-align} we generalize it to $d$ dimensions. 
 
\subsection{Alignment for 2-dimensional quadtrees.}
\label{subsec:2d-align}

Relocation of an uncompressed component can be done in time linear in the size of the component by a simple modification 
of the algorithm of~\cite[Theorem~3.7]{LM12}. 

\begin{theorem}
\label{thm:linear-align}
Let $\alpha$ be a sufficiently large constant and $P$ be a set of $m$ points in $\mathbb{R}^2$. 
Let $T_1$ be an  $\alpha$-compressed quadtree storing set $P$,
let $R$ be the root cell of $T_1$, and let $S$ be a square such  that $|S| \in \Theta(|R|)$ and $P \subset S$.
Then in $O(m)$ time we can relocate $T_1$ to cell $S$. 
\end{theorem}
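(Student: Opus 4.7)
The plan is to exploit the hypothesis $|S|=\Theta(|R|)$: the cells of the new quadtree rooted at $S$ differ from those of $T_1$ only by a shift of size $O(|R|)$, so any cell of one tree overlaps only $O(1)$ cells of the same size in the other. In particular, the leaves of $T_1$, visited in DFS order, can be reshuffled into the z-order induced by $S$ at only $O(1)$ amortized cost per leaf. Thus relocation amounts to rerouting the pointers of $T_1$ into the grid of $S$, and the approach is to adapt the linear-time construction of~\cite[Theorem~3.7]{LM12} to use $T_1$ as its input guide rather than a sorted list of points.

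First, I would traverse $T_1$ and the (under-construction) $T_1'$ in parallel, building $T_1'$ top-down starting from $S$. At each cell $C'$ of $T_1'$ I would maintain a constant-size list of cells of $T_1$ whose union covers exactly the points falling in $C'$. If the list contains a single point then $C'$ becomes a leaf; if all its points lie in a single proper descendant $C''$ of $C'$ with $|C''|\le |C'|/\alpha$ then I would install a compressed link from $C'$ to $C''$; otherwise I would split $C'$ and distribute the list among the $2^d$ children, refining each straddling $T_1$-cell to its own children in at most $O(1)$ levels before it fits inside a single child of $C'$.

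Second, once the basic compressed quadtree $T_1'$ rooted at $S$ has been built, I would apply the supporting-root and brand-installation procedure of Section~\ref{sec:compression} to obtain the extended version $A'^*$ of Definition~\ref{definition:T-star-final}. Since $|T_1'|=O(m)$ by Corollary~\ref{cor:treesize}, and the per-cell work of installing supporting roots and of assigning brands is $O((6d)^d)=O(1)$ for constant $d$, this postprocessing runs in $O(m)$ total time.

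The main obstacle will be the charging argument for the straddling step: I must ensure that each cell of $T_1$ is refined at most a constant number of times during the construction, and that the list of $T_1$-cells maintained at any $C'$ stays of size $O(1)$. Both bounds follow from $|S|\in\Theta(|R|)$, which limits to $O(1)$ the number of levels a $T_1$-cell must be refined before it is contained entirely inside one cell of $T_1'$, and correspondingly the number of $T_1$-cells covering $C'$. With this charging established, the total work is $O(|T_1|)=O(m)$, as claimed.
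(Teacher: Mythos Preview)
Your plan is essentially correct and rests on the same engine as the paper---the linear-time re-rooting algorithm of~\cite[Theorem~3.7]{LM12}---but you take a different route in both phases.

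For the first phase, the paper simply invokes~\cite[Theorem~3.7]{LM12} as a black box: that theorem already takes a compressed quadtree $T_1$ (not a sorted point list) and a shifted base square $S$ with $|S|\in\Theta(|R|)$, and outputs the minimal $2$-smooth compressed quadtree $\tilde{T}'$ rooted at $S$ in $O(m)$ time. So the parallel top-down traversal you sketch, together with the straddling/charging argument, is precisely the content of that theorem; you are re-deriving it rather than citing it. Your sketch is fine, but note that the delicate part (handling the creation of new compressed links when all points of $C'$ fall into a tiny subcell, and bounding the depth of the ``secondary stage'') is where~\cite{LM12} does real work; your one-line treatment of that case would need to point there anyway.

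For the second phase the two arguments diverge. You build the \emph{bare} compressed tree $T'_1$ first and then grow it up to the extended tree $A'^*$ using the per-cell smoothing and supporting-root machinery. The paper instead lets~\cite{LM12} produce the \emph{over}-refined $2$-smooth tree $\tilde{T}'$, observes that the desired $T'^*$ is a subtree of $\tilde{T}'$ (every cell needed for $2^j$-smoothness with $j\ge1$ is already present in a $2$-smooth tree), and then obtains $T'^*$ from $\tilde{T}'$ by $O(m)$ merges. The paper's route is shorter because the containment $T'^*\subset\tilde{T}'$ makes the post-processing a pure pruning step, with no need to reinvoke the branding or supporting-root logic. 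Your route works too, but the citation of Corollary~\ref{cor:treesize} for $|T'_1|=O(m)$ is misplaced: that corollary bounds $|T^*|$ in terms of $|T_1|$, whereas $|T'_1|=O(m)$ is just the standard size bound for a compressed quadtree on $m$ points.
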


\begin{proof}
Let $T'^*$ be the result of relocation of $T_1$ to cell $S$. 
Let  $\tilde{T}'$ be  the minimal 2-smooth version of $T'_1$. 
The algorithm of~\cite[Theorem~3.7]{LM12} in $O(m)$ time produces the tree  $\tilde{T}'$ out of $T_1$. 
Observe that the extended version
 $T'^*$ of $T'$ is a subset of $\tilde{T}'$. 
Since both $\tilde{T'}$ and $T'^*$ 
have $\mathcal{O}(m)$ cells, we can obtain $T'^*$ from $\tilde{T'}$ with $\mathcal{O}(m)$ merges. 
Therefore, we can accomplish our task by  running the algorithm of~\cite[Theorem~3.7]{LM12} for $T_1$ and $S$, 
and performing the necessary merges to go from the result of the algorithm, $\tilde{T}'$,  to 
the required tree $T'^*$.
\end{proof}

\begin{theorem}
\label{thm:const-alignment}
Let $P$ be a point set and $T_1$ be an $\alpha$-compressed quadtree over $P$ and 
$T^*$ be the extended variant of $T_1$ according to  Definition~\ref{definition:T-star-final}.  
The operations split and upgrowing on $T^*$ can be modified to 
maintain the alignment property
for $T^*$. The modified operations require same worst-case $O(1)$ time as the original ones.  
\end{theorem}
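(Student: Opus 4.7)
My plan is to realize the ``distribute the work'' idea sketched in the preamble by attaching to every compressed link a small state machine that advances by a constant amount on every split or upgrowing operation that touches it, and to argue separately that (i) at most a constant number of links can be charged per operation, and (ii) the state machine always finishes before the link length drops to $4\log\alpha$.

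First I would be explicit about when relocation work is needed. A compressed link $e$ of initial length $\ell_0$ at creation time is either already short ($\ell_0 \le 4\log\alpha$), in which case the aligned root can be computed directly in $O(\log\alpha)$ bits handled as $O(1)$ on the Real RAM, or long ($\ell_0 > 4\log\alpha$), and we only need alignment to be in place by the moment its length $\ell$ first drops to $4\log\alpha$. Since every upgrowing of the child component and every merge/split near the parent changes $\ell$ by at most one, the link will witness at least $\ell_0 - 4\log\alpha \ge (3/4)\ell_0$ such ``shrinking'' operations before alignment is required. I would divide this budget into three equal phases of size $\ell_0/4$ each.

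In Phase~1, I would incrementally compute the aligned root cell $S$ of the uncompressed component $A_1$: since $S$ is determined by $O(\ell_0)$ bits of coordinate information (a subcube of the parent cell at the right depth containing all points of $A_1$), processing a constant number of bits per operation finishes in $\ell_0/4$ steps. In Phase~2, I would run the linear-time relocation algorithm from Theorem~\ref{thm:linear-align} on a shadow copy of $A_1$, spreading its $O(m)$ work over $\ell_0/4$ operations; this is the step where I must be careful because a priori $m$ and $\ell_0$ are unrelated. The argument I would use here is that any insertCompressed that was responsible for creating the link already pays $O(d^2(6d)^d)$ for the new component, so we can attach extra chips to those past units of work---or equivalently, bound $m = O(\ell_0)$ by observing that between the creation of $e$ and its activation, the total number of cells inserted into $A_1$ by split and upgrowing is itself at most $\ell_0/4$ (because each such operation contributes $O(1)$ new cells and there have been at most $\ell_0/4$ of them in Phase~1); thus $m$ grows in lockstep with the budget. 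In Phase~3, I would replay onto the shadow copy the queue of split/merge/upgrowing operations that arrived during Phase~2 (buffered as deltas), at the rate of $O(1)$ operations per step; since Phase~2 lasted $\ell_0/4$ steps and each such step queued $O(1)$ deltas, Phase~3 empties the queue in time.

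The main obstacle, and the place where I would spend most of the technical work, is Phase~3: during Phases~1--2, $A_1$ is being mutated by the very split/upgrowing calls that fund our computation, so the shadow copy built in Phase~2 is stale when Phase~2 ends. I would therefore maintain, per unaligned compressed link, an explicit buffer of pending modifications, together with pointers matching nodes of the original $A_1$ to their images under the partial relocation; applying one buffered delta then amounts to a local split/merge on the shadow tree plus an update of the brand structure via Theorem~\ref{theorem:changes}, each in $O(d^2 (6d)^d) = O(1)$ for constant $d$. Once Phase~3 completes, I atomically swap the shadow component in place of $A_1$, which preserves Definition~\ref{definition:T-star-final} because both trees store the same point set and are extended according to Definition~\ref{def:tree}. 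Finally, I would observe that at any moment only $O(1)$ compressed links have their phase counter advanced by a single split or upgrowing---namely the one link adjacent to the affected cell and, through the supporting-root structure of Section~\ref{sec:supporting}, at most $O(3^d)$ of its neighbours---so the overall per-operation overhead stays worst-case $O(1)$ for constant $d$, matching the bounds of Theorem~\ref{theorem:changes}.
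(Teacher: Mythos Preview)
Your three-phase de-amortization is exactly the paper's approach, but your accounting has a genuine gap. You define the budget as the number of \emph{link-shrinking} operations (upgrowing of $A_1$ and splits near the parent $C_a$), and then claim that after $\ell_0/4$ such operations the component $A_1$ has size $m=O(\ell_0)$. This does not follow: a split of a leaf \emph{inside} $A_1$ adds $O(1)$ cells to the component but does not shrink the link, so it is not one of your chargeable events. Arbitrarily many such internal splits can occur while the link length stays fixed, so $m$ is unbounded in terms of $\ell_0$ under your event definition, and the $O(m)$ relocation of Phase~2 cannot be spread over $\ell_0/4$ shrinking operations. The paper avoids this by enlarging the event set to include internal leaf splits of $A_1$ (its event~(c)): then after $k$ events of any type the link has shrunk by at most $k$ \emph{and} the component has grown by at most $O(k)$, so both the deadline bound and the size bound hold simultaneously and the phases fit into $\ell/4$ events each.

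A second, smaller point: your Phase~3 analysis only accounts for deltas queued during Phase~2, but new deltas keep arriving while Phase~3 runs. The paper handles this with the standard geometric catch-up (process the backlog twice as fast as new changes arrive, so successive iterations halve in length and the total stays within $\ell/4$ events); you should make this explicit rather than stopping once the Phase~2 queue is empty.
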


\begin{proof}
Consider any operation of insertion of a compressed leaf to $T_1$; let $A_1$ be the inserted uncompressed component,  and $C_a$
be the ancestor that stores the compressed link. 
Let $\ell$ be the length of this compressed link, that is, $|C_a| = 2^\ell|A_1|$. 

If $\ell \leq 4\log(\alpha)$, we compute  the properly aligned root cell for $A_1$  in $O(\log(\alpha))$ time,  by subdividing $C_a$ $\ell$ times. 
In what follows, we assume that $\ell > 4\alpha$. 
We now show how to maintain the alignment property
 for $A_1$ in that case. 

The following three events are important for us: 
(a) splitting the parent cell of the compressed link; 
(b) upgrowing of $A_1$; 
(c) splitting a leaf of $A_1$. 

Events (a) and (b) are the only ways to decrease the length of the compressed link of $A_1$. 
Events (b) and (c) are  the only ways to increase the size of the component, and in each of such events a constant number of cells is added to $A_1$.  
Our procedure performs the following three phases. 

\begin{itemize}
\item[Phase 1.] 
Compute the coordinates of the correctly aligned base cell for the points stored in $A_1$: 
every time one of the events (a)--(c) occurs, we perform $O(1)$ steps of determining the coordinates of the new base cell for $A$.  

\item[Phase 2.] Relocate $A_1$ to the root cell computed in Phase~1.  
We  perform  $O(1)$ steps of the relocation procedure every time one of the events (a)--(c) occurs. 
We are working with a local copy of $A_1$, ignoring the changes that may occur to $A_1$ during this phase.

\item[Phase 3.] Examine the changes to $A_1$ during Phase~2, and adjust the relocated variant $A'^*$ accordingly. 
We continue doing so until $A'^*$ is up-to-date with the current point set stored in $A_1$. To end the process on time, 
we adjust $A'^*$ at least  twice as fast as new changes to $A_1$ appear.    
 \end{itemize}

The exact number of steps performed at each event (a)--(c) during the Phase 1
 can be chosen so as to guarantee that the aligned root cell is computed after at most $\ell/4$ events.
 After Phase 1 is finished, the length of the compressed link is at least $3\ell/4$, and the size of $A_1$ is at most  $c_1\ell/4$ for some constant 
$c_1$. 

For the Phase 2, we are choosing the number of steps of relocation performed at
 each event (a)--(c) so that Phase 2 is completed after $\ell/4$ events.

Phase 3 performs several iterations, at every iteration it creates a local copy of $A_1$ and makes $A'^*$ compatible with the changes that have occured to the point set stored in $A_1$.
At the beginning of Phase 3, the number of split/merge operations needed to update $A'^*$ is at most $c_1\ell/4$. 
At the first iteration of Phase 3, we 
we perform these operations during $\ell/8$ events (a)--(c). 
The iterations of Phase~3 are holding the invariant that every iteration is 
at least twice shorter than the previous one, thus the entire Phase~3 is completed after at most $\ell/4$ events. 

The new tree $A'^*$ aligned with its parent cell is ready after at most $3\ell/4$ events. 
 The length of the compressed link at that time is at least $\ell/4$, which we assumed to be less than $\log(\alpha)$. 
Therefore, we are done in time. Observe finally  that each split or upgrowing operation in $T^*$ can be enhanced with additional operations by the above scheme 
at most two times.
This  completes the proof. 

\end{proof}

\subsection{Alignment for $d$-dimensional quadtrees.}
\label{subsec:dd-align}

In this section, we generalize the above result to $\mathbb{R}^d$. 
We first generalize Theorem~\ref{thm:linear-align}.

\begin{theorem}
\label{thm:dd-linear-align}
Let $\alpha$ be a sufficiently large constant with $\alpha > 2^{2^d}$,  and $P$ be a set of $m$ points in $\mathbb{R}^d$. 
Suppose we are given an $\alpha$-compressed quadtree $T_1$ for $P$, 
let $R$ be the root cell of $T_1$, and let $S$ be a hypercube such  that $|S| \in (3^{-d}|R|,3^d|R|)$ and $P \subset S$.
Then in $O(32^d\cdot m)$ time we can relocate $T_1$ to cell $S$. 
\end{theorem}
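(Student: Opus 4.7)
The plan is to generalize the proof of Theorem~\ref{thm:linear-align} from $\mathbb{R}^2$ to $\mathbb{R}^d$, replacing each 2D-specific subroutine with its $d$-dimensional counterpart and absorbing the resulting dimensional blow-up into the $32^d$ factor. First I would lift the algorithm of [LM12, Theorem~3.7] to $\mathbb{R}^d$: given $T_1$ with root $R$ and a target hypercube $S$ with $|S| \in (3^{-d}|R|,\,3^d|R|)$ that contains $P$, produce in $O(32^d m)$ time a minimal $2$-smooth $\alpha$-compressed quadtree $\tilde{T}'$ storing $P$ whose root cell is $S$. The reason we need $|S|$ within a factor $3^d$ of $|R|$ (rather than a loose $\Theta(|R|)$) is that, at any scale, an $S$-aligned cell and an $R$-aligned cell of that scale can overlap only in the local $3\times 3\times\cdots\times 3$ grid neighborhood; hence each cell of $T_1$ has to be dispatched into at most $3^d$ cells of $\tilde{T}'$, and vice versa. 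The $32^d = 2^{5d}$ factor is where the dimension pays: each reconstructive step costs $2^d$ for the children produced, up to $(2d)^d$ for local smoothing per the algorithm of Section~\ref{sec:dynamic}, and a further $3^d$ for locating the misaligned piece against the old grid; a comfortable bound is $32^d$.

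Second, I would argue that the extended tree $A'^*$ of Definition~\ref{definition:T-star-final}, built on root cell $S$ over the same point set $P$, is a subtree of $\tilde{T}'$. Every true cell of $A'^*$ is forced to exist in any quadtree storing $P$ with root $S$, so it appears in $\tilde{T}'$. Every balancing cell of $A'^*$ has some brand $j\le d+1$ and must be present in any $2$-smooth (and hence $2^{d+1}$-smooth) quadtree over those true cells, so again it appears in $\tilde{T}'$. Since $\tilde{T}'$ and $A'^*$ each have $O(m)$ cells, one can obtain $A'^*$ from $\tilde{T}'$ by performing $O(m)$ merges. Each merge is coupled with brand bookkeeping costing $O((2d)^d)$ by Theorem~\ref{theorem:dynamic}, so the total cleanup cost is $O((2d)^d m) \subset O(32^d m)$.

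Combining the two phases, the total running time for relocating $T_1$ to cell $S$ is $O(32^d m)$, as claimed. The main obstacle is the first step: carefully verifying that the $d$-dimensional analogue of the LM12 reconstruction still runs in time linear in $m$ with only a $32^d$ overhead per point. The algorithm proceeds by walking through the existing cells of $T_1$ in a single traversal, and for every misaligned cell it replaces the local fragment with the correct $S$-aligned subdivision; the key invariant, which is what lets the work stay linear, is that a cell of $T_1$ only interacts with the at most $3^d$ cells of $\tilde{T}'$ that geometrically overlap it, and each such interaction costs at most the $(2d)^d$ bound of Section~\ref{sec:dynamic}. Once this generalized reconstruction is in place, the merging argument of Theorem~\ref{thm:linear-align} transfers verbatim.
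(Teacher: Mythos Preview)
Your high-level plan matches the paper exactly: generalize the reconstruction of \cite[Theorem~3.7]{LM12} to $\mathbb{R}^d$, obtain the minimal $2$-smooth tree $\tilde{T}'$ on root $S$, then merge down to the extended tree $T'^*$. The second phase (containment of $T'^*$ in $\tilde{T}'$ and the $O(m)$ merges) is handled essentially as in the paper.

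The gap is in the first phase. Your description of the LM12 algorithm (``walk through the cells of $T_1$ in a single traversal and replace each misaligned fragment by the $S$-aligned subdivision, each cell of $T_1$ touching at most $3^d$ cells of $\tilde{T}'$'') is not how that algorithm works, and the accounting you derive from it does not go through. The actual algorithm builds $\tilde{T}'$ top-down: it maintains \emph{frontier} $T'$-cells together with associated sets of overlapping $T$-cells, repeatedly splits active frontier cells, restores $2$-smoothness, runs a secondary stage to separate close clusters, and recurses. The paper obtains the $32^d$ bound not from a $3^d\cdot(2d)^d$ product but from tracking the dimension through the internals of that algorithm: each frontier cell carries $O(2^d)$ associated $T$-cells, each split costs $O(4^d)$, and the charging scheme of \cite[Lemma~3.8]{LM12} charges each active cell $O(8^d)$ times; $4^d\cdot 8^d = 32^d$. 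None of this is visible in your sketch.

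Two further concrete issues. First, you never use the hypothesis $\alpha > 2^{2^d}$; in the paper it is needed precisely in the generalization of \cite[Lemma~3.9]{LM12}, which bounds the height of the subtrees produced by the secondary stage by $O(2^d\log\alpha)$ and only yields a linear bound when $\alpha$ absorbs the $2^{2^d}$ blow-up. Second, your claim $O((2d)^d m)\subset O(32^d m)$ is false once $d>16$, so even the cleanup phase as you cost it would overshoot the stated bound; the paper avoids invoking the full $(2d)^d$ smoothing cost inside the $32^d$ budget.
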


\begin{proof}
Analogously to the Theorem~\ref{thm:linear-align}, we run the algorithm of~\cite[Theorem~3.7]{LM12} for $T_1$ and $S$, 
and by applying the necessary $O((2d)^dm)$ merges  to the result,
we obtain the required extended version $T'^*$ of the $\alpha$-compressed quadtree 
storing the same point set as  $T_1$ and having $S$ as the root cell. Therefore, to prove the theorem it is enough to generalize the algorithm 
of~\cite[Theorem~3.7]{LM12} to $d$ dimensions.  

We first briefly summarize the algorithm. We call cells of the initial tree $T_1$ and of the partial relocated 2-smooth quadtree respectively 
 $T$- and $T'$-squares. 
The algorithm starts with identifying the $T'$-squares that intersect the root of $T_1$. They are obtained by constant number of splitting or upgrowing
of the partially relocated tree, which at that initial moment consists of the only square $S$. 
The leaves on the current partial relocated tree are called \emph{frontier} $T'$-squares.
The algorithm maintains the invariant that each frontier $T'$-square $C$
has a constant number of smallest $T$-squares of size at least $|C|$ that intersect $C$. 
Such $T$-squares are stored in the associated set of $C$; a frontier $T'$-square is 
called \emph{active} if is has non-empty associated set. 
The \emph{main body} of the algorithm repeatedly splits active frontier $T'$-squares
 and computes the associated sets of their children. After each split, the 2-smoothness of the partial relocated tree is restored. 
The \emph{secondary stage} builds $O(\log{\alpha})$-height small compressed quadtrees to separate the points and the large enough uncompressed components that are still not separated after all active $T'$-squares are processed.   Finally the recursive calls are set to process the remaining uncompressed components. 

Although the algorithm is formulated for $\mathbb{R}^2$, it can be directly applied to $d$-dimensional quadtrees.
In what follows we state the dependencies on $d$. 
In the initialization step of the algorithm the number of $T'$-squares is $O(3^d)$. 
The invariant maintained in the main body of the algorithm is that each frontier square of the partially relocated tree has 
$O(2^d)$ associated $T$-squares. 
The time complexity of one split of a $T'$-square $C$ is $O(4^d)$, 
since each of the $O(2^d)$ squares of the associated set of $C$ is compared against each of the $2^d$ 
children of $C$ 
Similarly, each split operation during the \emph{secondary stage} takes $O(4^d)$ time.  
The step called \emph{setting up the recursive call} produces a set $X$ of at most $2^d$ $T'$-squares, 
computing a pair of base squares for the recursive call takes $O(\alpha)$ splits, that is $O(\alpha\cdot 2^d)$ time. 
Lemma 3.8 of~\cite{LM12} provides a charging scheme for splits in the main body to active $T'$-squares.
The number of times one square can be charged is $O(8^d)$. Indeed, same active square is charged (a) at most once when it is split; 
(b) for the balance splits, it is charged $O(2^d)$ times as an active square five levels above the neighbor $N'$ of the square being split, such that   
$N'$ triggers the split; and (c) same type of charging as in (b), but the descendant square being the $i$-th neighbor of $N'$ ($i$ is at most $4^d$) - $8^d$ times.
Lemma~3.9 of~\cite{LM12} transforms now into the statement that any subtree produced by the secondary stage has height $O(2^d\log{\alpha})$.  
Indeed, the size of the secondary associated list is at most $2^d$, any subset of it may appear at most twice at the same level of the subtree, and after a subset appears it stays for $O(\log \alpha)$
levels, and the associated sets of each descendant of $S'$ are subsets of the associated set of $S'$; thus the maximum depth of any leaf of the tree is $O(2^d\log{\alpha})$.  
If we choose $\alpha > 2^{2^d}$, the argument similar to the one of Lemma~3.9 implies the claim. 
\end{proof}

We now are ready to generalize the algorithm of Theorem~\ref{thm:const-alignment} to $d$ dimensions. 
We follow exactly the same scheme as in Theorem~\ref{thm:const-alignment}. 
In Phase~1, the aligned root cell is computed by performing $2^d\ell$ divisions during $\ell/4$ events, i.e., in $O(2^d)$ time per event. 
In Phase~2, the relocation of the uncompressed component is performed, due to Theorem~\ref{thm:dd-linear-align}, in $O(32^d\ell)$ time during another  $\ell/4$ events, 
 i.e., in $O(32^d)$ time per event. 
In Phase~3, the adjusting is performed, due to Theorem~\ref{theorem:changes},  in $O(d^2 \cdot (6d)^{d}\ell)$ time, during the third $\ell/4$ events, i.e., 
$O(d^2 \cdot (6d)^{d})$ per event.    
Note that applying Theorem~\ref{thm:dd-linear-align} imposes an additional constraint on $\alpha$ to be at least $2^{2^d}$. 
This results in the following.
\begin{theorem}
\label{thm:dd-const-alignment}
Let $P$ be a point set in $\mathbb{R}^d$ and $T_1$ be an $\alpha$-compressed quadtree over $P$ for a sufficiently big constant $\alpha$ with $\alpha > 2^{2^d}$ and 
$T^*$ be the extended variant of $T_1$ according to  Definition~\ref{definition:T-star-final}.  
The operations split and upgrowing on $T^*$ can be modified to 
maintain the alignment property 
for $T^*$. The modified operations require worst-case $O(32^d + d^2 \cdot (6d)^{d})$ time. 
\end{theorem}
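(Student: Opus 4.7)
The plan is to lift the three-phase scheme from the proof of Theorem~\ref{thm:const-alignment} to $\mathbb{R}^d$, replacing the two-dimensional relocation subroutine by Theorem~\ref{thm:dd-linear-align} and using Theorem~\ref{theorem:changes} as the per-event bound for updates inside an uncompressed component. First I would dispose of the easy case: whenever a compressed link of length $\ell \leq 4\log(\alpha)$ is created, the aligned root can be computed directly in $O(\log(\alpha))$ time by repeatedly subdividing the parent cell, after which no further alignment maintenance is needed for that link. So the interesting case is $\ell > 4\log(\alpha)$.

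As in the two-dimensional proof, the only events that change the relevant configuration are (a) splitting the parent cell of the compressed link, (b) upgrowing of the uncompressed component $A_1$, and (c) splitting a leaf of $A_1$. Events (a) and (b) shrink the link, events (b) and (c) enlarge $A_1$ by $O(1)$ cells, and decompression is only triggered once the link is shorter than $\log(\alpha)$. Hence at least $3\ell/4$ events of these types are guaranteed to occur before decompression would be required, giving us a budget. I would distribute the preparatory work across three consecutive windows of $\ell/4$ events each. In Phase~1, compute the coordinates of the properly aligned root cell by $O(2^d)$ subdivisions per event, totaling $O(2^d\ell)$ work. In Phase~2, run the $d$-dimensional relocation of Theorem~\ref{thm:dd-linear-align} on a local copy of $A_1$ at the rate of $O(32^d)$ work per event, which fits in $\ell/4$ events since the size of $A_1$ is at most $c_1\ell/4$ by the time Phase~2 starts. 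In Phase~3, reconcile the local relocated copy $A'^*$ with changes that occurred to $A_1$ during Phase~2; each pending change costs at most $O(d^2(6d)^d)$ by Theorem~\ref{theorem:changes}. The hypothesis $\alpha > 2^{2^d}$ is needed precisely to invoke Theorem~\ref{thm:dd-linear-align}.

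The main obstacle is showing that Phase~3 terminates in $\ell/4$ events, since $A_1$ may keep changing even while we are trying to catch up. I would reuse the geometric-doubling argument from Theorem~\ref{thm:const-alignment}: if the backlog at the start of Phase~3 has size $s$, we schedule its processing to consume $s/2$ events at $O(d^2(6d)^d)$ work per event, during which at most $s/2$ new changes can accumulate; iterating, each round is at least half the previous one, and the total number of events used by Phase~3 is bounded by the geometric series $s(1/2 + 1/4 + \cdots) \leq s \leq c_1\ell/4$. Summing the three phases, the additional work per event is $O(2^d) + O(32^d) + O(d^2(6d)^d) = O(32^d + d^2(6d)^d)$. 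Finally, I would observe, exactly as in Theorem~\ref{thm:const-alignment}, that a given split or upgrowing operation can be charged as an event for at most two distinct ongoing alignment procedures (one on the parent side, one on its own component), so the worst-case per-operation bound is preserved up to a constant factor, yielding the claimed $O(32^d + d^2(6d)^d)$ time.
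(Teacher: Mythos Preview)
Your proposal is correct and follows essentially the same approach as the paper: it lifts the three-phase scheme of Theorem~\ref{thm:const-alignment} to $\mathbb{R}^d$, plugging in Theorem~\ref{thm:dd-linear-align} for Phase~2 and Theorem~\ref{theorem:changes} for Phase~3, and deriving the per-event cost $O(2^d)+O(32^d)+O(d^2(6d)^d)$ exactly as the paper does. Your write-up is in fact more explicit than the paper's (you spell out the easy case $\ell\le 4\log\alpha$, the geometric-series termination of Phase~3, and the at-most-two-charges observation), but these are all inherited from the proof of Theorem~\ref{thm:const-alignment} and match it.
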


\section{Applications.}
\label{sec:applications}

Many publications in computational geometry use a concept which we shall dub \textbf{principal neighbor access}: The idea that for any cell $C$ we can find its relevant neighbors in constant time:
\begin{definition}
Given a quadtree $T$ over $\mathbb{R}^d$, we say that we have \textbf{principal neighbor access} if for any cell $C$ in $T$ we can find the smallest cells $C'$ in $T$ with $|C'| \ge |C|$ and $C'$ neighboring $C$ in constant time if $d$ is constant.
\end{definition}
Bennet \and Yap in \cite{BY17} implement principal neighbor access by storing explicit \textbf{principle neighbor pointers} to the larger neighbors $C'$. Khramtcova and L\"offler in  \cite{khramtcova2017dynamic} achieve principal neighbor access with the well known \textbf{level pointers} on a smooth quadtree. These two unique ways to guarantee principle neighbor access were also noted by Unnikrishnan \etal in \cite{unnikrishnan1988threaded} where a \emph{threaded} quadtree in \cite{unnikrishnan1988threaded} maintained the equivalent to \textbf{principle neighbor pointers} as opposed to a \emph{roped} quadtree in \cite{samet1982neighbor} which only maintained level pointers.

Principle neighborhood access allows us to traverse the neighborhood of any cell $C$ in constant time. Bennet and Yap observe that any (non-compressed) quadtree must be smooth to dynamically maintain level pointers by using a sequence of cells that they insert.  In Figure \ref{fig:levelpointer} we show their example and our own example where we show that if quadtrees are compressed, you even need $\Theta(n)$ time for a single split operation to update the level pointers. For this reason Bennet and Yap develop their amortized-constant dynamic smooth quadtrees in $\mathbb{R}^d$ in \cite{BY17}. We note that most applications that use principal neighbor access (dynamic variant of collision detection \cite{mezger2003hierarchical}, ray tracing \cite{aronov2006cost, macdonald1990heuristics} and planar point location \cite{khramtcova2017dynamic, lss13}) often run many operations parallelized on the GPU. In such an environment amortized analysis can become troublesome since there is a high probability that at least one GPU-thread obtains the worst-case $\mathcal{O}(n)$ running time. In that scenario the other threads have to wait for the slow thread to finish so the computations effectively run in $\mathcal{O}(n)$ time which makes our worst-case constant time algorithm a vast improvement.

\begin{figure}[H]
	\centering
	\includegraphics[width=300px]{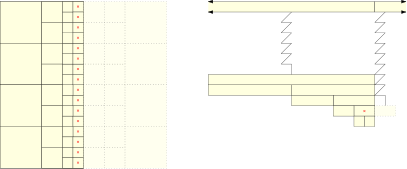}
	\caption{(left) A quadtree in $\mathbb{R}^2$ in white opaque cells with depth $k$. If we want to add the dotted cells with regular splits then with every split all the cells that contain a cell with a red cross might need a new level pointer. This means that we have to adjust $\sum_i^{k-1} 2^i 2^{k-1-i} = \mathcal{O}(n\log n )$ pointers even though we perform $\mathcal{O}(n)$ split operations. (right) A compressed quadtree in $\mathbb{R}^1$. Here we it is evident that even a single insertion can demand $\mathcal{O}(\log n )$ time to update the level pointers. If we want to add the dotted cell, we need to find the cell marked with a red cross. This can best case be done with binary search which takes $\Theta(\log n)$ time. }
	\label{fig:levelpointer}
\end{figure}

Many publications either directly or indirectly make use of level pointers. In this subsection we mention three distinguished areas in computational geometry and we discuss recent publications in those fields that directly benefit from our constant-time dynamic compressed quadtrees.

\subsubsection*{Ray shooting and planar point location.}

Two related well studied topics in computational geometry are ray shooting and planar point location. The ray shooting problem arises in many different contexts and is a bottleneck of ray tracing in computer graphics. In the ray shooting problem we are given a set $\Gamma$ of objects in a hyperplane and a halfline in the hyperplane (often $\mathbb{R}^2$ or $\mathbb{R}^3$) and we ask which of the objects (if any) the ray hits first. In the planar point location, we store a collection $\Gamma$ of 
objects in a hyperplane, and ask for an arbitrary point in the plane: which regions intersect this point. We can see ray tracing as tracing a line in the plane and stabbing queries as tracing a line perpendicular to the plane. In both problems we can store the objects in a multi-dimensional compressed quadtree \cite{aronov2006cost, macdonald1990heuristics, lss13, khramtcova2017dynamic}. If we want the problem to work on dynamic input we need to have dynamic compressed quadtrees. Moreover, each of the objects can transform independently which makes the problem highly suited for parallelization which amplifies the need for worst-case algorithms over amortized algorithms. \\

In \cite{macdonald1990heuristics} Macdonald \etal develop heuristics for ray tracing objects in a dynamic setting and they mention dynamic quadtrees in $\mathbb{R}^3$. In \cite{aronov2006cost} Aronov \etal analyze the running time of ray tracing and  use level pointers in three dimensional quadtrees. Their quadtree is not dynamic but they need neighbor pointers to access the neighborhood of a cell. If anyone would want to extend their research to work in a dynamic setting they would need our quadtrees. In \cite{lss13} L\"{o}ffler \etal and in \cite{khramtcova2017dynamic} Khramtcova and L\"{o}ffler want to dynamically maintain a moving set of points or regions in $\mathbb{R}^{1,2}$ in Real RAM subject to \textbf{stabbing queries}. To perform the sub-logarithmic updates and the $\mathcal{O}(\log(n) + k)$ stabbing queries they need to access the neighborhood of each cell in $\mathcal{O}(1)$ time. Moreover to maintain the dynamic compressed quadtree in Real RAM they also need our alignment. In the full version of \cite{lss13} the authors hint at the use of level pointers and note that we might need smooth quadtrees to obtain them. In \cite{khramtcova2017dynamic} Khramtcova and L\"{o}ffler conjecture that these pointers might also be obtainable without smooth quadtrees but the lower bound proofs in Figure \ref{fig:levelpointer} disprove that conjecture.

\subsubsection*{Voronoi diagrams and quadtrees.}

It is well known that the Delaunay triangulation of a point set is the dual graph of the Voronoi diagram of that point set \cite{de2008computational}. In \cite{LM12} L\"{o}ffler \etal show that given a compressed quadtree of a point set of $n$ points, we can construct the Delaunay triangulation of that point set in $\mathcal{O}(n)$ time instead of $\mathcal{O}(n \log n)$ time and vice versa. The idea is that the quadtree decomposition provides information about the Delaunay triangulation.  In  \cite{devillers1992fully} Devillers \etal dynamically maintain the Delaunay triangulation on a set of points in $\mathcal{O}(\log n )$ expected time per insertion and deletion. They use a dynamic quadtree of the point set and use the work of Bennet \etal in \cite{BY17} to use level pointers. With our worst-case constant smooth quadtrees their running time is still expected $\mathcal{O}(\log n )$ per operation but the uncertainty in running time now only comes from how the Delaunay triangulation is shaped and not from the dynamic quadtree. Moreover we extend their results from working on dynamic bounded-spread point sets to arbitrary point sets and on a Real RAM pointer machine.

Bennet \etal in \cite{bennett2016planar} work on a problem where given a set of scalar functions from $\mathbb{R}^2$ to $\mathbb{R}$, they are interested in the minimization diagram of the set of functions. They use (compressed) quadtrees and level pointers to construct a Voronoi diagram on the domain which they use to compute the minimization diagram. They used their work in \cite{BY17} to perform the construction in amortized constant time. With our results this becomes worst-case constant.

\subsection*{Other data structures that use quadtrees.}

As a last application we mention other data structures that dynamic compressed quadtrees as a basis.

In \cite{de2012kinetic} de Berg \etal want to maintain a kinetic compressed quadtree in what they call the Black-Box model and they mention applications in collision detection. Their work also requires them to access the neighborhood of a cell and they indirectly make use of level pointers. Therefore their work requires our work for the implementation of level pointers, moreover if they want their data structure to work in real RAM they also need our alignment.

Lastly Mount \etal in \cite{park2012self} want to maintain a self-adjusting data structure for multidimensional point sets. They describe a \emph{self-adjusting} data structure as "a data structure that dynamically reorganizes itself to fit the pattern of data accesses." and mention \textbf{splay trees} as a common implementation. They use a compressed balanced quadtree which they maintain in amortized constant time per operation to update and re-balance their splay tree in amortized $\mathcal{O}(\log n )$ time. We believe that with our dynamic quadtree their work could be improved to worst-case $\mathcal{O}(\log n )$ running time. Moreover just like with \cite{de2012kinetic} we note that if they want their data structure to work in the Real RAM model then they need our algorithm for alignment.


\small
\section*{Acknowledgements.}
The authors would like to thank Joe Simons and Darren Strash for their inspiring initial discussion of the problem.
I.H. and M.L. were partially supported by the Netherlands Organisation for Scientific Research (NWO) through project no 614.001.504.
E.K. was partially supported by the SNF Early Postdoc Mobility grant P2TIP2-168563, Switzerland, and F.R.S.-FNRS, Belgium.

\bibliography{dbq}

\begin{thebibliography}{10}

\bibitem{aronov2006cost}
Boris Aronov, Herv{\'e} Br{\"o}nnimann, Allen~Y Chang, and Yi-Jen Chiang.
\newblock Cost prediction for ray shooting in octrees.
\newblock {\em Computational Geometry}, 34(3):159--181, 2006.

\bibitem{bennett2016planar}
Huck Bennett, Evanthia Papadopoulou, and Chee Yap.
\newblock Planar minimization diagrams via subdivision with applications to
  anisotropic voronoi diagrams.
\newblock 35(5):229--247, 2016.

\bibitem{BY17}
Huck Bennett and Chee Yap.
\newblock Amortized analysis of smooth quadtrees in all dimensions.
\newblock {\em Computational Geometry}, 63:20--39, 2017.

\bibitem{BernEpGi94}
Marshall Bern, David Eppstein, and John Gilbert.
\newblock Provably good mesh generation.
\newblock {\em Journal on Computational System Sciences}, 48(3):384--409, 1994.

\bibitem{BernEpTe99}
Marshall Bern, David Eppstein, and Shang-Hua Teng.
\newblock Parallel construction of quadtrees and quality triangulations.
\newblock {\em International Journal on Computational Geometry and
  Applications}, 9(6):517--532, 1999.

\bibitem{BLMM11}
Kevin Buchin, Maarten L{\"o}ffler, Pat Morin, and Wolfgang Mulzer.
\newblock Preprocessing imprecise points for delaunay triangulation: Simplified
  and extended.
\newblock {\em Algorithmica}, 61(3):674--693, 2011.

\bibitem{BuchinMu11}
Kevin Buchin and Wolfgang Mulzer.
\newblock {D}elaunay triangulations in ${O}(\textup{\texttt{sort}}(n))$ time
  and more.
\newblock {\em Journal of the ACM}, 58(2):Art.~6, 2011.

\bibitem{CallahanKo95}
Paul~B. Callahan and S.~Rao Kosaraju.
\newblock A decomposition of multidimensional point sets with applications to
  {$k$}-nearest-neighbors and {$n$}-body potential fields.
\newblock {\em Journal of the ACM}, 42(1):67--90, 1995.

\bibitem{de2008computational}
Mark De~Berg, Otfried Cheong, Marc Van~Kreveld, and Mark Overmars.
\newblock {\em Computational Geometry: Introduction}.
\newblock Springer, 2008.

\bibitem{de2012kinetic}
Mark De~Berg, Marcel Roeloffzen, and Bettina Speckmann.
\newblock Kinetic compressed quadtrees in the black-box model with applications
  to collision detection for low-density scenes.
\newblock {\em Algorithms--ESA 2012}, pages 383--394, 2012.

\bibitem{devillers1992fully}
Olivier Devillers, Stefan Meiser, and Monique Teillaud.
\newblock Fully dynamic delaunay triangulation in logarithmic expected time per
  operation.
\newblock {\em Computational Geometry}, 2(2):55--80, 1992.

\bibitem{EppsteinGoSu08}
David Eppstein, Michael Goodrich, and Jonathan Sun.
\newblock The skip quadtree: a simple dynamic data structure for
  multidimensional data.
\newblock {\em International Journal on Computational Geometry and
  Applications}, 18(1--2):131--160, 2008.

\bibitem{FinkelBe74}
Raphael~A. Finkel and Jon~Louis Bentley.
\newblock Quad trees: A data structure for retrieval on composite keys.
\newblock {\em Acta Inform.}, 4:1--9, 1974.

\bibitem{H11}
Sariel Har-Peled.
\newblock {\em Geometric approximation algorithms}, volume 173.
\newblock American mathematical society Boston, 2011.

\bibitem{khramtcova2017dynamic}
Elena Khramtcova and Maarten L{\"o}ffler.
\newblock Dynamic stabbing queries with sub-logarithmic local updates for
  overlapping intervals: Proc. 12th international computer science symposium in
  russia.
\newblock {\em Computer Science--Theory and Applications}, 10304, 2017.

\bibitem{KrznaricLe98}
Drago Krznaric and Christos Levcopoulos.
\newblock Computing a threaded quadtree from the {D}elaunay triangulation in
  linear time.
\newblock {\em Nordic Journal on Computating}, 5(1):1--18, 1998.

\bibitem{LM12}
Maarten L{\"o}ffler and Wolfgang Mulzer.
\newblock Triangulating the square and squaring the triangle: quadtrees and
  delaunay triangulations are equivalent.
\newblock {\em SIAM Journal on Computing}, 41(4):941--974, 2012.

\bibitem{lss13}
Maarten L{\"o}ffler, Joseph~A. Simons, and Darren Strash.
\newblock Dynamic planar point location with sub-logarithmic local updates.
\newblock In {\em 13th Int. Symp. Algorithms and Data Structures (WADS)}, pages
  499--511. Springer Berlin Heidelberg, 2013.
\newblock full version: arXiv preprint arXiv:1204.4714.

\bibitem{macdonald1990heuristics}
J~David MacDonald and Kellogg~S Booth.
\newblock Heuristics for ray tracing using space subdivision.
\newblock {\em The Visual Computer}, 6(3):153--166, 1990.

\bibitem{mezger2003hierarchical}
Johannes Mezger, Stefan Kimmerle, and Olaf Etzmu{\ss}.
\newblock Hierarchical techniques in collision detection for cloth animation.
\newblock 2003.

\bibitem{park2012self}
Eunhui Park and David Mount.
\newblock A self-adjusting data structure for multidimensional point sets.
\newblock {\em Algorithms--ESA 2012}, pages 778--789, 2012.

\bibitem{samet1982neighbor}
Hanan Samet.
\newblock Neighbor finding techniques for images represented by quadtrees.
\newblock {\em Computer graphics and image processing}, 18(1):37--57, 1982.

\bibitem{Samet90}
Hanan Samet.
\newblock {\em The design and analysis of spatial data structures}.
\newblock Addison-Wesley, Boston, MA, USA, 1990.

\bibitem{unnikrishnan1988threaded}
A~Unnikrishnan, Priti Shankar, and YV~Venkatesh.
\newblock Threaded linear hierarchical quadtrees for computation of geometric
  properties of binary images.
\newblock {\em IEEE transactions on software engineering}, 14(5):659--665,
  1988.

\bibitem{Vaidya88}
Pravin~M. Vaidya.
\newblock Minimum spanning trees in {$k$}-dimensional space.
\newblock {\em SIAM Journal on Computing}, 17(3):572--582, 1988.

\end{thebibliography}

\appendix

\section{Our work and the work of L\"{o}ffler, Simons, Bennet and Yap.}
\label{sec:bennet}

Since we only focus on operation on the quadtrees mentioned in Section \ref{sec:preliminaries} and forget the time it takes to locate points, we view a quadtree as a self-contained concept: An uncompressed quadtree is simply a tree of $n$ nodes subject to the \textbf{split} and \textbf{merge} operation and we want that quadtree to be smooth. \\
Simons and L\"{o}ffler proposed to make a distinction between the original cells of the quadtree $T$ (called the \textbf{true} cells), and the quadtree cells that are added on top of $T$ to maintain balance (\textbf{balancing cells}). \emph{true} cells must be $2$-smooth (or smooth) whereas the cells used for balance can be $4$-smooth. The result would be a smooth quadtree where cells are $4$-smooth. The split and merge operation can only be used on \emph{true} cells and they tried to dynamically split and merge balancing cells such that the resulting tree would still be smooth. Their approach did not work in $\mathbb{R}^d$ with $d \ge 2$ however, for a similar reason why the approach of Bennet and Yap has to be amortized.\\
Bennet and Yap used a slightly different approach from Simons and L\"{o}ffler. In their case, all the cells in $T$ are true cells before the split operation occurs. When they split a cell, they apply an extra operation which they call \textbf{ssplit} (smooth split) which adds extra cells that makes all cells in the quadtree $2$-smooth. The difference with the approach of Simons and L\"{o}ffler is that after their operation, they can invoke the split operation on \emph{any} cell in their quadtree instead of only on the \emph{true} cells. They show that if you want all cells of the quadtree to be $m$-smooth with the same constant $m$, that their amortized bounds are tight.

\subsubsection*{Intuitively why the approaches of L\"{o}ffler and Bennet must be amortized.}

Allow us to intuitively explain why both approaches can only work with amortized bounds in $\mathbb{R}^2$ and higher: Amortized bounds are analyzed over a sequence of possible splits. So intuitively we can imagine that for Bennet and Yap, the cells that actually need to be split create \emph{true} cells and that all other cells are just added for balance and are thus \textbf{balance} cells. In Figure \ref{fig:4balanceex}, let the white cells be the true cells and the green cells be the balancing cells. We made them 4-smooth as an example for the work of Simons and L\"{o}ffler but a similar construction is possible for the work of Bennet and Yap.

\begin{figure}[H]
	\centering
	\includegraphics[width=300px]{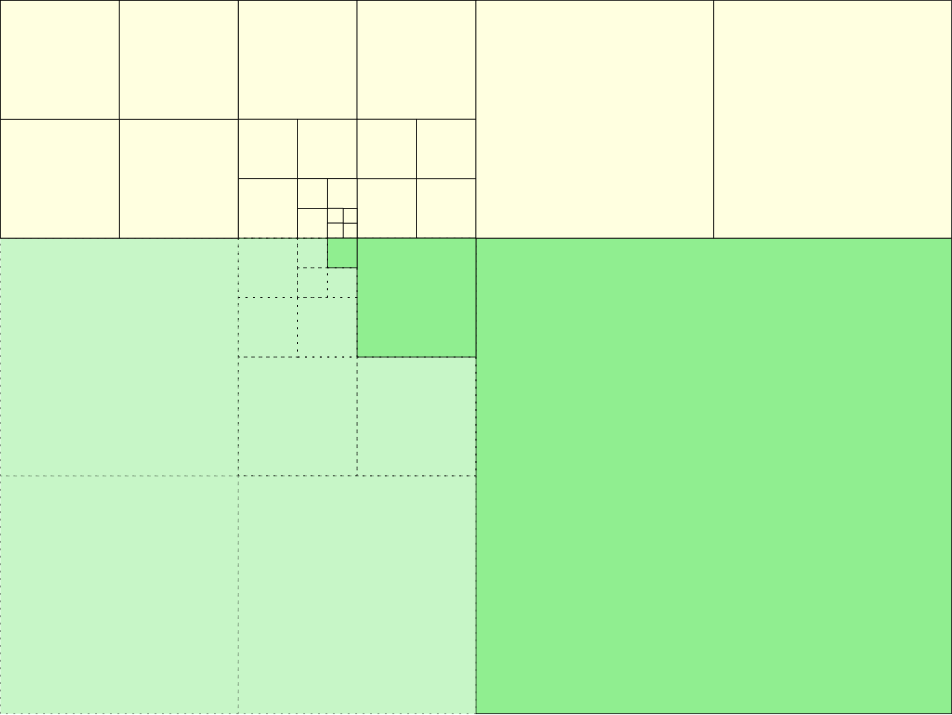}
	\caption{A set of true cells that store a point set (white) and its balancing cells (green).}
	\label{fig:4balanceex}
\end{figure}

In Figure \ref{fig:4balanceex} shows part of a quadtree where the true cells created 3 consecutively neighboring green cells which we left opaque. Each of consecutive cells is exactly its balancing constant larger than the previous cell. Splitting the smallest cell, now creates a \textbf{cascading} effect where we have to consecutively split all the cells in the chain to maintain smoothness. Bennet and Yap call such a sequence of neighboring cells a \textbf{forcing chain}. You can repeat this pattern to create arbitrary long rectilinear paths throughout your bounding box. Setting up a chain of length $k$ intuitively takes $\mathcal{O}(k)$ time which allowed Bennet and Yap to create insertions for quadtrees in $\mathbb{R}^d$ which amortized take  $\mathcal{O}(2^d(d+1)!)$ time. 

\section{A naive approach on the dynamic algorithm to maintain smooth quadtrees.}
\label{app:naive}
We know that we should be able to support split and merge operations in the true set $T_1$ in $\mathcal{O}(d^d)$ split and merge operations, but we do not have an algorithm for it yet. Our first attempt is to simply add an extra step to the traditional split operator: After we have split a cell with brand $j$, we invoke a new method called $AfterSplit(C)$ on all the new children. This method checks the size of all the $d$ neighbors of $C$, and splits every too large cell into neighbors with a brand $j+1$.

\begin{algorithm}[h]
\caption{The procedure for after splitting a cell.}
\begin{algorithmic}[1]
\Procedure{AfterSplit(Cell C, Balance j)}{}
\For{Cell N $\in$ Neighbors}
\If{$\frac{|N|}{|C|} > 2^j$}
\State $V \gets Split(N, j+1)$
\For{Cell C' $\in V$}
\State $AfterSplit(C', V)$
\EndFor
\EndIf
\EndFor
\EndProcedure
\end{algorithmic}
\end{algorithm}

\begin{figure}[H]
	\centering
	\includegraphics[width=300px]{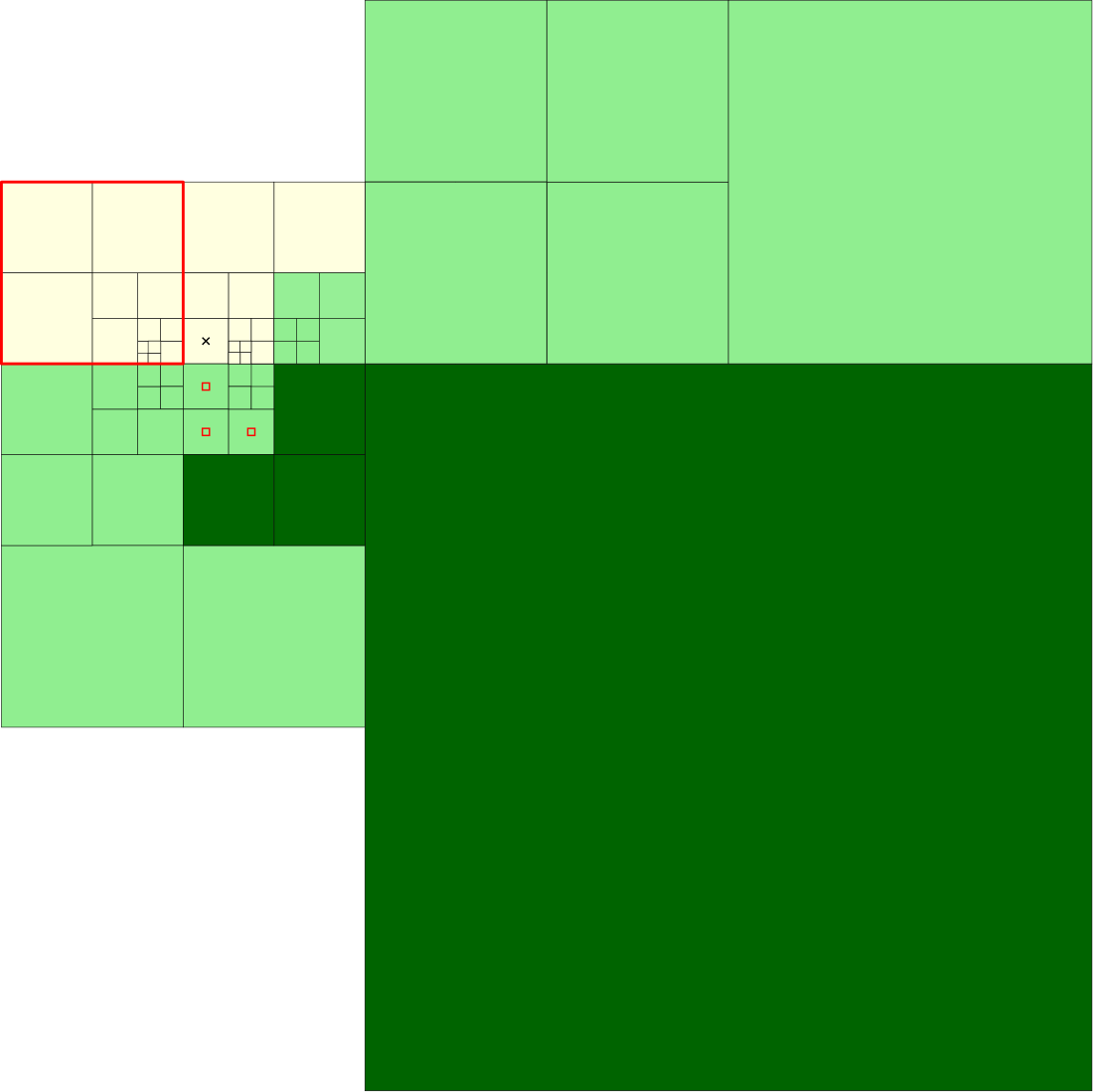}
	\caption{A quadtree in $\mathbb{R}^2$ with $T_1$ in white, $T_2$ in light green and $T_3$ in dark green}
	\label{fig:gammaorder}
\end{figure}

\begin{lemma}
The tree computed by Algorithm 1 does not necessarily satisfy Definition \ref{def:tree}.
\end{lemma}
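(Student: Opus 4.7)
The plan is to exhibit a concrete counterexample in $\mathbb{R}^2$ (of the flavor of Figure~\ref{fig:gammaorder}) and then appeal to the uniqueness of $T^*$ (Lemma~\ref{lemma:unique}) to conclude that Algorithm~1 cannot always produce the tree prescribed by Definition~\ref{def:tree}. The key structural defect I would highlight is that Algorithm~1 assigns a brand purely based on the split that triggered a cell's creation: every cell produced while balancing a brand-$j$ cell is unconditionally labeled $j+1$, and no cell is ever relabeled. Definition~\ref{def:tree}, by contrast, assigns a brand by the smallest $j$ such that the cell already belongs to $T_j$ — an intrinsic quantity, independent of construction order. These two notions can disagree precisely because a cell born during a balancing cascade for a brand-$j$ cell may coincide geometrically with a cell that the static construction would have placed into $T_i$ for some $i<j+1$.

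Concretely, I would first describe a small $T_1$ over $\mathbb{R}^2$ together with a pair of split operations $s_1, s_2$ chosen so that their cascades overlap at a single cell $C$ on the shared boundary of their regions of influence. Next, I would trace Algorithm~1 on the sequence $s_1, s_2$, verifying that $C$ is created with brand~$3$ during the handling of $s_2$ (because the cascade in which $C$ appears originates from a cell just produced with brand~$2$). Then I would directly compute the static $T^*$ prescribed by Definition~\ref{def:tree} for the final $T_1$ obtained after both splits: a quick check shows that $C$ already lies in $T_2$ because it is one of the minimal set of cells needed to make a specific true cell of $T_1$ be $2$-smooth, so its brand in $T^*$ is~$2$. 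This brand mismatch between the algorithmic output and $T^*$ is the desired contradiction — and since by Lemma~\ref{lemma:unique} the tree $T^*$ is uniquely determined by $T_1$, the output of Algorithm~1 cannot equal $T^*$.

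An equivalent way to package the same counterexample is to observe that Algorithm~1 is order-sensitive: running $s_1$ then $s_2$ yields one brand for $C$, while running $s_2$ then $s_1$ yields another. By Lemma~\ref{lemma:unique}, only one of these two trees can satisfy Definition~\ref{def:tree}, so the other is automatically a counterexample — this phrasing avoids recomputing $T^*$ from scratch.

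The main obstacle is keeping the counterexample geometrically crisp: the phenomenon cannot arise in $\mathbb{R}^1$, where Theorem~\ref{theorem:R1} shows only two brand layers are ever used, so the example must genuinely exploit the two-dimensional branching and involve at least three brand levels; at the same time it must remain small enough that the brand discrepancy can be verified by inspection of a few cells. Figure~\ref{fig:gammaorder}, which displays a tree with cells of brand~$1$, $2$, and~$3$ in the same neighborhood, seems designed for exactly this role, and the concrete verification reduces to a routine, bounded case analysis once the configuration is fixed.
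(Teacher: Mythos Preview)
Your proposal is correct and follows essentially the same approach as the paper: exhibit a concrete configuration (the paper uses Figure~\ref{fig:gammaorder}) in which Algorithm~1 assigns brand~$3$ to a cell that, under Definition~\ref{def:tree}, must carry brand~$2$ because it is needed to make a true cell $2$-smooth; the brand mismatch then witnesses the failure. Your order-sensitivity reformulation is a clean extra packaging the paper does not make explicit, but the underlying mechanism---Algorithm~1 never relabels, so the brand is fixed by whichever cascade happens to create the cell first---is exactly what the paper's example exploits.
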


\begin{proof}
We show this with an example shown in Figure \ref{fig:gammaorder}.
In the figure the cells in $T_1$ are white, in $T_2$ are light green and cells in $T_3$ are dark green. Assume that we start with a single large true cell, the bounding box, which is significantly larger than the figure. And that we create the true cells in the red square first using our split operation and after each split we invoke the AfterSplit procedure. \\

Splitting the cells in the red box will then create the light green cells of brand 2 below, and eventually five dark green cells in $T_3$. But the cells starting from the red cells are not created yet. \\

Now assume that we create the true cells adjacent to the red square with our split operation and after each split we invoke the AfterSplit procedure. Eventually we will have to create the red cells with brand 2, on top of the cells with brand 3. Given this static snapshot of $T_1$, we can clearly see that our $T^*$ does not adhere to Definition \ref{def:tree}: $T_2$ should be the minimal set of cells that makes all the cells in $T_1$ 2-smooth. However, the cell with the cross is clearly not 2-smooth without additional cells. But the adjacent cells are cells in $T_3$ and not in $T_2$.

\end{proof}

If our algorithm creates a tree $T^*$ which does not implement Definition \ref{def:tree}, Lemma  \ref{lemma:unique}, \ref{lemma:branding} and Lemma \ref{lemma:brandingprinciple} do not have to hold. That implies that we have no guarantee that our resulting quadtree has smooth cells in the last layer of balancing cells and that our algorithm might take more than $\mathcal{O}(d^d)$ split and merge operations.

\section{An example of why with compressed extended quadtrees, we need a new definition.}
\label{appendix:compression}

In a dynamic non-smooth compressed quadtree $T_1$ we can clearly insert and remove a compressed root in  constant time even though the virtual tree of $T_1$ and the virtual tree of the new quadtree $T_1'$ might differ in more than $\mathcal{O}(n)$ cells. However if we define the extended quadtrees $T^*$ and $T'^{*}$ of $T_1$ and $T_1'$ according to Definition \ref{def:tree} the two extended trees might differ in more than a constant number of cells even in their actual stored variants. We show this with the following example:

\begin{figure}[H]
	\centering
	\includegraphics[width=400px]{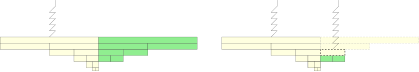}
	\caption{On the left we see a quadtree of true cells $T_1$ in $\mathbb{R}^1$ in white and $T^*$ with cells of brand 2 in green. On the right we have the tree $T_1'$ which added a compressed root $R$ next to the quadtree, the cell with the dashed lines. $T_1$ and $T_1'$ clearly differ in only one cell in storage. The cells greater than the new compressed root $C$ in $T_1'$ are already smooth by the cells in the compressed link. So possibly $\Theta(n)$ light green cells in $T^*$ should be removed and we see that $T^*$ and $T'^{*}$ hence differ in more than a constant amount of cells, even in storage. }
	\label{fig:compressionexample}
\end{figure}

\end{document}